\documentclass[sts,preprint]{imsart}
\usepackage{geometry} 
\usepackage{graphicx}	
\usepackage{amssymb, amsmath, amsthm}
\usepackage{subfigure}
\usepackage{color}
\usepackage[table]{xcolor}
\usepackage{natbib}
\usepackage[colorlinks,citecolor=blue,urlcolor=blue]{hyperref}
\usepackage{comment}
\usepackage{todonotes}

\graphicspath{{Figures/}}

\newcommand{\cOmega}{\, \lrcorner \, \Omega}
\newcommand{\LS}{\ensuremath { H^{-1} \! \left( E \right) } }
\newcommand{\sLS}{\ensuremath { H^{-1} \left( E \right) } }
\newcommand{\MLS}{\ensuremath { \widetilde{H}^{-1} \! \left( E \right) } }
\newcommand{\sMLS}{\ensuremath { \widetilde{H}^{-1} \left( E \right) } }

\setattribute{journal}{name}{}

\theoremstyle{plain}
\newtheorem{theorem}{Theorem}
\newtheorem{lemma}[theorem]{Lemma}
\newtheorem{corollary}[theorem]{Corollary}

\theoremstyle{definition}

\theoremstyle{remark}

\theoremstyle{plain}

\begin{document}

\begin{frontmatter}

\title{Optimizing The Integrator Step Size for Hamiltonian Monte Carlo}
\runtitle{Optimizing The Integrator Step Size for Hamiltonian Monte Carlo}

\begin{aug}
  \author{Michael Betancourt%
  \ead[label=e1]{betanalpha@gmail.com}},
  \author{Simon Byrne},
  \and
  \author{Mark Girolami}

  \runauthor{Betancourt et al.}

  \address{Michael Betancourt is a Postdoctoral Research Associate at the University of Warwick, 
                 Coventry CV4 7AL, UK \printead{e1}.  Simon Byrne is an EPSRC Postdoctoral Research
                 Fellow at University College London, Gower Street, London, WC1E 6BT.  
                 Mark Girolami is an ESPRC Established Career Research Fellow at the University of Warwick, 
                 Coventry CV4 7AL, UK.}

\end{aug}

\begin{abstract}
Hamiltonian Monte Carlo can provide powerful inference in complex statistical problems,
but ultimately its performance is sensitive to various tuning parameters.  In this paper
we use the underlying geometry of Hamiltonian Monte Carlo to construct a universal
optimization criterion for tuning the step size of the symplectic integrator crucial to any 
implementation of the algorithm as well as diagnostics to monitor for any signs of
invalidity.  An immediate outcome of this result is that the suggested target average
acceptance probability of 0.651 can be relaxed to $0.6 \lesssim a \lesssim 0.9$ with 
larger values more robust in practice.
\end{abstract}

\begin{keyword}
\kwd{Markov chain Monte Carlo}
\kwd{Hamiltonian Monte Carlo}
\kwd{symplectic integrators}
\kwd{backwards error analysis}
\end{keyword}

\end{frontmatter}

Hamiltonian Monte Carlo~\citep{DuaneEtAl:1987, Neal:2011, BetancourtEtAl:2014}
yields efficient inference that scales to high-dimensional problems by building up
Markov transitions from a smooth maps known as \textit{Hamiltonian flow}.  In 
practice this flow must simulated by a numerical integrator and, while any bias can 
be corrected by applying a Metropolis acceptance procedure, the performance of
the resulting algorithm is highly sensitive to the choice of integrator step size.

At present the integrator step size is typically set by applying the analysis in 
\cite{BeskosEtAl:2013}, which derived a lower bound for the cost of the algorithm 
as a function of the integrator step size when targeting an independently and
identically distributed target distribution and a second-order leapfrog integrator.  
In this paper we build upon that work, introducing a complementary upper bound
that admits a more robust optimization criterion for the integrator step size and 
leveraging the geometry inherent to the algorithm to extend the criterion to any 
target distribution and any symplectic integrator.

In order to build such an optimization criterion we first relate the computational 
cost of a Hamiltonian Monte Carlo transition to various expectations that depend 
on the integrator and the chosen step size.  We next show how to approximate 
these expectations and then use those results to construct a robust optimization
criterion by minimizing the cost.  Finally, we discuss how those approximations 
may fail in practice and how to compensate the optimization procedure to ensure
robust application.

\section{Bounding The Cost of a Hamiltonian Monte Carlo Transition} \label{sec:cost}

Hamiltonian Monte Carlo transitions are generated from a \textit{Hamiltonian},
\begin{equation*}
H \! \left(q, p \right) = T \! \left( q, p \right) + V \! \left( q \right),
\end{equation*}
where the \textit{kinetic energy}, $T \! \left( q, p \right)$, is specified by
the user subject to certain constraints \citep[Sec. 3.1.3]{BetancourtEtAl:2014}
and the \textit{potential energy}  is defined by a given target distribution,
$V \! \left( q \right) = - \log \varpi \! \left( q \right)$.  Beginning with an
initial position, $q$, each transition generates a joint state by randomly sampling 
the momenta, 
\begin{equation*}
p \sim \varpi_{q} \! \left( p | q \right) \propto \exp \! \left( - T \! \left(q, p \right) \right),
\end{equation*}
and then producing a new state by integrating Hamilton's equations,
\begin{align*}
\frac{dq}{dt} 
&= +\frac{ \partial H }{ \partial p } = + \frac{ \partial T }{ \partial p } \\
\frac{dp}{dt} 
&= -\frac{ \partial H }{ \partial q } = - \frac{ \partial T }{ \partial q } - \frac{ \partial V }{ \partial q }.
\end{align*}
for some time $\tau$.  

When Hamilton's equations are integrated exactly the joint state is distributed as
\begin{equation*}
\left( q, p \right) \sim \varpi_{H} \! \left( q, p \right) \propto \exp \! \left( - H \! \left(q, p \right) \right)
\end{equation*}
while the position is marginally distributed according to the desired target distribution,
\begin{equation*}
q \sim \varpi \! \left( q \right).
\end{equation*}
In practice, however, the Hamiltonian trajectory can be integrated only approximately 
and the stationary distribution of the position will be biased away from $\varpi$.

If we use a $k$th-order symmetric symplectic integrator with step size $\epsilon$ to simulate
the Hamiltonian trajectory, then we can exactly cancel this bias with a straightforward
Metropolis scheme.  First we compose the approximate Hamiltonian trajectory,
$\phi^{\tilde{H}}_{\epsilon, \tau} \! \left(q, p \right)$, with a momentum reversal operator,
$R : \left(q, p \right) \mapsto \left(q, -p \right)$, to generate a proposal,
\begin{equation*}
\Phi^{\tilde{H}}_{\epsilon, \tau} \! \left( q, p \right) 
= R \circ \phi^{\tilde{H}}_{\epsilon, \tau} \! \left(q, p \right).
\end{equation*}
This proposal is then accepted only with the probability,
\begin{equation*}
a \! \left(q, p \right) = 
\min \! \left(1, \exp \! \left( 
\Delta_{\epsilon} \! \left( q, p \right) \right)
\right),
\end{equation*}
where $\Delta_{\epsilon} \! \left( q, p \right)$ is the Hamiltonian error,
\begin{equation*}
\Delta_{\epsilon} \! \left( q, p \right) = 
- H \circ \Phi^{\tilde{H}}_{\epsilon, \tau} \! \left(q, p \right)
+ H \! \left(q, p \right).
\end{equation*}

The number of attempts required to produce an accepted proposal follows a
geometric distribution with the probability of success
\begin{equation*}
a \! \left( q \right) = \mathbb{E}_{\varpi_{q} } \! \left[ a \! \left(q, p \right) \right].
\end{equation*}
With the cost of generating a proposal just the cost of simulating at trajectory,
\begin{equation*}
C_{\mathrm{proposal}} = C_{\mathrm{iteration}} \frac{ \tau }{ \epsilon },
\end{equation*}
the expected cost of generating an accepted proposal is given by averaging the
expected number of rejections over the position space,
\begin{equation*}
\mathbb{E}_{q} \! \left[ C \right]
= \mathbb{E}_{\varpi} \! \left[ C_{\mathrm{proposal}} \, r \right] \\
= \frac{C_{\mathrm{iteration}}}{\epsilon} \mathbb{E}_{\varpi} \! \left[ \tau \, r \right].
\end{equation*}
If $\tau$ is chosen independently of position, then 
\begin{align*}
\mathbb{E}_{q} \! \left[ C \right]
&= C_{\mathrm{iteration}} \frac{ \tau }{ \epsilon } 
\cdot  \mathbb{E}_{\varpi} \left[ \frac{1}{a \! \left(q \right) } \right] \\
&= C_{\mathrm{iteration}} \frac{ \tau }{ \epsilon } 
\cdot \mathbb{E}_{\varpi} \left[ \frac{1}{\mathbb{E}_{\varpi_{q}} \left[ a \! \left( q, p \right) \right] } \right].
\end{align*}

Following \citet[Eq. 4.2]{BeskosEtAl:2013}, we apply Jensen's inequality to the
outer expectation to yield a lower bound on the expected cost. Jensen's
inequality, however, can also be applied on the inner expectation to give a
complementary upper bound,
\begin{equation} \label{expectation_bounds}
\frac{1}{ \mathbb{E}_{\varpi_{H}} \left[ a \! \left( q, p \right) \right] } 
\leq 
\mathbb{E}_{\varpi} \left[ \frac{1}{\mathbb{E}_{\varpi_{q}} \left[ a \! \left( q, p \right) \right] } \right] 
\leq 
\mathbb{E}_{\varpi_{H}} \left[ \frac{1}{a \! \left( q, p \right)} \right].
\end{equation}

These bounds are particularly advantageous because they reduce to expectations
of functions of the error in the Hamiltonian, $\Delta_{\epsilon} \! \left( q, p \right)$,
with respect to the joint distribution, $\varpi_{H}$.  These expectations
admit well-behaved approximations independent of the actual form of the
potential and kinetic energies and hence the particular details of the given
problem.

\section{Approximating Canonical Expectations}
\label{sec:approximating_canonical_expectations}

More formally, expectations with respect to the joint distribution, $\varpi_{H}$,
in Hamiltonian Monte Carlo are \textit{canonical expectations} and are
readily estimated in practice with symplectic integrators.  In this section we 
define canonical expectations and their relationship to Hamiltonian Monte Carlo, 
show how symplectic integrators approximate these canonical expectations and constrain 
the accuracy of these approximations in general, and then ultimately construct universal 
approximations to canonical expectations of certain functions of the Hamiltonian error.

This construction is necessarily technical and requires a strong familiarity with
differential geometry and the geometric foundations of Hamiltonian Monte Carlo
\citep{BetancourtEtAl:2014}.  We reserve the detailed proofs to Appendix
\ref{apx:main} and suggest that readers interested in only the final result
skip ahead to Section \ref{sec:approximate_bounds}.

\subsection{Canonical Expecations}

Hamiltonian systems, $\left(M, \omega, H \right)$, where $M$ is a smooth, $2n$-dimensional
manifold, $\omega$ a symplectic form, and $H : M \rightarrow \mathbb{R}$ a smooth
function, are particularly rich probabilistic systems.  In the following we review how probability
measures arise naturally on Hamiltonian systems, the implicit Hamiltonian system and 
corresponding measures driving Hamiltonian Monte Carlo, and how expectations with respect 
to such measures can be computed in theory.

\subsubsection{Canonical Distributions and Expectations}

On a Hamiltonian system the symplectic form, $\omega$, immediately defines a canonical 
volume form,
\begin{equation*}
\Omega = \wedge_{i = 1}^{n} \omega,
\end{equation*}
or, in canonical coordinates,
\begin{align*}
\Omega 
&= 
\mathrm{d} q^{1} \wedge \ldots \wedge \mathrm{d} q^{n} \wedge
\mathrm{d} p_{1} \wedge \ldots \wedge \mathrm{d} p_{n}.
\end{align*}
Provided that $\int_{M} e^{- \beta H} \Omega$ is finite for some $\beta \in \mathbb{R}$, we can 
also construct canonical probability measures,
\begin{equation*}
\varpi_{H} = \frac{ e^{ -\beta H } \Omega }{ \int_{M} e^{ -\beta H } \Omega },
\end{equation*}
known as \textit{canonical distributions}.  We refer to expectations of functions with
respect to canonical distributions as \textit{canonical expectations}.

The Hamiltonian foliates the manifold, $M$, into level sets,
\begin{equation*}
\LS = \left\{ z \in M \, | \, H \! \left( z \right) = E \right \},
\end{equation*}
and $\varpi$ naturally disintegrates into \textit{microcanonical distristributions},  $\pi_{H^{-1} ( E )}$,
that concentrate on these submanifolds,
\begin{align*}
\varpi_{H}
&= H^{*} \varpi_{E} \wedge \varpi_{ \sLS }
\\
&=
\frac{ e^{-\beta E} }{\int_{M} e^{-\beta H} \Omega } d \! \left( E \right) \mathrm{d} E
\wedge \frac{ \vec{v} \cOmega} { d \! \left( E \right) }.
\end{align*}
Here $\vec{v}$ is any transverse vector field satisfying $\mathrm{d} H \! \left( \vec{v} \right) = c$,
$\iota_{E} : H^{-1} \! \left( E \right) \hookrightarrow M$ is the inclusion of $H^{-1} \! \left( E \right)$ 
into $M$, and
\begin{equation*}
d \! \left( E \right) = 
\frac{ \int_{ H^{-1} \left(E\right) } \iota^{*}_{E} \! \left( \vec{v} \cOmega \right) }
{ \mathrm{d} H \! \left( v \right) }.
\end{equation*}
is the \textit{density of states}.  Without loss of generality we will always rescale $\vec{v}$ such
that $\mathrm{d} H \! \left( \vec{v} \right) = 1$.

Combined with the symplectic form, the Hamiltonian also generates a \textit{Hamiltonian flow},
\begin{equation*}
\phi^{H}_{t} : M \rightarrow M, \, t \in \mathbb{R},
\end{equation*}
under which both the symplectic volume form and Hamiltonian, and consequently
the canonical and microcanonical distributions, are invariant.

\subsubsection{Hamiltonian Monte Carlo and Canonical Expectations}

Because it preserves the canonical distribution, Hamiltonian flow can be used to construct an 
efficient Markov transition.  The only problem is that a given probability space, 
$\left( Q, \mathcal{B} \! \left( Q \right), \varpi \right)$, does not have the symplectic structure 
necessary be a Hamiltonian system.

Hamiltonian Monte Carlo leverages Hamiltonian flow by considering not the sample space,
$Q$, but rather it's cotangent bundle, $T^{*} Q$.  If $Q$ is a smooth and orientable
$n$-dimensional manifold then the cotangent bundle is itself a smooth, orientable 
$2n$-dimensional manifold with a canonical fiber bundle structure, 
$\pi : T^{*} Q \rightarrow Q$, and a canonical symplectic form, $\omega$.  

The target measure on $Q$, given in canonical coordinates by
\begin{equation*}
\varpi \propto e^{-V} \mathrm{d} q^{1} \wedge \ldots \wedge \mathrm{d} q^{n},
\end{equation*}
is lifted onto the cotangent bundle with the choice of a disintegration,
\begin{equation*}
\varpi_{q} \propto e^{-T} \mathrm{d} p_{1} \wedge \ldots \wedge \mathrm{d} p_{n}
+ \text{horizontal} \; n\text{-forms},
\end{equation*}
yielding the joint measure
\begin{align*}
\varpi_{H} 
&= \pi^{*} \varpi_{q} \wedge \varpi \\
&\propto e^{-T - V} 
\mathrm{d} p_{1} \wedge \ldots \wedge \mathrm{d} p_{n} \wedge 
\mathrm{d} q^{1} \wedge \ldots \wedge \mathrm{d} q^{n} \\
&\propto e^{- \left( T+ V \right) } \Omega.
\end{align*}

Taking,
\begin{equation*}
H = - \log \frac{ \mathrm{d} \varpi_{H} }{ \mathrm{d} \Omega } = T + V,
\end{equation*}
this lift defines a Hamiltonian system, $\left( T^{*} Q, \omega, H \right)$,
where the joint measure $\varpi_{H}$ is exactly the \textit{unit canonical measure} 
with $\beta = 1$.   In particular, expectations with respect to $\varpi_{H}$ are
all canonical expectations.

\subsubsection{Computing Canonical Expectations}

One of the important benefits of a Hamiltonian system is that the
canonical expectation of any smooth function, $f : M \rightarrow \mathbb{R}$, 
with respect to the canonical distribution
\begin{equation*}
\mathbb{E}_{\varpi_{H}} \! \left[ f \right] 
= \int_{M} f \, \varpi_{H}
= \frac{ \int_{M} f e^{-\beta H} \Omega }{ \int_{M} e^{-\beta H} \Omega },
\end{equation*}
can be computed by taking expectations with respect to the microcanonical distributions
on the level sets,
\begin{align*}
\mathbb{E}_{\varpi_{H}} \! \left[ f \right]
&= \int_{M} f \, \varpi_{H} \\
&= \int_{M} \varpi_{ \sLS } \wedge \varpi_{E} \\
&= 
\frac{1}{\int_{M} e^{-\beta H} \Omega } \int \mathrm{d} E \, d \! \left( E \right) e^{-\beta E}
\frac{\int_{ \sLS } \iota^{*}_{E} \! \left( f \, \vec{v} \cOmega \right)}
{ d \! \left( E \right) }.
\end{align*}

In particular the microcanonical expectations are readily computed using the Hamiltonian flow.
Birkhoff's ergodic theorem~\citep{Petersen:1989} states that given certain ergodicity conditions
the expectation of any function with respect to the microcanonical distribution is equal to its 
expectation along the Hamiltonian flow,
\begin{align*}
\frac{ \int_{\sLS} \iota^{*}_{E} \! \left( f \, \vec{v} \cOmega \right) }
{ d \! \left( E \right) }
&= 
\lim_{T \rightarrow \infty} \frac{1}{T} \int_{0}^{T} \mathrm{d} t \, \left( \phi^{H}_{t} \right)^{*} f
\\
&\equiv \left< f \right>_{ \sLS }.
\end{align*}

\subsection{Approximating Canonical Expectations with Symplectic Integrators}
\label{sec:accuracy_of_canonical_expectations}

The only problem with using Hamiltonian flow to compute expectations is that the Hamiltonian
flow itself requires the solution to a system of $2n$ first-order ordinary differential equations.  
For all but the simplest systems, analytical solution are unfeasible and we must instead 
resort to simulating the flow numerically.

Fortunately, there exist a family of numerical integrators that leverage the underlying symplectic 
geometry to conserve many of the properties of the exact 
flow~\citep{HairerEtAl:2006, LeimkuhlerEtAl:2004}.  These \textit{symplectic integrators} exactly 
preserve the symplectic volume form with only small variations in the Hamiltonian along the 
simulated flow.

In fact, symplectic integrators simulate some flow exactly, just not the flow corresponding to $H$.  
Using backwards error analysis one can show that a $k$-th order symmetric symplectic integrator
exactly simulates the flow for some \textit{modified Hamiltonian}, given by an even, asymptotic
expansion with respect to the integrator step size, $\epsilon$,
\begin{equation*}
\widetilde{H} = H 
+ \sum_{n=k / 2}^{N} \epsilon^{2 n} H_{\left( n \right) } 
+ \mathcal{O} \! \left( e^{- c / \epsilon } \right).
\end{equation*}
Because it is exponentially small in the step size, the asymptotic error is typically neglected
and the leading-order behavior of is given by
\begin{equation*}
\widetilde{H} = H + \epsilon^{k} G + \mathcal{O} ( \epsilon^{k + 2} ).
\end{equation*}

As in the exact case, the modified Hamiltonian foliates the manifold and we can define level sets,
\begin{equation*}
\MLS = \left\{ z \in M \, | \, \widetilde{H} \! \left( z \right) = E \right \},
\end{equation*}
a corresponding inclusion map,
\begin{equation*}
\tilde{\iota}_{E} : \widetilde{H}^{-1} \! \left( E \right) \hookrightarrow M,
\end{equation*}
and a corresponding transverse vector field,
\begin{equation*}
\mathrm{d} \widetilde{H} \! \left( \vec{u} \right) = 1.
\end{equation*}

Provided that the asymptotic error is indeed negligible and the symplectic integrator is
\textit{topologically stable}~\citep{McLachlanEtAl:2004}, the modified level sets will have
the same topology as the exact level sets.  In particular, when the exact foliation
defines a well-behaved disintegration into microcanonical distributions we can define 
a corresponding \textit{modified density of states},
\begin{equation*}
\tilde{d} \! \left( E \right) = 
\int_{ \sMLS } \tilde{\iota}^{*}_{E} \! \left( \vec{u} \cOmega \right),
\end{equation*}
\textit{modified microcanonical distribution},
\begin{equation*}
\varpi_{\sMLS} = \frac{ \vec{u} \cOmega }{ \tilde{d} \! \left( E \right) },
\end{equation*}
and \textit{modified canonical distribution},
\begin{align*}
\varpi_{\tilde{H}}
&= \tilde{H}^{*} \widetilde{\varpi}_{E} \wedge \varpi_{\sMLS}
\\
&=
\frac{ e^{-\beta E} }{\int_{M} e^{-\beta \widetilde{H}} \Omega }  \tilde{d} \! \left( E \right) \mathrm{d} E
\wedge \frac{ \vec{u} \cOmega} { \tilde{d} \! \left( E \right) }.
\end{align*}

Using the flow from a numerical integrator to compute averages yields expectations
with respect to these modified measures,
\begin{align*}
\mathbb{E}_{\varpi_{\widetilde{H}}} \! \left[ f \right]
&= \int_{M} f \, \varpi_{\widetilde{H}} \\
&= 
\frac{1}{\int_{M} e^{-\beta \widetilde{H}} \Omega } \int \mathrm{d} E \, \tilde{d} \! \left( E \right) e^{-E}
\frac{\int_{ \sMLS } \tilde{\iota}^{*}_{E} \! \left( f \, \vec{u} \cOmega \right)}
{ \tilde{d} \! \left( E \right) },
\end{align*}
where
\begin{align*}
\frac{ \int_{\sMLS} \tilde{\iota}^{*}_{E} \! \left( f \, \vec{u} \cOmega \right) }
{ \tilde{d} \! \left( E \right) }
&= 
\lim_{T \rightarrow \infty} \frac{1}{T} \int_{0}^{T} \mathrm{d} t \, \left( \phi^{\widetilde{H}}_{t} \right)^{*} f
\\
&\equiv \left< f \right>_{ \sMLS }.
\end{align*}

The ultimate utility of a symplectic integrator and its modified Hamiltonian system is in the accuracy 
of its expectations relative to the true canonical expectations.  Fortunately, the geometric structure
of symplectic integrators ensures that the approximation error of both microcanonical and canonical 
expectations computed with a symplectic integrator is well-controlled.

\begin{theorem} 
\label{thm:micro_canonical}
Let $\left( M, \omega, H\right)$ be a Hamiltonian system and consider
a $k$-th order symmetric symplectic integrator with the corresponding modified Hamiltonian
$\widetilde{H} = H + \epsilon^{k} \, G + \mathcal{O} ( \epsilon^{k + 2} )$.  
If the integrator is topologically stable and the asymptotic error is negligible, 
then the difference in the microcanonical and modified microcanonical expectations for 
any smooth function, $f : M \rightarrow \mathbb{R}$, is given by
\begin{align*}
\left< f \right>_{\sLS}- \left< f \right>_{\sMLS} 
=&
+ \epsilon^{k}
\left< 
f \, \mathrm{d} G \! \left( \vec{v} \right) + G \, \mathrm{d} f \! \left( \vec{v} \right) 
+ f \, G \left( \frac{\partial v^{i} }{ \partial q^{i} } + \frac{\partial v_{i} }{ \partial p_{i} } \right) 
\right>_{ \sMLS } 
\\
& - \epsilon^{k}
\Big< f \Big>_{\sMLS} 
\left< 
\mathrm{d} G \! \left( \vec{v} \right)
+ G \left( \frac{\partial v^{i} }{ \partial q^{i} } + \frac{\partial v_{i} }{ \partial p_{i} } \right) 
\right>_{ \sMLS }
+ \mathcal{O} ( \epsilon^{k + 2} ),
\end{align*}
where $\vec{v}$ is any transverse vector field satisfying $\mathrm{d} H \! \left( \vec{v} \right) = 1$.

\end{theorem}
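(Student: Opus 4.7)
The plan is to construct an explicit diffeomorphism from $\sMLS$ to $\sLS$ by flowing along $\vec{v}$ for a position-dependent time, then pull back the integrand and expand in $\epsilon$.

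First I would fix the transverse vector field on the modified level set as $\vec{u} = \vec{v} / \mathrm{d}\widetilde{H}(\vec{v})$, which automatically satisfies $\mathrm{d}\widetilde{H}(\vec{u}) = 1$. Since $\mathrm{d}\widetilde{H} = \mathrm{d}H + \epsilon^{k}\mathrm{d}G + \mathcal{O}(\epsilon^{k+2})$ and $\mathrm{d}H(\vec{v}) = 1$, this yields the key conversion
\begin{equation*}
\vec{v}\cOmega = \bigl(1 + \epsilon^{k}\,\mathrm{d}G(\vec{v})\bigr)\,\vec{u}\cOmega + \mathcal{O}(\epsilon^{k+2}).
\end{equation*}

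Next, letting $\psi_{s}$ denote the flow of $\vec{v}$, so that $H \circ \psi_{s} = H + s$, I would define the map $\Psi: M \to M$ by $\Psi(z) = \psi_{\epsilon^{k}\,G(z)}(z)$. Because on $\sMLS$ one has $H = E - \epsilon^{k}G + \mathcal{O}(\epsilon^{k+2})$, this carries $\sMLS$ into $\sLS$ to the relevant order, and topological stability promotes $\Psi$ to a global diffeomorphism $\sMLS \to \sLS$ for small enough $\epsilon$. Differentiating the one-parameter family $\Psi_{\lambda}(z) = \psi_{\lambda G(z)}(z)$ at $\lambda = 0$ identifies its generating vector field as $G\vec{v}$, so for any form $\alpha$ on $M$,
\begin{equation*}
\Psi^{*}\alpha = \alpha + \epsilon^{k}\,\mathcal{L}_{G\vec{v}}\alpha + \mathcal{O}(\epsilon^{k+2}).
\end{equation*}
A short application of Cartan's magic formula, using $\iota_{\vec{v}}(\vec{v}\cOmega) = 0$ and $\mathrm{d}(\vec{v}\cOmega) = (\partial_{q^{i}}v^{i} + \partial_{p_{i}}v_{i})\Omega$, then gives
\begin{equation*}
\mathcal{L}_{G\vec{v}}(f\,\vec{v}\cOmega) = G\bigl[\mathrm{d}f(\vec{v}) + f(\partial_{q^{i}}v^{i} + \partial_{p_{i}}v_{i})\bigr]\vec{v}\cOmega.
\end{equation*}

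Combining $\int_{\sLS}\iota_{E}^{*}(f\,\vec{v}\cOmega) = \int_{\sMLS}\tilde{\iota}_{E}^{*}\Psi^{*}(f\,\vec{v}\cOmega)$ with the conversion from $\vec{v}\cOmega$ to $\vec{u}\cOmega$ re-expresses the exact integral, to order $\epsilon^{k+2}$, as
\begin{equation*}
\int_{\sLS}\iota_{E}^{*}(f\,\vec{v}\cOmega) = \tilde{d}(E)\Bigl[\langle f\rangle_{\sMLS} + \epsilon^{k}\bigl\langle f\,\mathrm{d}G(\vec{v}) + G\,\mathrm{d}f(\vec{v}) + fG(\partial_{q^{i}}v^{i} + \partial_{p_{i}}v_{i})\bigr\rangle_{\sMLS}\Bigr],
\end{equation*}
and setting $f \equiv 1$ yields $d(E) = \tilde{d}(E)\bigl[1 + \epsilon^{k}\langle \mathrm{d}G(\vec{v}) + G(\partial_{q^{i}}v^{i} + \partial_{p_{i}}v_{i})\rangle_{\sMLS}\bigr] + \mathcal{O}(\epsilon^{k+2})$. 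Dividing the two, Taylor-expanding $1/(1 + \epsilon^{k}\cdots)$, and subtracting $\langle f\rangle_{\sMLS}$ produces exactly the claimed identity, with the second bracket coming from the denominator expansion.

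The main obstacle is the bookkeeping around the position-dependent flow time: one must verify that the first-order generating vector field of $\Psi$ is truly $G\vec{v}$ (this is what makes $\mathrm{d}G(\vec{v})$, rather than $G$ alone, appear in the leading correction) and confirm that the neglected asymptotic tail $\mathcal{O}(e^{-c/\epsilon})$ of the modified Hamiltonian can safely be absorbed into the $\mathcal{O}(\epsilon^{k+2})$ remainder. Topological stability handles the remaining concern of promoting $\Psi$ from a local to a bona fide global diffeomorphism, so that the change-of-variables step is valid across all of $\sMLS$.
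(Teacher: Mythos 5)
Your proposal is correct and follows essentially the same route as the paper's proof: both compare the two level sets by dragging along the vector field $G\,\vec{v}$, compute the first-order correction as $\mathcal{L}_{G\vec{v}} \left( f \, \vec{v} \cOmega \right)$, convert $\vec{v} \cOmega$ to $\vec{u} \cOmega$ via $\mathrm{d}\widetilde{H}(\vec{v}) = 1 + \epsilon^{k}\,\mathrm{d}G(\vec{v})$, and then expand the ratio defining the microcanonical expectation (your explicit reparametrized flow $\Psi$ and the Cartan-formula evaluation of the Lie derivative are just tidier presentations of the paper's "dragging" and Leibniz-rule computation, and they yield the identical intermediate expressions). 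One small misattribution in your closing remarks: the $f\,\mathrm{d}G(\vec{v})$ term in the leading correction arises from the $\vec{v}\cOmega \to \vec{u}\cOmega$ conversion, not from identifying the generator of $\Psi$ as $G\vec{v}$ --- in the Lie derivative itself the $\mathrm{d}G(\vec{v})$ contributions cancel, as your own Cartan computation shows.
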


Note that, when \LS and \MLS intersect at some initial point, this reduces to the calculation in 
\cite{ArizumiEtAl:2012}.  Moreover, to leading-order we can replace to the expectations 
over $\MLS$ on the RHS with expectations over $\LS$ to give
\begin{align*}
\left< f \right>_{\sLS} - \left< f \right>_{ \sMLS } 
=&
+ \epsilon^{k}
\left< 
f \, \mathrm{d} G \! \left( \vec{v} \right) + G \, \mathrm{d} f \! \left( \vec{v} \right) 
+ f \, G \left( \frac{\partial v^{i} }{ \partial q^{i} } + \frac{\partial v_{i} }{ \partial p_{i} } \right) 
\right>_{ \sLS } 
\\
& - \epsilon^{k}
\left< f \right>_{ \sLS } 
\left< 
\mathrm{d} G \! \left( \vec{v} \right)
+ G \left( \frac{\partial v^{i} }{ \partial q^{i} } + \frac{\partial v_{i} }{ \partial p_{i} } \right) 
\right>_{ \sLS }
+ \mathcal{O} ( \epsilon^{k + 2} ).
\end{align*}
This is convenient for numerical experiments when the canonical expectations
can be computed analytically.

Given the decomposition of the canonical distributions over level sets, the result for the 
accuracy of microcanonical expectations immediately carries over to a canonical
expectations.

\begin{theorem}
\label{thm:canonical}
Let $\left( M, \omega, H\right)$ be a Hamiltonian system and consider
a $k$-th order symmetric symplectic integrator with the corresponding modified Hamiltonian
$\widetilde{H} = H + \epsilon^{k} \, G + \mathcal{O} ( \epsilon^{k + 2} )$.  
If the integrator is topologically stable and the asymptotic error is negligible, 
then the difference in the canonical and modified canonical expectations for any smooth
function, $f : M \rightarrow \mathbb{R}$, is given by
\begin{align*}
\mathbb{E}_{\varpi_{H}} \! \left[ f \right] - \mathbb{E}_{\varpi_{\widetilde{H}}} \! \left[ f \right]
=&
+ \epsilon^{k} \,
\mathbb{E}_{\varpi_{H}} \! \left[
f \, \mathrm{d} G \! \left( \vec{v} \right) + G \, \mathrm{d} f \! \left( \vec{v} \right) 
+ f \, G \left( \frac{\partial v^{i} }{ \partial q^{i} } + \frac{\partial v_{i} }{ \partial p_{i} } \right) 
\right] 
\\
& - \epsilon^{k} \,
\mathbb{E}_{\varpi_{H}} \! \left[ f \right] \,
\mathbb{E}_{\varpi_{H}} \! \left[
\mathrm{d} G \! \left( \vec{v} \right)
+ G \left( \frac{\partial v^{i} }{ \partial q^{i} } + \frac{\partial v_{i} }{ \partial p_{i} } \right) \right] 
+ \mathcal{O} ( \epsilon^{k + 2} ),
\end{align*}
where $\vec{v}$ is any transverse vector field satisfying $\mathrm{d} H \! \left( \vec{v} \right) = 1$.

\end{theorem}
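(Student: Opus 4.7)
The plan is to reduce Theorem \ref{thm:canonical} to Theorem \ref{thm:micro_canonical} by integrating the microcanonical expansion over the foliation of $M$ by level sets of $H$. Starting from the disintegrations displayed just above the statement,
\begin{equation*}
\mathbb{E}_{\varpi_{H}} \! \left[ f \right]
= \frac{1}{Z}\int \mathrm{d} E \, d \! \left( E \right) e^{-\beta E} \left< f \right>_{\sLS}, \qquad
\mathbb{E}_{\varpi_{\widetilde{H}}} \! \left[ f \right]
= \frac{1}{\widetilde{Z}}\int \mathrm{d} E \, \tilde{d} \! \left( E \right) e^{-\beta E} \left< f \right>_{\sMLS},
\end{equation*}
with $Z = \int_{M} e^{-\beta H} \Omega$ and $\widetilde{Z} = \int_{M} e^{-\beta \widetilde{H}} \Omega$, I would subtract the two sides and collect the $\mathcal{O}(\epsilon^{k})$ contributions coming from three independent replacements: the microcanonical one $\left< f \right>_{\sMLS} \to \left< f \right>_{\sLS}$, the density-of-states one $\tilde{d}(E) \to d(E)$, and the partition-function one $\widetilde{Z} \to Z$.

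For the microcanonical replacement, Theorem \ref{thm:micro_canonical} supplies the pointwise-in-$E$ expansion, and integrating it against $d(E)e^{-\beta E}/Z$ sends each level-set integrand into the canonical expectation that appears in the first line of Theorem \ref{thm:canonical}. For the partition function, expanding $e^{-\beta \widetilde{H}} = e^{-\beta H}\bigl(1 - \beta \epsilon^{k} G\bigr) + \mathcal{O}(\epsilon^{k+2})$ yields $\widetilde{Z}/Z = 1 - \beta \epsilon^{k} \mathbb{E}_{\varpi_{H}}[G] + \mathcal{O}(\epsilon^{k+2})$. For the density of states, I would exploit the co-area-type identity $d(E) = \frac{\mathrm{d}}{\mathrm{d}E}\int_{H \le E} \Omega$, together with the relation $\widetilde{H} = H + \epsilon^{k} G + \mathcal{O}(\epsilon^{k+2})$ and the volume-form factorization $\mathrm{d}H \wedge (\vec{v} \cOmega) = \Omega$, to express $\tilde{d}(E) - d(E)$ as an energy derivative of $d(E)\left< G \right>_{\sLS}$ to leading order.

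The delicate step is to show that these three $\mathcal{O}(\epsilon^{k})$ contributions combine cleanly into the product $\mathbb{E}_{\varpi_{H}}[f] \cdot \mathbb{E}_{\varpi_{H}}\!\left[\mathrm{d}G(\vec{v}) + G(\partial v^{i}/\partial q^{i} + \partial v_{i}/\partial p_{i})\right]$ that appears in the theorem's second line. A naive application of Theorem \ref{thm:micro_canonical} alone produces only the level-set correlation $Z^{-1} \int \mathrm{d} E \, d(E) e^{-\beta E} \left< f \right>_{\sLS} \left< C \right>_{\sLS}$, where $C = \mathrm{d}G(\vec{v}) + G(\partial v^{i}/\partial q^{i} + \partial v_{i}/\partial p_{i})$, which differs from the target product by the covariance of $\left< f \right>_{\sLS}$ and $\left< C \right>_{\sLS}$ as functions of the energy under the marginal measure $d(E)e^{-\beta E}\mathrm{d}E/Z$. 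The main obstacle will be checking that the density-of-states and partition-function corrections reconstruct precisely this covariance, a cancellation that I expect to fall out of an integration by parts in $E$ once the leading correction $\tilde{d}(E) - d(E)$ is written in terms of $\frac{\mathrm{d}}{\mathrm{d}E}[d(E)\langle G \rangle_{\sLS}]$. Once this identity is in hand, the $\vec{v}$-dependent pieces reassemble into the stated canonical product and the $\mathcal{O}(\epsilon^{k+2})$ remainder is controlled by the same topological-stability and asymptotic-error hypotheses used in Theorem \ref{thm:micro_canonical}.
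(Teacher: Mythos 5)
Your plan follows the same overall strategy as the paper---push Theorem \ref{thm:micro_canonical} through the energy disintegration of the canonical measure---but your bookkeeping splits the computation into three separate replacements, which is exactly what creates the ``delicate step'' you flag. That step does close, though not quite by the mechanism you propose. Writing $C = \mathrm{d} G ( \vec{v} ) + G ( \partial v^{i}/\partial q^{i} + \partial v_{i}/\partial p_{i} )$, the $f = 1$ case of the dragging argument in Theorem \ref{thm:micro_canonical} already gives the pointwise-in-$E$ identity $d ( E ) = \tilde{d} ( E ) \bigl( 1 + \epsilon^{k} \left< C \right>_{\sMLS} \bigr) + \mathcal{O} ( \epsilon^{k+2} )$ (this is the content of the paper's Corollary \ref{cor:density_of_states}), so no co-area formula or integration by parts in $E$ is needed: the density-of-states replacement contributes $+ \epsilon^{k}$ times the energy-marginal average of $\left< f \right> \left< C \right>$, which cancels algebraically against the $- \epsilon^{k}$ level-set correlation coming from Theorem \ref{thm:micro_canonical}, while the partition-function replacement supplies the surviving product $- \epsilon^{k} \, \mathbb{E} [ f ] \, \mathbb{E} [ C ]$. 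Note also that your expansion $\widetilde{Z}/Z = 1 - \beta \epsilon^{k} \mathbb{E}_{\varpi_{H}} [ G ]$ is consistent with this only via the divergence identity $\mathbb{E}_{\varpi_{H}} [ C ] = \beta \, \mathbb{E}_{\varpi_{H}} [ G ]$, which you would then have to prove separately; using the Corollary \ref{cor:density_of_states} form avoids that. The paper sidesteps the entire covariance issue by never factoring $d ( E ) \left< f \right>_{\sLS}$ into a density times a normalized average: it expands the unnormalized numerator $\int \mathrm{d} E \, e^{-\beta E} \int_{\sLS} \iota^{*}_{E} ( f \, \vec{v} \cOmega )$ and the denominator $\int \mathrm{d} E \, d ( E ) e^{-\beta E}$ each to first order and takes the ratio, so the product term appears in a single line. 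Your route is correct but strictly harder than it needs to be.
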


\subsection{Approximating Canonical Expectations of the Hamiltonian Error}
\label{sec:accuracy_of_canonical_expectations_of_hamiltonian_error}

The expectations necessary for bounding the cost of a basic Hamiltonian Monte Carlo
transition are not just any canonical expectations but canonical expectations of functions 
of the Hamiltonian error,
\begin{equation*}
\Delta_{\epsilon} = H - H \circ \Phi^{\widetilde{H}}_{\epsilon, \tau},
\end{equation*}
where again
\begin{equation*}
\Phi^{\widetilde{H}}_{\epsilon, \tau} = R \circ \phi^{\widetilde{H}}_{\epsilon, \tau}
\end{equation*} 
is the Metropolis proposal.  By constraining how the moments and then the
cumulants of the Hamiltonian error scale with the symplectic integrator step size we
can construct universal approximations to these particular expectations.

\subsubsection{Moments of the Hamiltonian Error}

\begin{lemma}
\label{lem:moment_scaling}
Let $\left( M, \omega, H\right)$ be a Hamiltonian system and consider
a $k$-th order symmetric symplectic integrator with the corresponding modified Hamiltonian
$\widetilde{H} = H + \epsilon^{k} \, G + \mathcal{O} ( \epsilon^{k + 2} )$
and Metropolis proposal $\Phi^{\widetilde{H}}_{\epsilon, \tau}$. 
If the integrator is topologically stable and the asymptotic error is negligible, 
then the moments of the Hamiltonian error with respect to the unit canonical distribution
scale as
\begin{equation*}
\mathbb{E}_{\varpi_{H}} \! \left[ \left( \Delta_{\epsilon} \right)^{n} \right] \propto
\left\{
\begin{array}{rr}
\epsilon^{ k ( n + 1 )}, & n \, \mathrm{odd} \;\, \\
\epsilon^{k n} \;\,, & n \, \mathrm{even}
\end{array} 
\right. .
\end{equation*}
to leading-order in $\epsilon$.

\end{lemma}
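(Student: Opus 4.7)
The approach splits by the parity of $n$, with the baseline scaling coming from the modified-Hamiltonian expansion and the extra factor of $\epsilon^k$ in the odd case coming from reversibility of the Metropolis proposal.

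First I would use conservation of the modified Hamiltonian along $\phi^{\widetilde{H}}_{\epsilon,\tau}$, together with the momentum-reversal invariance $H \circ R = H$ built into any admissible kinetic energy, to reduce the Hamiltonian error:
\[
\Delta_\epsilon \;=\; H - H \circ \phi^{\widetilde{H}}_{\epsilon,\tau} \;=\; \epsilon^{k} \bigl( G \circ \phi^{\widetilde{H}}_{\epsilon,\tau} - G \bigr) \;+\; \mathcal{O}(\epsilon^{k+2}).
\]
Raising to the $n$-th power and taking the canonical expectation gives $\mathbb{E}_{\varpi_H}[\Delta_\epsilon^n] = \mathcal{O}(\epsilon^{kn})$. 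For \emph{even} $n$ the leading integrand $(G \circ \phi^{\widetilde{H}}_{\epsilon,\tau} - G)^n$ is pointwise nonnegative and generically has strictly positive canonical expectation (it would vanish only in the degenerate case that $G$ is conserved along the modified flow), so the scaling $\epsilon^{kn}$ is tight and the even half of the lemma is done.

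For \emph{odd} $n$ the leading $\epsilon^{kn}$ contribution must cancel, and the mechanism is reversibility of the Metropolis proposal. The map $\Phi^{\widetilde{H}}_{\epsilon,\tau} = R \circ \phi^{\widetilde{H}}_{\epsilon,\tau}$ is an involution and preserves the canonical volume form, and by its very definition $\Delta_\epsilon \circ \Phi^{\widetilde{H}}_{\epsilon,\tau} = -\Delta_\epsilon$. Changing variables $z \mapsto \Phi^{\widetilde{H}}_{\epsilon,\tau}(z)$ in $\mathbb{E}_{\varpi_H}[\Delta_\epsilon^n]$ and using $e^{-H \circ \Phi^{\widetilde{H}}_{\epsilon,\tau}} = e^{-H + \Delta_\epsilon}$ yields the detailed-balance identity
\[
\mathbb{E}_{\varpi_H}\!\bigl[\Delta_\epsilon^{n}\bigr] \;=\; (-1)^{n}\,\mathbb{E}_{\varpi_H}\!\bigl[\Delta_\epsilon^{n}\, e^{\Delta_\epsilon}\bigr].
\]
For odd $n$ this rearranges, after expanding $e^{\Delta_\epsilon}$, to
\[
2\,\mathbb{E}_{\varpi_H}\!\bigl[\Delta_\epsilon^{n}\bigr] \;=\; -\sum_{j \geq 1} \frac{1}{j!}\, \mathbb{E}_{\varpi_H}\!\bigl[\Delta_\epsilon^{\,n+j}\bigr].
\]
The $j=1$ term is $-\tfrac{1}{2}\mathbb{E}_{\varpi_H}[\Delta_\epsilon^{n+1}]$, which is $\mathcal{O}(\epsilon^{k(n+1)})$ by the already-established even case since $n+1$ is even; every higher $j$ is of strictly higher order. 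This delivers the claimed $\epsilon^{k(n+1)}$ scaling.

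The step I expect to be the main obstacle is the rigorous set-up of the detailed-balance identity, since it requires pulling together several structural properties of the proposal at once: that $\Phi^{\widetilde{H}}_{\epsilon,\tau}$ is involutive, which needs the integrator symmetry $R \circ \phi^{\widetilde{H}}_{\epsilon,\tau} \circ R = (\phi^{\widetilde{H}}_{\epsilon,\tau})^{-1}$ together with the $R$-invariance inherited by $\widetilde{H}$ from $H$; and that $\Phi^{\widetilde{H}}_{\epsilon,\tau}$ preserves $\Omega$, which follows from the symplecticity of $\phi^{\widetilde{H}}_{\epsilon,\tau}$ combined with the fact that $R$ preserves the canonical volume measure. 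Once these ingredients are spelled out, the remainder of the argument is routine bookkeeping in the step-size expansion.
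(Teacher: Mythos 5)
Your proposal is correct, and for the crucial odd-$n$ cancellation it takes a genuinely different route from the paper. Both arguments begin identically: using $\widetilde{H} \circ \Phi^{\widetilde{H}}_{\epsilon,\tau} = \widetilde{H}$ to rewrite $\Delta_{\epsilon} = \epsilon^{k}\bigl( G \circ \Phi^{\widetilde{H}}_{\epsilon,\tau} - G \bigr) + \mathcal{O}(\epsilon^{k+2})$, which yields the baseline $\mathcal{O}(\epsilon^{kn})$ bound and settles the even case (your pointwise-nonnegativity remark for tightness is a small improvement on the paper's bare assertion that even powers ``do not benefit from a similar cancellation''). For odd $n$, however, the paper expands the binomial $\bigl(-G + G\circ\Phi^{\widetilde{H}}_{\epsilon,\tau}\bigr)^{n}$, pairs the $j$ and $n-j$ terms, and invokes its ``correlation by parts'' lemma (Lemma \ref{lem:correlation_by_parts}) together with Theorem \ref{thm:canonical} to show each pair cancels up to a relative $\mathcal{O}(\epsilon^{k})$. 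You instead derive the detailed-balance identity $\mathbb{E}_{\varpi_{H}}[\Delta_{\epsilon}^{n}] = (-1)^{n}\,\mathbb{E}_{\varpi_{H}}[\Delta_{\epsilon}^{n} e^{\Delta_{\epsilon}}]$ by pushing the integral forward along the volume-preserving involution $\Phi^{\widetilde{H}}_{\epsilon,\tau}$ — exactly the trick the paper uses only for the $n=0$ case in Lemma \ref{lem:global_constraint} — and then read off the extra power of $\epsilon^{k}$ from the already-established even bound on $\mathbb{E}_{\varpi_{H}}[\Delta_{\epsilon}^{n+1}]$ and the baseline bound on the higher terms. Your route is more elementary (it bypasses Lemma \ref{lem:correlation_by_parts} and Theorem \ref{thm:canonical} entirely) and yields as a bonus the explicit leading relation $\mathbb{E}_{\varpi_{H}}[\Delta_{\epsilon}^{n}] = -\tfrac{1}{2}\mathbb{E}_{\varpi_{H}}[\Delta_{\epsilon}^{n+1}] + \ldots$, which for $n=1$ reproduces Corollary \ref{cor:leading_cumulants} directly; what the paper's heavier machinery buys is an explicit geometric expression for the leading coefficient (in terms of $G$, $\vec{v}$, and the modified microcanonical expectations), which it needs later to compute $\mathbb{E}_{\varpi_{H}}[\Delta_{\epsilon}]$ analytically in the Gaussian example. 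The only points to make explicit in a full write-up are the ones you already flag — involutivity and volume preservation of $\Phi^{\widetilde{H}}_{\epsilon,\tau}$ — plus the (formal) interchange of the expectation with the series for $e^{\Delta_{\epsilon}}$, which is at the same level of rigor the paper itself adopts in Corollary \ref{cor:leading_cumulants}.
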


The mean of the Hamiltonian error is particularly interesting because it can be
computed analytically (Appendix \ref{apx:ave_error}).  In the case of a Gaussian
target distribution, a Euclidean kinetic energy, and a second-order leapfrog
integrator we have the Hamiltonian,
\begin{equation*}
H = \frac{1}{2} p^{2} + \frac{1}{2} q^{2},
\end{equation*} 
the sub-leading contribution to the modified Hamiltonian,
\begin{equation*}
G = 
\frac{1}{24} \left( 2 q^{2} - p^{2} \right),
\end{equation*}
and eventually the average error
\begin{align*}
\mathbb{E}_{\varpi_{H}} \left[ \Delta_{\epsilon} \right] 
&=
\frac{1}{64} \epsilon^{4} \left(1 - \cos 2 \tau \right)
+ \mathcal{O} \! \left( \epsilon^{6} \right),
\end{align*}
in agreement with numerical experiments (Figure \ref{fig:delta}).

\begin{figure}
\centering
\includegraphics[width=4in]{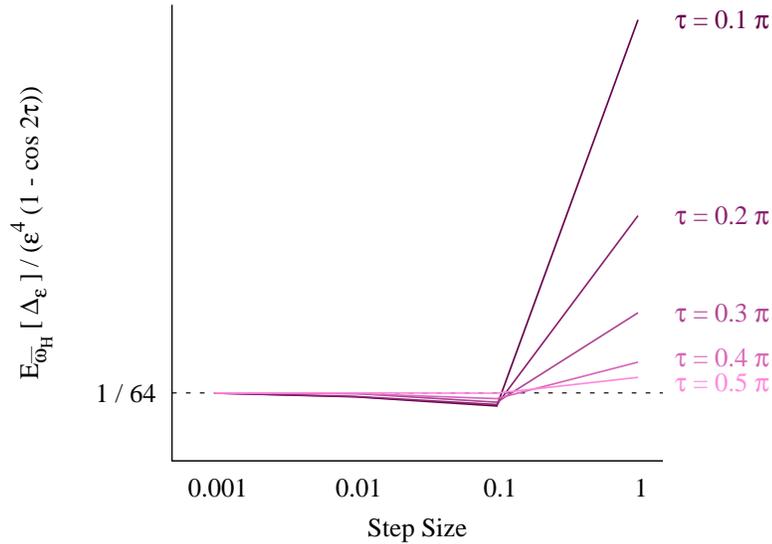}
\caption{For the Gaussian target distribution, numerical estimates of 
$\mathbb{E}_{\varpi_{H}} \left[ \Delta_{\epsilon} \right]$ are consistent with the 
geometric results for a wide range of integration times, $\tau$, and step sizes, 
$\epsilon$, at least until higher-order contributions become significant at large
step sizes.  The estimates were performed using an analytic form of the error 
such that discretization of the integration time is not a factor at larger step sizes.}
\label{fig:delta}
\end{figure}

\subsubsection{Cumulants of the Hamiltonian Error}

\begin{lemma}
\label{lem:cumulant_scaling}
Let $\left( M, \omega, H\right)$ be a Hamiltonian system and consider
a $k$-th order symmetric symplectic integrator with the corresponding modified Hamiltonian
$\widetilde{H} = H + \epsilon^{k} \, G + \mathcal{O} ( \epsilon^{k + 2} )$
and Metropolis proposal $\Phi^{\widetilde{H}}_{\epsilon, \tau}$.  
If the integrator is topologically stable and the asymptotic error is negligible, 
then the cumulants of the Hamiltonian error with respect to the unit canonical distribution scale 
as
\begin{equation*}
\kappa_{n} \! \left( \Delta_{\epsilon} \right) \propto
\left\{
\begin{array}{rr}
\epsilon^{k (n + 1) }, & n \, \mathrm{odd} \;\, \\
\epsilon^{k n} \;\,, & n \, \mathrm{even}
\end{array} 
\right. .
\end{equation*}
to leading-order in $\epsilon$.
\end{lemma}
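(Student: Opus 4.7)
The plan is to deduce the cumulant scaling directly from Lemma \ref{lem:moment_scaling} via the classical moment-to-cumulant inversion, proceeding by induction on $n$. The convenient recursion is
\[
\kappa_{n} = m_{n} - \sum_{j=1}^{n-1} \binom{n-1}{j-1} \kappa_{j}\, m_{n-j},
\]
with $m_{j} := \mathbb{E}_{\varpi_{H}}[\Delta_{\epsilon}^{j}]$, since every quantity on the right has a known or inductively known order in $\epsilon$.

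The base case $n=1$ is immediate: $\kappa_{1}=m_{1}\propto\epsilon^{2k}$ by Lemma \ref{lem:moment_scaling}, matching the odd-$n$ formula. For the inductive step I would encode Lemma \ref{lem:moment_scaling} compactly by writing the order of $m_{j}$ as $\epsilon^{kj + k\,\mathbf{1}[j\text{ odd}]}$, and analogously for $\kappa_{j}$ under the inductive hypothesis. Each product $\kappa_{j}\,m_{n-j}$ then has order $\epsilon^{kn + k(\mathbf{1}[j\text{ odd}] + \mathbf{1}[(n-j)\text{ odd}])}$. When $n$ is even the two parity indicators sum either to $0$ (both $j,n-j$ even) or to $2$ (both odd), so the slowest-vanishing contributions on the right, including $m_{n}$ itself, are at order $\epsilon^{kn}$. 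When $n$ is odd exactly one of $j, n-j$ is odd, the indicators sum to exactly $1$, and every surviving term, again including $m_{n}$, is at order $\epsilon^{k(n+1)}$. In either case $\kappa_{n}$ inherits the stated order.

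The main obstacle is upgrading this order estimate to honest proportionality rather than just an upper bound, since the dominant contributions from $m_{n}$ and from the $\kappa_{j}\,m_{n-j}$ pieces might cancel algebraically. To rule this out I would exploit the invariance of $\widetilde{H}$ along its own flow, which together with momentum reversibility yields
\[
\Delta_{\epsilon} = \epsilon^{k}\bigl(G\circ\Phi^{\widetilde{H}}_{\epsilon,\tau} - G\bigr) + \mathcal{O}(\epsilon^{k+2}) = \epsilon^{k}\,\Delta G + \mathcal{O}(\epsilon^{k+2}),
\]
where $\Delta G$ is an $\mathcal{O}(1)$ random variable under $\varpi_{H}$. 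For even $n$ the leading coefficient of $\kappa_{n}$ then reduces to the ordinary $n$-th cumulant of $\Delta G$, while for odd $n$ an additional factor of $\epsilon^{k}$ emerges precisely from the leading-order cancellation $\mathbb{E}_{\varpi_{H}}[\Delta G] = 0$, the residual being a cross-cumulant of $\Delta G$ with the next-order correction.

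The hardest part of the argument will be verifying this non-vanishing for all $n$, rather than the combinatorial order-counting. I expect it to parallel the explicit Gaussian calculation already in the text, where the leading coefficient of $\mathbb{E}_{\varpi_{H}}[\Delta_{\epsilon}]$ is $\tfrac{1}{64}(1-\cos 2\tau)$: one reduces each leading cumulant to an explicit canonical expectation of a polynomial in $G$ along the exact Hamiltonian flow and appeals to the non-triviality of that flow on generic level sets.
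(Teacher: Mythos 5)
Your proof is essentially identical to the paper's: it uses the same moment-to-cumulant recursion $\kappa_{n} = \mu_{n} - \sum_{m=1}^{n-1} \binom{n-1}{m-1} \kappa_{m}\, \mu_{n-m}$, the same induction with base case $\kappa_{1}=\mu_{1}$, and the same parity bookkeeping showing every term on the right has the claimed order (your accounting of the parity indicators is in fact spelled out more carefully than the paper's). The additional worry about upgrading the order estimate to genuine non-vanishing of the leading coefficient is a legitimate refinement, but the paper does not address it either --- it reads ``$\propto$'' as an order-of-magnitude statement --- so it is not required to match the paper's argument.
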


\begin{lemma} 
\label{lem:global_constraint}

Let $\left( M, \omega, H \right)$ be a Hamiltonian system and consider
a $k$-th order symmetric symplectic integrator with the corresponding modified Hamiltonian
$\widetilde{H} = H + \epsilon^{k} \, G + \mathcal{O} ( \epsilon^{k + 2} )$
and Metropolis proposal $\Phi^{\widetilde{H}}_{\epsilon, \tau}$. 
If the integrator is topologically stable and the asymptotic error is negligible, 
then the cumulant generating function of the Hamiltonian error vanishes
\begin{equation*}
\log \mathbb{E}_{\varpi_{H}} \! \left[ e^{ \Delta_{\epsilon}} \right]  = 0.
\end{equation*}
for the unit canonical distribution,
\begin{equation*}
\varpi_{H} = \frac{ e^{-H} \Omega }{ \int_{M} e^{-H} \Omega }.
\end{equation*}

\end{lemma}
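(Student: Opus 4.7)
The plan is to rewrite the expectation directly in coordinate-free form and then exploit the volume-preservation property of the Metropolis proposal map $\Phi^{\widetilde{H}}_{\epsilon,\tau}$. Writing $Z = \int_{M} e^{-H} \Omega$, I would start from
\begin{equation*}
\mathbb{E}_{\varpi_{H}}\! \left[ e^{\Delta_{\epsilon}} \right]
= \frac{1}{Z} \int_{M} \exp\!\left( H - H \circ \Phi^{\widetilde{H}}_{\epsilon,\tau} \right) e^{-H} \, \Omega
= \frac{1}{Z} \int_{M} e^{ - H \circ \Phi^{\widetilde{H}}_{\epsilon,\tau} } \, \Omega,
\end{equation*}
so that the entire question reduces to evaluating $\int_{M} (\Phi^{\widetilde{H}}_{\epsilon,\tau})^{*}(e^{-H}) \, \Omega$ and recognizing it as $Z$ via a change of variables.

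Next I would verify that $\Phi^{\widetilde{H}}_{\epsilon,\tau} = R \circ \phi^{\widetilde{H}}_{\epsilon,\tau}$ is a diffeomorphism of $M$ onto itself that preserves $\Omega$ up to sign. The modified-Hamiltonian flow $\phi^{\widetilde{H}}_{\epsilon,\tau}$ is symplectic by construction of the integrator, so it pulls back $\omega$ to itself and hence preserves the top form $\Omega = \wedge_{i=1}^{n} \omega$. The momentum reversal $R : (q,p) \mapsto (q,-p)$ satisfies $R^{*}\Omega = (-1)^{n} \Omega$ in canonical coordinates, which only affects orientation; as we are integrating the positive density $e^{-H}$ against $|\Omega|$, the sign is immaterial. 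Both maps are global diffeomorphisms on $M = T^{*}Q$ (the former because $\widetilde{H}$ is a smooth function whose flow is complete on the modified level sets by the topological-stability assumption, the latter trivially as an involution), so their composition is a bijective, volume-preserving self-map of $M$.

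With that in hand, the change-of-variables formula gives
\begin{equation*}
\int_{M} e^{ - H \circ \Phi^{\widetilde{H}}_{\epsilon,\tau} } \, \Omega
= \int_{M} (\Phi^{\widetilde{H}}_{\epsilon,\tau})^{*}\!\left( e^{-H} \, \Omega \right)
= \int_{M} e^{-H} \, \Omega
= Z,
\end{equation*}
whence $\mathbb{E}_{\varpi_{H}}[e^{\Delta_{\epsilon}}] = 1$ and its logarithm vanishes identically. I expect the main obstacle, and essentially the only one, to be bookkeeping around the orientation sign introduced by $R$ and the technical point that $\Phi^{\widetilde{H}}_{\epsilon,\tau}$ is a genuine diffeomorphism of $M$ rather than merely a local map; the topological-stability hypothesis and completeness of the modified flow (consequences of the negligible asymptotic error assumption) are exactly what is needed to dispatch this. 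Note also that the conclusion holds without any expansion in $\epsilon$: it is an exact identity, valid for every step size for which the modified-Hamiltonian picture is well defined, which is what makes it useful as a global constraint on the cumulants from Lemma \ref{lem:cumulant_scaling}.
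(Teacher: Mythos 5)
Your proposal is correct and follows essentially the same route as the paper: rewrite the expectation as $\int_{M} e^{-H\circ\Phi^{\widetilde{H}}_{\epsilon,\tau}}\Omega \big/ \int_{M}e^{-H}\Omega$ and then change variables through the volume-preserving proposal map to identify the numerator with the denominator. The paper pulls back through $\Phi^{\widetilde{H}}_{\epsilon,-\tau}$ rather than invoking the involution property directly, but this is the same argument; your extra care about the orientation sign from $R$ and the completeness of the modified flow is a harmless refinement.
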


When coupled with Jensen's inequality, Lemma \ref{lem:global_constraint}
implies that $\mathbb{E}_{\varpi_{H}} \! \left[ \Delta_{\epsilon} \right] \le 0$ with equality 
holding only for an exact integrator when $\Delta_{\epsilon}$ is identically zero.  This shows that 
a symplectic integrator will always introduce an error in expectation and the average Metropolis 
acceptance probability will always be smaller than unity.

\begin{corollary}
\label{cor:leading_cumulants}
Let $\left( M, \omega, H \right)$ be a Hamiltonian system and consider
a $k$-th order symmetric symplectic integrator with the corresponding modified Hamiltonian
$\widetilde{H} = H + \epsilon^{k} \, G + \mathcal{O} ( \epsilon^{2k} )$
and Metropolis proposal $\Phi^{\widetilde{H}}_{\epsilon, \tau}$. 
If the integrator is topologically stable and the asymptotic error is negligible, 
then to leading-order in $\epsilon$ the first two cumulants of the unit canonical distribution 
satisfy
\begin{equation*}
\kappa_{1} = - \frac{1}{2} \kappa_{2}.
\end{equation*}
\end{corollary}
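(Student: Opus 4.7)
The plan is to combine Lemma \ref{lem:global_constraint} with the cumulant scaling of Lemma \ref{lem:cumulant_scaling} and match coefficients order by order in $\epsilon$. Concretely, I would start by writing the cumulant generating function of $\Delta_{\epsilon}$ as its Taylor series
\begin{equation*}
\log \mathbb{E}_{\varpi_{H}} \! \left[ e^{t \Delta_{\epsilon}} \right] = \sum_{n=1}^{\infty} \frac{\kappa_{n} \! \left( \Delta_{\epsilon} \right)}{n!} t^{n},
\end{equation*}
and then evaluating at $t = 1$. Lemma \ref{lem:global_constraint} tells us that the left-hand side vanishes identically, so that
\begin{equation*}
\sum_{n=1}^{\infty} \frac{\kappa_{n} \! \left( \Delta_{\epsilon} \right)}{n!} = 0.
\end{equation*}

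Next I would invoke Lemma \ref{lem:cumulant_scaling} to read off the $\epsilon$-order of each term. For odd $n$ the $n$-th cumulant is $\mathcal{O} ( \epsilon^{k ( n + 1 ) } )$ and for even $n$ it is $\mathcal{O} ( \epsilon^{k n} )$, so that $\kappa_{1}$ and $\kappa_{2}$ both sit at order $\epsilon^{2 k}$, while $\kappa_{3}$ and $\kappa_{4}$ are pushed to order $\epsilon^{4k}$, and in general every cumulant beyond the second contributes at $\mathcal{O} ( \epsilon^{4k} )$ or higher. Truncating the constraint to leading-order in $\epsilon$ therefore isolates
\begin{equation*}
\kappa_{1} + \tfrac{1}{2} \kappa_{2} + \mathcal{O} \! \left( \epsilon^{4 k} \right) = 0,
\end{equation*}
from which $\kappa_{1} = - \tfrac{1}{2} \kappa_{2}$ follows at leading-order, as claimed.

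The only place the argument needs a little care is the rearrangement of the infinite cumulant series as an asymptotic series in $\epsilon$: I would justify this by noting that under the hypothesis that the asymptotic error of the modified Hamiltonian is negligible, $\Delta_{\epsilon}$ itself has an asymptotic expansion in $\epsilon$ whose moments scale in the manner used to derive Lemma \ref{lem:cumulant_scaling}, so the grouping of contributions by powers of $\epsilon^{k}$ is unambiguous and the $\epsilon^{2 k}$ coefficient of the cumulant-generating identity may be extracted term by term. I expect this bookkeeping — verifying that no cross-term from higher cumulants sneaks back into the $\epsilon^{2k}$ order — to be the only delicate step; everything else is a direct application of the two preceding lemmas.
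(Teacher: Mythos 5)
Your proposal is correct and follows essentially the same route as the paper: expand the cumulant generating function, set $t=1$, invoke Lemma \ref{lem:global_constraint} to make it vanish, and use the scaling of Lemma \ref{lem:cumulant_scaling} to drop all cumulants beyond the second at leading order. Your bookkeeping of the remainder as $\mathcal{O}(\epsilon^{4k})$ is in fact slightly sharper than the paper's stated $\mathcal{O}(\epsilon^{2k})$, since $\kappa_{3}$ and $\kappa_{4}$ both first contribute at order $\epsilon^{4k}$.
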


\begin{proof}

The cumulant generating function gives
\begin{equation*}
\log \mathbb{E}_{\varpi_{H}} \left[ e^{ t \Delta_{\epsilon}} \right]
=
\sum_{n = 0}^{\infty} \frac{t^{n} \kappa_{n} }{n!},
\end{equation*}
or taking $t = 1$ and appealing to Lemma \ref{lem:global_constraint},
\begin{equation*}
0 = \log \mathbb{E}_{\varpi_{H}} \left[ e^{ \Delta_{\epsilon}} \right]
=
\sum_{n = 0}^{\infty} \frac{ \kappa_{n} }{n!},
\end{equation*}
But from Lemma \ref{lem:cumulant_scaling} we know that to
leading-order only the first two cumulants contribute,
\begin{equation*}
0 = \kappa_{1} + \frac{1}{2} \kappa_{2} + \mathcal{O} ( \epsilon^{2k} ),
\end{equation*}
or
\begin{equation*}
\kappa_{1} = - \frac{1}{2} \kappa_{2} + \mathcal{O} ( \epsilon^{2k} ),
\end{equation*}
as desired.

\end{proof}

Explicitly introducing the scaling from Lemma \ref{lem:cumulant_scaling} gives
\begin{equation*}
\kappa_{2} \approx - 2 \kappa_{1} = \alpha \, \epsilon^{2k}, \, \alpha \in \mathbb{R}.
\end{equation*}
At this point we can note that if the target distribution composes into $d$ independently
and identically distributions components then the cumulants scale as
\begin{equation*}
\kappa_{n} \! \left( \Delta_{\epsilon} \right) \propto
\left\{
\begin{array}{rr}
d \, \epsilon^{ k (n + 1)}, & n \, \mathrm{odd} \;\, \\
d \, \epsilon^{k n} \;\,, & n \, \mathrm{even}
\end{array} 
\right. .
\end{equation*}
If we scale the step size as $\epsilon^{2k} = \epsilon_{0}^{2k} / d$ then the cumulants
scaling becomes
\begin{equation*}
\kappa_{n} \! \left( \Delta_{\epsilon} \right) \propto
\left\{
\begin{array}{rr}
d^{ \frac{1 - n}{2}} \, \epsilon_{0}^{k (n + 1)}, & n \, \mathrm{odd} \;\, \\
d^{ \frac{2 - n}{2}} \, \epsilon_{0}^{kn} \;\,, & n \, \mathrm{even}
\end{array} 
\right. .
\end{equation*}

Consequently, in the infinite limit $d \rightarrow \infty$ all of the cumulants
beyond second-order vanish, $\varpi_{\Delta_{\epsilon}}$ converges to a
$\mathcal{N} \! \left( - \frac{1}{2} \alpha \epsilon^{2k}, \alpha \epsilon^{2k} \right)$ 
in distribution, and the desired expectations simply to Gaussian integrals.  
Extending this argument to independently but not necessarily identically 
distributed distributions corresponds to the results in \cite{BeskosEtAl:2013} 
generalized to any symplectic integrator.

Fortunately, even outside of the limit of infinite independently distributed
distributions the expectations are remarkably well-behaved.

\subsubsection{Expectations of the Hamiltonian Error}

Together, these Lemmas imply that canonical expectations of any smooth function 
of the Hamiltonian error, as well as the Metropolis acceptance probability with its 
single cusp, are well-approximated by straightforward Gaussian integrals.

\begin{theorem}
\label{thm:smooth_expectations}

Let $\left( M, \omega, H\right)$ be a Hamiltonian system and consider
a $k$-th order symmetric symplectic integrator with the corresponding modified Hamiltonian
$\widetilde{H} = H + \epsilon^{k} \, G + \mathcal{O} ( \epsilon^{k + 2} )$
and Metropolis proposal $\Phi^{\widetilde{H}}_{\epsilon, \tau}$.  If the integrator 
is topologically stable and the asymptotic error is negligible, then the expectation of 
any smooth function of the Hamiltonian error is given by
\begin{align*}
\mathbb{E}_{\varpi_{H} } \! \left[ f \! \left( \Delta_{\epsilon} \right) \right] 
&=
\int^{\infty}_{-\infty} \mathrm{d} \Delta_{\epsilon} \, 
\mathcal{N} \! \left( \Delta_{\epsilon} | 
- \frac{1}{2} \alpha \, \epsilon^{2k}, \alpha \, \epsilon^{2k} \right) 
f \! \left( \Delta_{\epsilon} \right)
+ \mathcal{O} ( \epsilon^{2k + 4} ),
\end{align*}
for some $\alpha \in \mathbb{R}$.

\end{theorem}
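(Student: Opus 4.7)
The plan is to exploit the cumulant structure assembled in Lemmas \ref{lem:moment_scaling}--\ref{lem:global_constraint} and Corollary \ref{cor:leading_cumulants}, which together completely pin down the leading-order distribution of $\Delta_\epsilon$: the first two cumulants are $O(\epsilon^{2k})$ and related by $\kappa_1 = -\tfrac{1}{2}\kappa_2$, while all higher cumulants are $O(\epsilon^{4k})$ or smaller. This forces $\Delta_\epsilon$ to be asymptotically Gaussian in the sense that all smooth expectations reduce, to leading order, to Gaussian integrals with parameters determined solely by $\kappa_2$.

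Concretely, I would Taylor expand the smooth $f$ around $0$ and take the canonical expectation term by term,
\begin{equation*}
\mathbb{E}_{\varpi_H}\!\left[f(\Delta_\epsilon)\right] = \sum_{n=0}^{N} \frac{f^{(n)}(0)}{n!}\,\mathbb{E}_{\varpi_H}\!\left[\Delta_\epsilon^{n}\right] + R_{N}.
\end{equation*}
By Lemma \ref{lem:moment_scaling} only the $n=1,2$ terms contribute at order $\epsilon^{2k}$, and every moment with $n \ge 3$ is $O(\epsilon^{4k})$. Performing the same Taylor expansion against $\mathcal{N}(-\tfrac{1}{2}\alpha\epsilon^{2k},\alpha\epsilon^{2k})$ yields an identical parity structure for the Gaussian raw moments. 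Choosing $\alpha = \kappa_{2}/\epsilon^{2k}$ matches the variance exactly, and Corollary \ref{cor:leading_cumulants} then forces $-\tfrac{1}{2}\alpha\epsilon^{2k} = \kappa_{1} + O(\epsilon^{4k})$, so that the $n=0,1,2$ terms of the two expansions agree up to $O(\epsilon^{4k})$. The remaining $n \ge 3$ terms are individually $O(\epsilon^{4k})$ on both sides, so the total discrepancy is $O(\epsilon^{4k})$, which coincides with the stated $O(\epsilon^{2k+4})$ for the symmetric integrators of interest ($k \ge 2$).

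Equivalently, and perhaps more cleanly, one can work with characteristic functions. The cumulant expansion
\begin{equation*}
\log \mathbb{E}_{\varpi_{H}}\!\left[e^{i t \Delta_{\epsilon}}\right] = i t \kappa_{1} - \tfrac{1}{2} t^{2} \kappa_{2} + \sum_{n \ge 3} \frac{(i t)^{n}}{n!} \kappa_{n}
\end{equation*}
differs from the Gaussian exponent $i t \mu - \tfrac{1}{2} t^{2} \sigma^{2}$ by $O(\epsilon^{4k})$ at every fixed $t$, by Lemma \ref{lem:cumulant_scaling} and Corollary \ref{cor:leading_cumulants}. Fourier inversion against $\hat{f}$ transfers this discrepancy directly to $\mathbb{E}_{\varpi_{H}}[f(\Delta_{\epsilon})]$, bypassing a separate truncation of the Taylor series.

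The main obstacle is controlling the tail of either the Taylor remainder $R_{N}$ or the inverse Fourier integral uniformly in $\epsilon$, so that the pointwise $O(\epsilon^{4k})$ bound on the cumulant expansion genuinely translates into an $O(\epsilon^{4k})$ bound on the expectation and is not inflated by an $\epsilon$-independent divergence at large $|\Delta_{\epsilon}|$. This reduces to combining the moment estimates of Lemma \ref{lem:moment_scaling} with suitable smoothness/decay of $f$ and its derivatives. A secondary subtlety is justifying the specific form $\mathcal{O}(\epsilon^{2k+4})$ rather than the cruder $\mathcal{O}(\epsilon^{4k})$; I expect this traces to the $\epsilon^{k+2}$ correction in the modified Hamiltonian expansion perturbing $\kappa_{1}$ and $\kappa_{2}$ at the next even order, which, combined with the parity of symmetric integrators, sharpens the bound to the stated form.
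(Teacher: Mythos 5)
Your argument is correct in substance and rests on exactly the same ingredients as the paper's proof (Lemma \ref{lem:moment_scaling}, Lemma \ref{lem:cumulant_scaling}, Lemma \ref{lem:global_constraint}, Corollary \ref{cor:leading_cumulants}), but it runs the expansion from the opposite end. The paper expands the \emph{density} of $\Delta_{\epsilon}$ in a Gram--Charlier series about the moment-matched Gaussian, $\varpi(x) = \exp [ \sum_{n \ge 3} (-1)^{n} \kappa_{n} (\partial/\partial x)^{n} / n! ] \, \mathcal{N}(x \,|\, \kappa_{1}, \kappa_{2})$, and then integrates by parts repeatedly to transfer the derivatives onto $f$; the scaling $\kappa_{n}/(n!\,\kappa_{2}^{n/2}) \propto \sqrt{\kappa_{2}}$ or $1$ combined with the $\kappa_{2}^{n/2}$ Jacobian factors shows the first correction beyond the Gaussian integral is $\mathcal{O}(\kappa_{2}^{2}) = \mathcal{O}(\epsilon^{4k})$. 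You instead Taylor expand $f$ and match raw moments (or, equivalently, compare cumulant generating functions), which after the paper's integration by parts is literally the same term-by-term comparison, so the two derivations are dual rather than genuinely independent. Your version has the advantage that a finite Lagrange remainder at fourth order, bounded by $\|f^{(4)}\|_{\infty}\,\mathbb{E}_{\varpi_{H}}[\Delta_{\epsilon}^{4}]/4! = \mathcal{O}(\epsilon^{4k})$, makes the truncation honest for $f$ with bounded derivatives, whereas the paper treats the Gram--Charlier series purely formally; the tail-control issue you flag is real but is no worse in your approach than in the paper's. Two small points of comparison on the error order: the paper's own appendix derivation actually produces $\mathcal{O}(\kappa_{2}^{2}) = \mathcal{O}(\epsilon^{4k})$ and then records it as $\mathcal{O}(\epsilon^{2(k+2)})$, which agrees only at $k = 2$, so your observation that $\mathcal{O}(\epsilon^{4k})$ is what the cumulant scaling delivers is faithful to (indeed slightly sharper than) what the paper proves; and your matching of the Gaussian mean should invoke $\kappa_{1} = -\tfrac{1}{2}\kappa_{2} + \mathcal{O}(\epsilon^{4k})$, which follows from Lemma \ref{lem:global_constraint} together with $\kappa_{3}, \kappa_{4} = \mathcal{O}(\epsilon^{4k})$, exactly as in Corollary \ref{cor:leading_cumulants} --- your speculation that the $\epsilon^{k+2}$ correction to $G$ is what degrades $\epsilon^{4k}$ to $\epsilon^{2k+4}$ is plausible but is not something the paper's proof actually tracks.
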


\begin{theorem}
\label{thm:acceptance_expectation}

Let $\left( M, \omega, H\right)$ be a Hamiltonian system and consider
a $k$-th order symmetric symplectic integrator with the corresponding modified Hamiltonian
$\widetilde{H} = H + \epsilon^{k} \, G + \mathcal{O} ( \epsilon^{k + 2} )$
and Metropolis proposal $\Phi^{\widetilde{H}}_{\epsilon, \tau}$.  If the integrator 
is topologically stable and the asymptotic error is negligible, then the expectation of 
the Metropolis acceptance probability is given by
\begin{align*}
\mathbb{E}_{\varpi_{H}} \! \left[ a \! \left( \Delta_{\epsilon} \right) \right] 
=&
\int^{\infty}_{-\infty} \mathrm{d} \Delta_{\epsilon} \, 
\mathcal{N} \! \left( \Delta_{\epsilon} | 
- \frac{1}{2} \alpha \, \epsilon^{2 k}, \alpha \, \epsilon^{2 k} \right) 
a \! \left( \Delta_{\epsilon} \right)
+ \mathcal{O} ( \epsilon^{k} ),
\end{align*}
for some $\alpha \in \mathbb{R}$.

\end{theorem}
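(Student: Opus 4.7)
The strategy is to adapt Theorem \ref{thm:smooth_expectations} to the acceptance probability $a(\Delta_\epsilon) = \min(1, e^{\Delta_\epsilon})$, which satisfies all the hypotheses except smoothness: $a$ is only Lipschitz, with a cusp at $\Delta_\epsilon = 0$. This non-smoothness is precisely what degrades the error from $\mathcal{O}(\epsilon^{2k+4})$ to $\mathcal{O}(\epsilon^{k})$, and guides the proof.

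The first step is to exploit the uniform bound $a(\Delta_\epsilon) \in [0,1]$ to convert the expectation error into an $L^1$ distance between densities:
\begin{equation*}
\left| \mathbb{E}_{\varpi_H}\!\left[a(\Delta_\epsilon)\right] - \int a(\delta)\,\mathcal{N}\!\left(\delta\, |\, -\tfrac{1}{2}\alpha\epsilon^{2k},\, \alpha\epsilon^{2k}\right) d\delta \right| \le \|p_{\Delta_\epsilon} - g\|_{L^1},
\end{equation*}
where $p_{\Delta_\epsilon}$ is the density of $\Delta_\epsilon$ under $\varpi_H$ and $g$ is the claimed Gaussian density. This reduces the problem to a total-variation-type bound that no longer depends on the regularity of $a$.

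Next, I would build an Edgeworth-type expansion for $p_{\Delta_\epsilon}$ around $g$ using the cumulant machinery already developed. Corollary \ref{cor:leading_cumulants} matches the first two cumulants of $\Delta_\epsilon$ and $g$ to leading order, and Lemma \ref{lem:cumulant_scaling} controls all higher cumulants, with the leading correction coming from $\kappa_3 = \mathcal{O}(\epsilon^{4k})$. A Fourier-inversion of the characteristic function then yields a leading-order correction of $\tfrac{\kappa_3}{6} g'''(\delta)$ to the density. Computing $\|g'''\|_{L^1}$ by the change of variable $u = (\delta - \mu)/\sigma$ gives $\mathcal{O}(\sigma^{-3}) = \mathcal{O}(\epsilon^{-3k})$, so the leading $L^1$ contribution is $\mathcal{O}(\epsilon^{4k} \cdot \epsilon^{-3k}) = \mathcal{O}(\epsilon^{k})$, matching the claimed error order.

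The principal obstacle is rigorously justifying the Edgeworth expansion in this geometric setting: $\Delta_\epsilon$ is a single $\epsilon$-parametrized random variable, not a normalized sum of iid quantities, so classical probabilistic proofs do not transfer directly. A further delicate point is that while odd scaled cumulants $\kappa_{2j+1}/\sigma^{2j+1}$ all vanish at order $\epsilon^{k}$, the even scaled cumulants $\kappa_{2j}/\sigma^{2j}$ remain $\mathcal{O}(1)$ as $\epsilon \to 0$, so naive term-by-term bounds on the Edgeworth series do not close. The constraint $\mathbb{E}_{\varpi_H}[e^{\Delta_\epsilon}] = 1$ from Lemma \ref{lem:global_constraint} should be used to resum or cancel these would-be-dangerous even-cumulant contributions, which would complete the $L^1$ bound and hence the theorem.
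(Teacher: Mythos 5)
Your first step is where the argument breaks. Bounding the expectation error by $\|p_{\Delta_\epsilon} - g\|_{L^1}$ replaces the specific test function $a$ by the worst bounded test function, i.e.\ by the total variation distance, and that distance is generically $\Theta(1)$, not $\mathcal{O}(\epsilon^{k})$. The reason is visible in your own closing paragraph: by Lemma \ref{lem:cumulant_scaling} the even cumulants scale as $\kappa_{2j} \propto \epsilon^{2jk} = \mathcal{O}(\kappa_2^{j})$, so the standardized even cumulants $\kappa_{2j}/\kappa_2^{j}$ are $\mathcal{O}(1)$; in particular the excess kurtosis $\kappa_4/\kappa_2^2$ does not vanish as $\epsilon \to 0$, the standardized law of $\Delta_\epsilon$ does not converge to a standard normal, and the corresponding Edgeworth term $\tfrac{\kappa_4}{4!} g^{(4)}$ has $L^1$ norm $\mathcal{O}(\kappa_4 \kappa_2^{-2}) = \mathcal{O}(1)$, dominating the $\kappa_3$ term you single out (whose $L^1$ contribution is indeed $\mathcal{O}(\epsilon^{k})$). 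No appeal to Lemma \ref{lem:global_constraint} can repair this: $\mathbb{E}_{\varpi_H}[e^{\Delta_\epsilon}]=1$ is a single scalar relation among the cumulants, $\sum_n \kappa_n/n! = 0$; it pins down $\kappa_1 \approx -\kappa_2/2$ but leaves the $\mathcal{O}(1)$ standardized $\kappa_4, \kappa_6, \dots$ intact, and a reduction that must hold uniformly over all bounded test functions cannot beat an $\mathcal{O}(1)$ total variation gap.

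The regularity of $a$ that the $L^1$ reduction discards is exactly what the paper's proof uses. It runs the Gram--Charlier machinery of Theorem \ref{thm:smooth_expectations} but keeps $a$ inside the integral and integrates by parts repeatedly: each derivative transferred onto $a(\sqrt{\kappa_2}\, y + \kappa_1)$ costs a factor $\sqrt{\kappa_2} = \mathcal{O}(\epsilon^{k})$, which tames the $\mathcal{O}(1)$ standardized even cumulants (the order-$m$ term contributes $\tfrac{\kappa_m}{m!\,\kappa_2^{m/2}} \cdot \kappa_2^{m/2} = \tfrac{\kappa_m}{m!} = \mathcal{O}(\epsilon^{4k})$ for every $m \ge 3$), while the cusp at $\Delta_\epsilon = 0$ obstructs the full chain of integrations by parts and leaves boundary terms proportional to $\kappa_2^{j/2}\, H_{m-j-1}(\sqrt{\kappa_2}/2)\, \phi(\sqrt{\kappa_2}/2)$ with $j \ge 1$. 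Those boundary terms, beginning at order $\sqrt{\kappa_2} = \epsilon^{k}$, are the actual source of the $\mathcal{O}(\epsilon^{k})$ error in the statement, not the third cumulant. (If you only want the literal statement, an elementary route exists: $0 \le 1 - a(\Delta_\epsilon) \le |\Delta_\epsilon|$ pointwise, so both $\mathbb{E}_{\varpi_H}[a]$ and the Gaussian integral equal $1 + \mathcal{O}(\sqrt{\kappa_2}) = 1 + \mathcal{O}(\epsilon^{k})$; but even that uses the Lipschitz structure of $a$, which is precisely the information your first inequality throws away.)
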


\section{Approximating Bounds and the Step Size Optimization Criterion}
\label{sec:approximate_bounds}

The approximation expectations in Theorem \ref{thm:smooth_expectations}
and \ref{thm:acceptance_expectation} immediately admit universal,
approximation bounds on the cost of a basic Hamiltonian Monte Carlo
transition, and minimizing these bounds provides a correspondingly
universal strategy for tuning the integrator step size.

\begin{theorem}
\label{lem:cost}
Provided that the symmetric symplectic integrator is topologically stable and the 
asymptotic error is negligible, the cost of a Hamiltonian Monte Carlo implementation 
with Metropolis proposal $\Phi^{\widetilde{H}}_{\epsilon, \tau}$ is bounded by functions 
of the average Metropolis acceptance probability, 
$a = \mathbb{E}_{\varpi_{H}} \left[ a \! \left( q, p \right) \right]$,
\begin{equation*}
\frac{1}{ \epsilon \! \left( a \right) a } + \mathcal{O} ( \epsilon^{k} ) \le
\frac{1}{\tau} \frac{ \left< C \right> }{ C_{\mathrm{iteration}} } 
\le 
\frac{\frac{a}{2} + \Phi \! \left( - 3 \Phi^{-1} \! \left( a / 2  \right) \right) 
\exp \! \left( 4 \left( \Phi^{-1} \! \left( a /2  \right) \right)^{2} \right) }
{ \epsilon \! \left( a \right) }
+ \mathcal{O} ( \epsilon^{k} ),
\end{equation*}
where
\begin{equation*}
\epsilon \! \left( a \right) = 
\left[ \sqrt{ \frac{ 2 }{ \alpha } } \Phi^{-1} \! \left( 1 - \frac{a}{2} \right) \right]^{1/k}
\end{equation*}
for some $\alpha \in \mathbb{R}$.

\end{theorem}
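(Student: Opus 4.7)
The plan is to begin from the Jensen-based bounds preceding equation \eqref{expectation_bounds}, which after dividing by $\tau\, C_{\mathrm{iteration}}$ read
$$\frac{1}{\epsilon\, \mathbb{E}_{\varpi_H}\!\left[a(q,p)\right]} \le \frac{\langle C \rangle}{\tau\, C_{\mathrm{iteration}}} \le \frac{1}{\epsilon}\, \mathbb{E}_{\varpi_H}\!\left[\frac{1}{a(q,p)}\right],$$
so that the theorem reduces to evaluating two canonical expectations of functions of $\Delta_\epsilon$ under the Gaussian approximation supplied by Theorem \ref{thm:acceptance_expectation}. Throughout I would write $\sigma^{2} = \alpha\,\epsilon^{2k}$ so that $\Delta_\epsilon$ is approximately $\mathcal{N}\!\left(-\tfrac{1}{2}\sigma^{2},\sigma^{2}\right)$, reducing everything to one-dimensional Gaussian integrals.

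For the lower bound I would compute $\mathbb{E}[\min(1,e^{\Delta})]$ by splitting the integral at $\Delta = 0$: the untruncated piece contributes $P(\Delta\ge 0) = \Phi(-\sigma/2)$, and completing the square in the identity $e^{\Delta}\,\phi(\Delta;-\sigma^{2}/2,\sigma^{2}) = \phi(\Delta;\sigma^{2}/2,\sigma^{2})$ makes the truncated piece evaluate to the same $\Phi(-\sigma/2)$. This gives the clean identity $a = 2\Phi(-\sigma/2)$, which inverts to $\sigma = 2\,\Phi^{-1}(1 - a/2)$; reading $\epsilon$ off of $\sigma^{2} = \alpha\epsilon^{2k}$ then produces the formula for $\epsilon(a)$ in the theorem statement (up to the stated scaling) and the corresponding lower bound $1/(\epsilon(a)\,a)$, with residual $\mathcal{O}(\epsilon^{k})$ inherited from Theorem \ref{thm:acceptance_expectation}.

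For the upper bound I would perform the analogous splitting for $1/a(\Delta) = \max(1,e^{-\Delta})$. The $\Delta\ge 0$ piece immediately contributes $\Phi(-\sigma/2) = a/2$. Completing the square on $e^{-\Delta}\,\phi(\Delta;-\sigma^{2}/2,\sigma^{2})$ shifts the mean to $-3\sigma^{2}/2$ with an overall prefactor $e^{\sigma^{2}}$, so integrating the resulting shifted Gaussian over $(-\infty,0)$ yields $e^{\sigma^{2}}\,\Phi(3\sigma/2)$. Substituting $\sigma = -2\,\Phi^{-1}(a/2)$ converts this into precisely $\Phi\!\left(-3\,\Phi^{-1}(a/2)\right)\,\exp\!\left(4\left(\Phi^{-1}(a/2)\right)^{2}\right)$, and combining with the $a/2$ piece and the factor $1/\epsilon(a)$ reproduces the right-hand side of the stated bound.

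The principal obstacle is that $1/a(\Delta)$ inherits the same cusp at $\Delta = 0$ as $a(\Delta)$, so the smooth-function approximation of Theorem \ref{thm:smooth_expectations}, with its tight $\mathcal{O}(\epsilon^{2k+4})$ error, cannot be invoked directly. I would instead follow the strategy used to prove Theorem \ref{thm:acceptance_expectation}: split the expectation at the cusp, apply the moment and cumulant scalings of Lemmas \ref{lem:moment_scaling} and \ref{lem:cumulant_scaling} together with Corollary \ref{cor:leading_cumulants} on each smooth half, and accept the weaker $\mathcal{O}(\epsilon^{k})$ remainder that the single non-smoothness forces on both sides of the final bound.
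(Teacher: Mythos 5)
Your proposal is correct and follows essentially the same route as the paper: start from the Jensen bounds in \eqref{expectation_bounds}, evaluate $\mathbb{E}_{\varpi_{H}}[a]$ and $\mathbb{E}_{\varpi_{H}}[1/a]$ as Gaussian integrals under the approximation of Theorem \ref{thm:acceptance_expectation}, invert $a = 2\Phi(-\sigma/2)$ to get $\epsilon(a)$, and substitute. The only cosmetic difference is that the paper evaluates $\mathbb{E}_{\varpi_{H}}[1/a]$ via the identity $1/a(\Delta_{\epsilon}) = 1 + e^{-\Delta_{\epsilon}} - a(-\Delta_{\epsilon})$ (so the cusp is handled by reusing Theorem \ref{thm:acceptance_expectation} rather than by re-running its integration-by-parts argument, as you propose), and its constants differ from yours by an immaterial rescaling of the free parameter $\alpha$.
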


\begin{proof}

Approximations for the both the lower and upper bounds in \eqref{expectation_bounds} 
are given immediately by carrying out the Gaussian integrals analytically 
\citep{RobertsEtAl:1997},
\begin{align*}
\mathbb{E}_{\varpi_{H}} \left[ a \! \left( q, p \right) \right]
&= 
2 \, \Phi \! \left( - \sqrt{ \frac{\alpha}{2} } \epsilon^{k} \right)
+ \mathcal{O} ( \epsilon^{k} )
\\
\\
\mathbb{E}_{\varpi_{H}} \left[ 1 / a \! \left( q, p \right) \right] 
&= 
\mathbb{E}_{\varpi_{H}} \left[ 1 + e^{- \Delta_{\epsilon}} \right] 
- \mathbb{E}_{\varpi_{H}} \left[ a \! \left( -\Delta_{\epsilon} \right) \right]
\\
&= 
\Phi \! \left( - \sqrt{ \frac{\alpha}{2} } \epsilon^{k} \right) 
+ \Phi \! \left( 3 \sqrt{ \frac{\alpha}{2} } \epsilon^{k} \right) 
e^{2 \, \alpha \, \epsilon^{2 k}}
+ \mathcal{O} ( \epsilon^{k} ).
\end{align*} 

Following previous work \citep{RobertsEtAl:1997, BeskosEtAl:2013} we now consider 
the cost as a function of not the step size but rather the average acceptance probability,
\begin{equation} 
\label{eqn:acceptVsStepsize}
a \equiv \mathbb{E}_{\varpi_{H}} \left[ a \! \left( q, p \right) \right] 
= 2 \, \Phi \! \left( - \sqrt{ \frac{\alpha}{2} } \epsilon^{k} \right) + \mathcal{O} ( \epsilon^{k} ),
\end{equation}
Solving for the step size yields
\begin{equation*}
\epsilon \! \left( a \right) = 
\left[ \sqrt{ \frac{ 2 }{ \alpha } } \Phi^{-1} \! \left( 1 - \frac{a}{2} \right) \right]^{1/k},
\end{equation*}
and subsequently substituting into the bounds gives
\begin{equation*}
\frac{1}{ a } + \mathcal{O} ( \epsilon^{k} )
\leq 
\mathbb{E}_{\varpi} \left[ \frac{1}{\mathbb{E}_{\varpi_{q}} \left[ a \! \left( q, p \right) \right] } \right] 
\leq 
\frac{a}{2} + \Phi \! \left( - 3 \Phi^{-1} \! \left( a / 2  \right) \right) 
\exp \! \left( 4 \left( \Phi^{-1} \! \left( a /2  \right) \right)^{2} \right)
+ \mathcal{O} ( \epsilon^{k} ),
\end{equation*}
or in terms of the cost,
\begin{equation*}
\frac{1}{ \epsilon \! \left( a \right) a } + \mathcal{O} ( \epsilon^{k} ) \le
\frac{1}{\tau} \frac{ \left< C \right> }{ C_{\mathrm{iteration}} } 
\le 
\frac{\frac{a}{2} + \Phi \! \left( - 3 \Phi^{-1} \! \left( a / 2  \right) \right) 
\exp \! \left( 4 \left( \Phi^{-1} \! \left( a /2  \right) \right)^{2} \right) }
{ \epsilon \! \left( a \right) }
+ \mathcal{O} ( \epsilon^{k} ),
\end{equation*}
as desired.

\end{proof}

Provided that the approximations hold, we can determine an optimal average
acceptance probability, and hence a criterion for tuning the integrator step size,
by minimizing these bounds.  The dependence on the particular problem is 
isolated to the $\alpha^{k / 2}$ scaling common to both bounds and hence does
not effect the resulting optimimum; consequently the optimal average acceptance 
probability is the same for all choices of the potential and kinetic energies and 
hence defines a universal tuning strategy.

For example, with a second-order symplectic integrator the lower bound is
minimized at $a \! \left( \epsilon \right) = 0.651$ while the upper bound is minimized
at $a \! \left( \epsilon \right) = 0.801$.  Because the bounds are relatively
flat between these two optima any target acceptance probability between
$0.6 \lesssim a \! \left( \epsilon \right) \lesssim 0.9$ essentially yields equivalent
results (Figure \ref{fig:bounds}).

\begin{figure}
\centering
\includegraphics[width=3in]{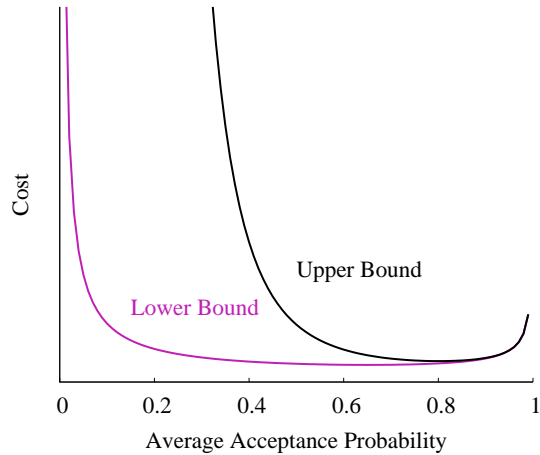}
\caption{The cost of a Hamiltonian Monte Carlo implementation using the Metropolis
proposal $\Phi^{\widetilde{H}}_{\epsilon, \tau}$ and a second-order symplectic integrator 
is squeezed between two approximate bounds that, up to a common scale, are universal 
for any target distribution and provide a general criteria for the optimal average acceptance 
probability or, equivalently, the integrator step size.  Maximizing the approximate lower bound 
yields $a = 0.651$ and minimizing the approximate upper bound suggests $a = 0.801$; given 
that the approximate bounds are relatively flat, however, any value in between yields 
near-optimal results.}
\label{fig:bounds} 
\end{figure}

\section{Limitations of the Step Size Optimization Criterion}

When applying this optimization criterion we have to be careful to account for both the
fundamental limitations in its construction and the possibility that the underlying
assumptions may fail.

For example, although the cost function is applicable to both a constant integration time 
and an integration time chosen uniformly over some static distribution it is not applicable
to an integration time that varies with the initial position, as would be necessary for a 
dynamically optimized integration time \citep{Betancourt:2013a}.  Technically this 
precludes implementations of Hamiltonian Monte Carlo like the No-U-Turn sampler
\citep{HoffmanEtAl:2014}, although in practice it has performed well as the default 
tuning mechanism for Stan \citep{Stan:2014}.

Similarly, the optimization criterion is only as good as the approximate bounds
from which is it constructed.  One source of error in these bounds are the
high-order contributions beyond the Gaussian integral, although empirically
these appear to be small for simple models (Figure \ref{fig:gaussNumerics}).
Because more complex models typically require smaller step sizes to achieve
the same average acceptance probability, the higher-order contributions should
continue to be negligible.

\begin{figure}
\centering
\subfigure[]{\includegraphics[width=2.75in]{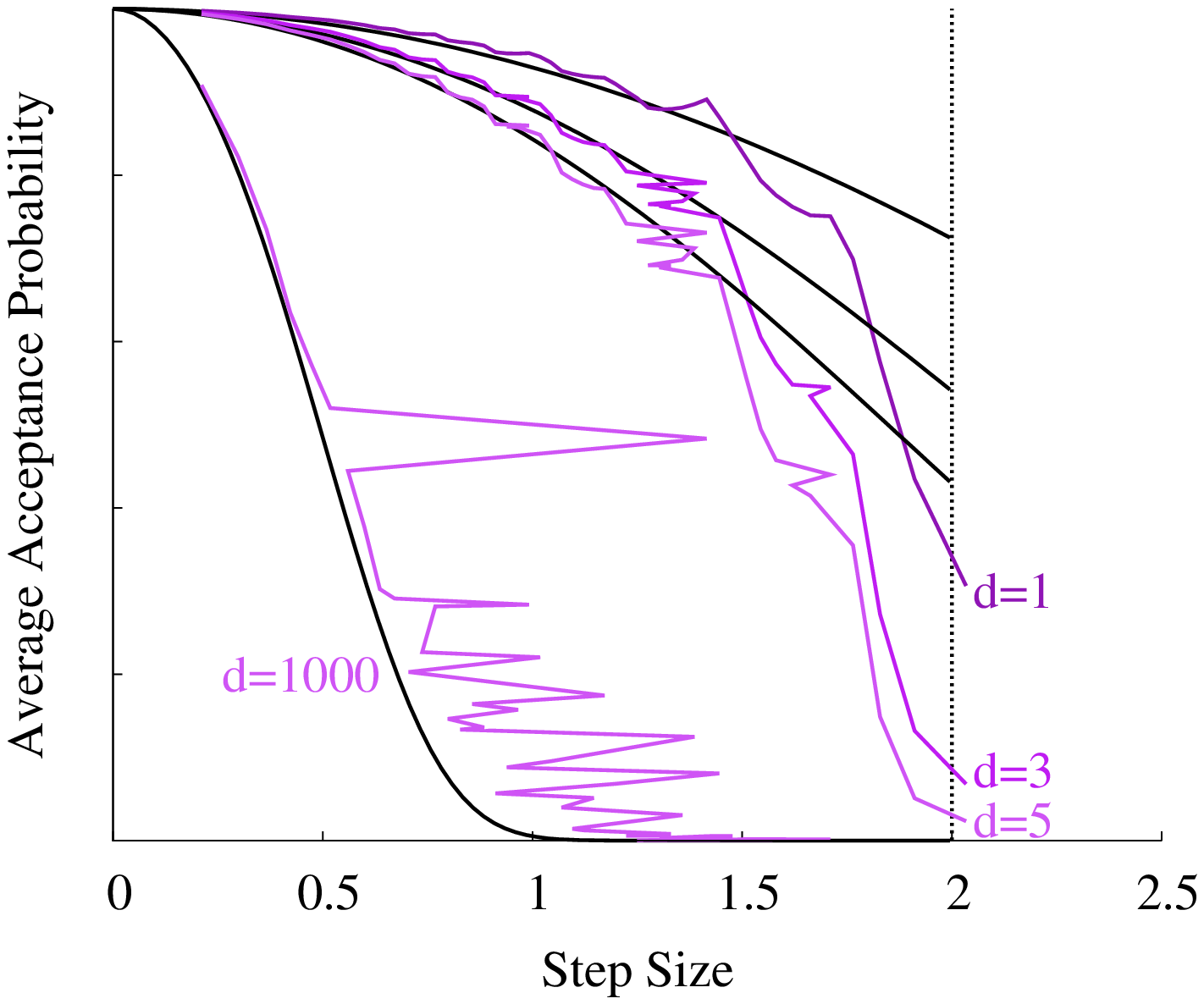}}
\subfigure[]{\includegraphics[width=2.75in]{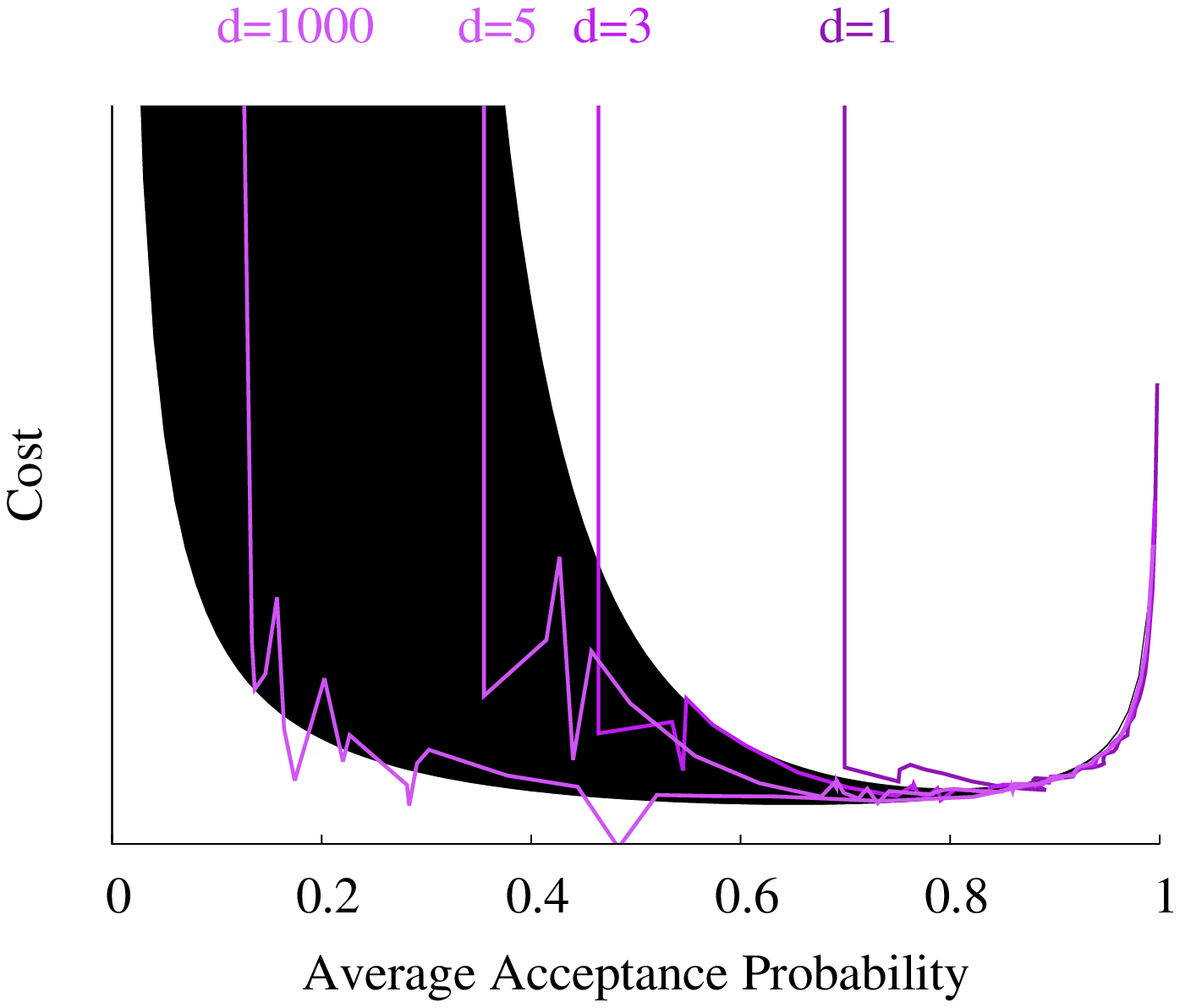}}
\caption{In experiments of Euclidean Hamiltonian Monte Carlo, implemented 
with a unit metric and second-order leapfrog integrator and targeting a product
of independent Gaussian distributions, both (a) the relationship between integrator
step size and average acceptance probability and (b) the relationship between
the average acceptance probability and the cost are in excellent agreement
with the approximations.  In this case any higher-order corrections are
negligible, at least until the step size approaches $\epsilon = 2$ at which point
the integrator becomes topologically unstable. }
\label{fig:gaussNumerics}
\end{figure}

A more serious concern with the approximations is not in the neglected
higher-order contributions but rather the assumption of topological stability and
vanishing asymptotic error.  In complex models the step sizes necessary for
these conditions to hold can be much smaller than the step size motivated by
the optimization strategy.  Fortunately, when these conditions do not hold the
integrator becomes unstable, manifesting almost immediately in numerical divergences
that pull the state towards infinity and are readily incorporated into user-facing diagnostics.  
Consequently our initial optimization strategy can be made robust by monitoring these
diagnostics and increasing the target average acceptance probability until no divergences 
occur (Figure \ref{fig:funnelScan}). This more robust strategy has proven especially
effective for hierarchical models that are particularly sensitive to this pathology 
\citep{BetancourtEtAl:2015}.

\begin{figure}
\centering
\subfigure[]{\includegraphics[width=2.75in]{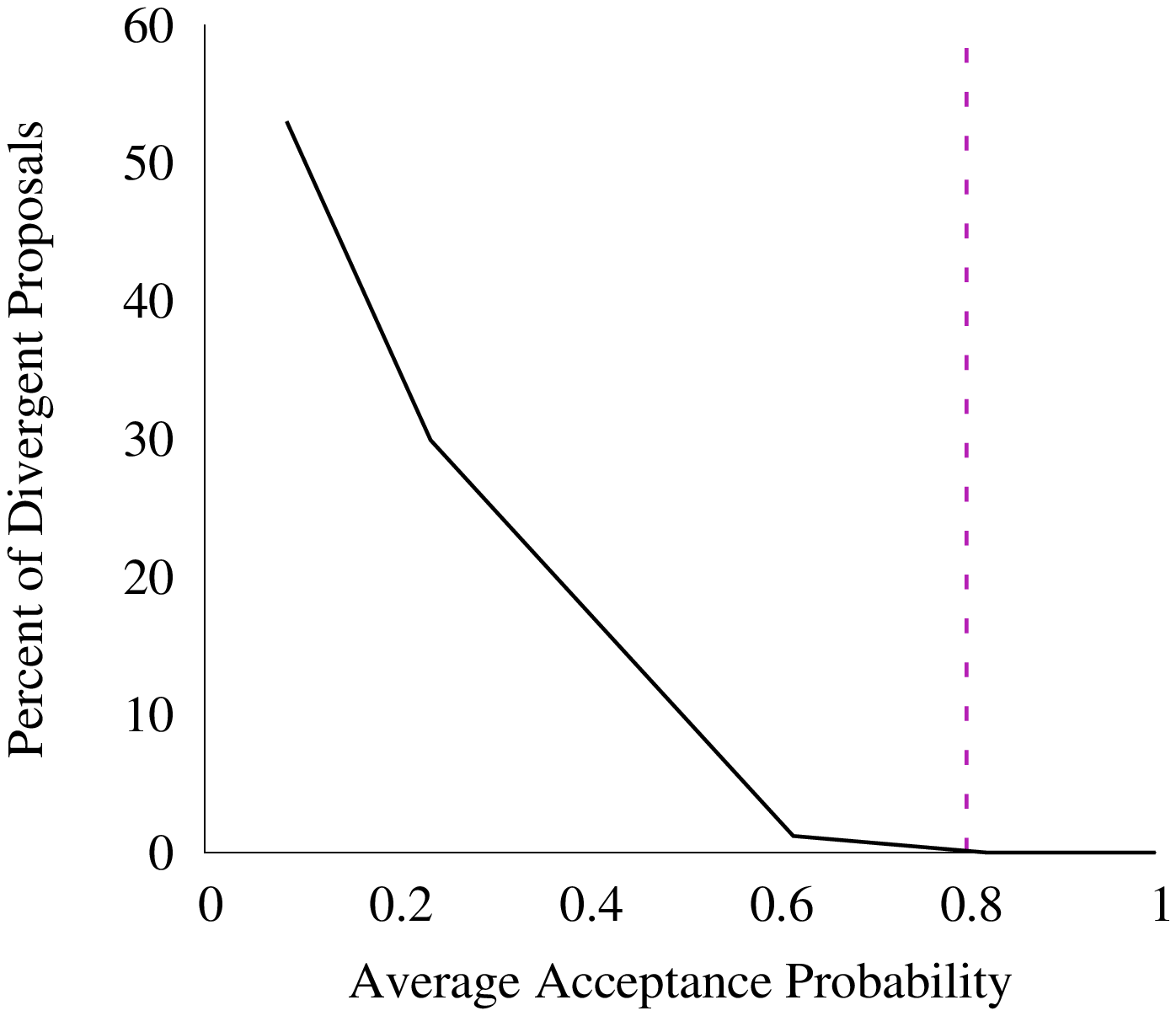}}
\subfigure[]{\includegraphics[width=2.75in]{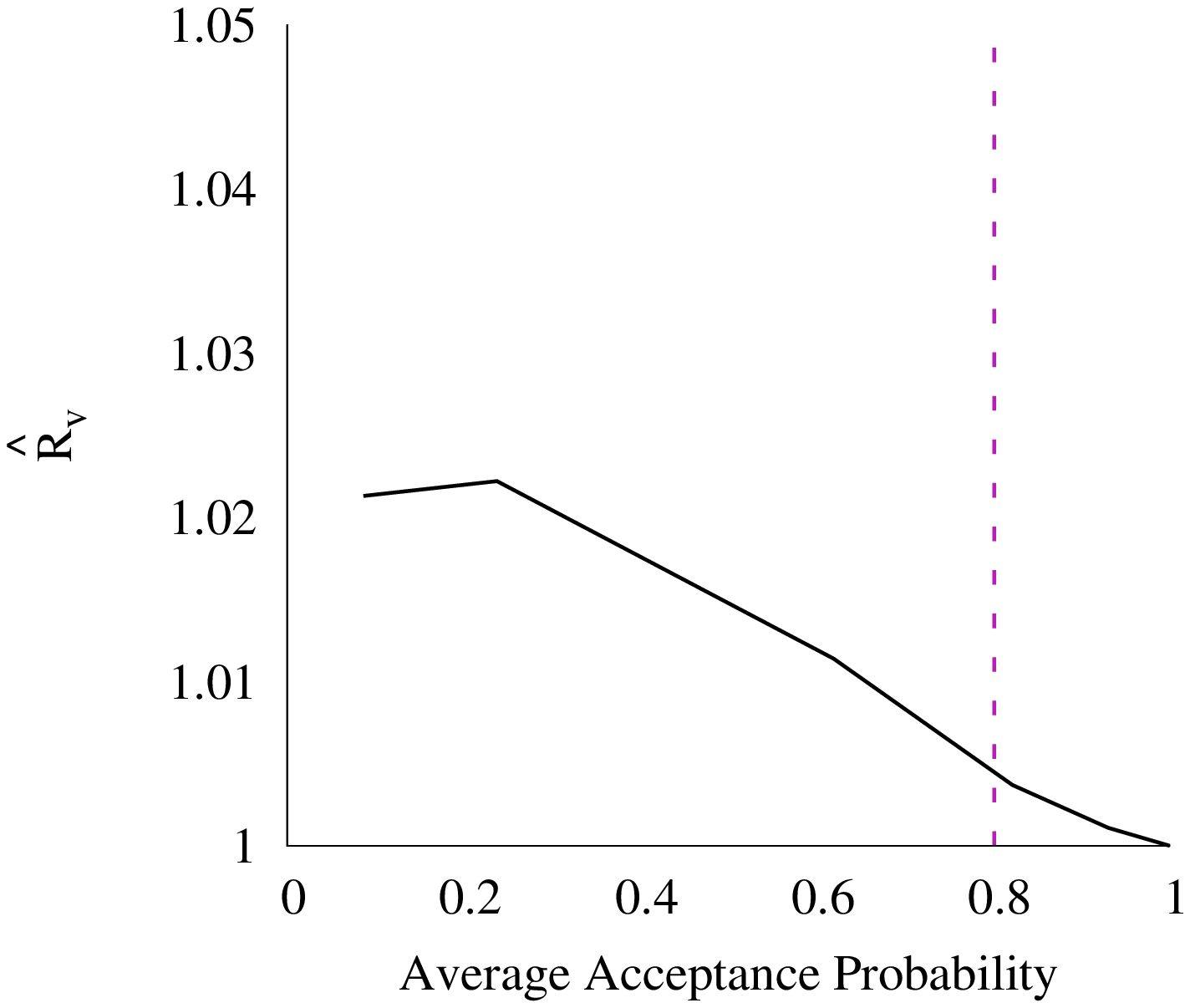}}
\caption{In complex problems careful monitoring of the step size optimization is critical 
towards ensuring a robust result.  For example, in the (50 + 1)-dimensional funnel distribution
\cite{Neal:2003} (a) numerical divergences begin to appear within the window of 
nominally-optimal acceptance probabilities which manifest as (c) an increasing potential 
scale reduction factor~\citep{GelmanEtAl:1992, StanMan:2014}, $\hat{R}_{v}$, that 
incorporates exact samples from the posterior.  As the model complexity increases these
inconsistencies become more severe and can result in biased Markov chains unless the 
optimization criterion is relaxed.
}
\label{fig:funnelScan}
\end{figure}

\section{Conclusions and Future Work}

By appealing to the underlying geometry of Hamiltonian Monte Carlo we have constructed
a robust, universal scheme for the optimal tuning of the integrator step size, for
for simple implementations of Hamiltonian Monte Carlo that use an approximate
Hamiltonian flow to construct only a single Metropolis proposal.

The constructing of formal optimization criteria for more elaborate implementations of the 
algorithm, including windowed samplers \citep{Neal:1994}, such as the No-U-Turn Sampler 
\citep{HoffmanEtAl:2014}, that subsample from each approximate trajectory, can also be placed 
into this geometric framework and utilize the theorems proved in this paper.  Similarly, the
understanding of canonical expectations we have built in this paper is applicable to 
Rao-Blackwellization schemes that keep all points along each approximate trajectory, using 
weights to correct for the error in the symplectic integrator. 

\section{Acknowledgements}

We warmly thank Chris Wendl for illuminating discussions on various aspects of symplectic 
geometry used in the proofs and Elena Akhmatskaya for helpful comments.  
Michael Betancourt is supported under EPSRC grant EP/J016934/1, 
Simon Byrne is a EPSRC Postdoctoral Research Fellow under grant EP/K005723/1, 
and Mark Girolami is an EPSRC Established Career Research Fellow under grant EP/J016934/1.

\setcounter{section}{0}
\renewcommand{\thesection}{\Alph{section}}

\section{Proofs} \label{apx:main}

Here we present proofs for the Lemmas and Theorems appearing in Sections 
\ref{sec:accuracy_of_canonical_expectations} and
\ref{sec:accuracy_of_canonical_expectations_of_hamiltonian_error}, as well 
as a rigorous calculation of the average Hamiltonian error for the example in
Section \ref{sec:accuracy_of_canonical_expectations_of_hamiltonian_error}.

\subsection{The Accuracy of Canonical Expectations}
\label{apx:accuracy}

\newtheorem*{thm:micro_canonical}{Theorem \ref{thm:micro_canonical}}
\begin{thm:micro_canonical}

Let $\left( M, \omega, H\right)$ be a Hamiltonian system and consider
a $k$-th order symmetric symplectic integrator with the corresponding modified Hamiltonian
$\widetilde{H} = H + \epsilon^{k} \, G + \mathcal{O} ( \epsilon^{k + 2} )$.  
If the integrator is topologically stable and the asymptotic error is negligible, 
then the difference in the microcanonical and modified microcanonical expectations for 
any smooth function, $f : M \rightarrow \mathbb{R}$, is given by
\begin{align*}
\left< f \right>_{\sLS}- \left< f \right>_{\sMLS} 
=&
+ \epsilon^{k}
\left< 
f \, \mathrm{d} G \! \left( \vec{v} \right) + G \, \mathrm{d} f \! \left( \vec{v} \right) 
+ f \, G \left( \frac{\partial v^{i} }{ \partial q^{i} } + \frac{\partial v_{i} }{ \partial p_{i} } \right) 
\right>_{ \sMLS } 
\\
& - \epsilon^{k}
\Big< f \Big>_{\sMLS} 
\left< 
\mathrm{d} G \! \left( \vec{v} \right)
+ G \left( \frac{\partial v^{i} }{ \partial q^{i} } + \frac{\partial v_{i} }{ \partial p_{i} } \right) 
\right>_{ \sMLS }
+ \mathcal{O} ( \epsilon^{k + 2} ),
\end{align*}
where $\vec{v}$ is any transverse vector field satisfying $\mathrm{d} H \! \left( \vec{v} \right) = 1$.

\end{thm:micro_canonical}

\begin{figure}
\setlength{\unitlength}{0.1in} 
\centering
\begin{picture}(45, 30)
\put(2.55, 0) { \includegraphics[width=4in]{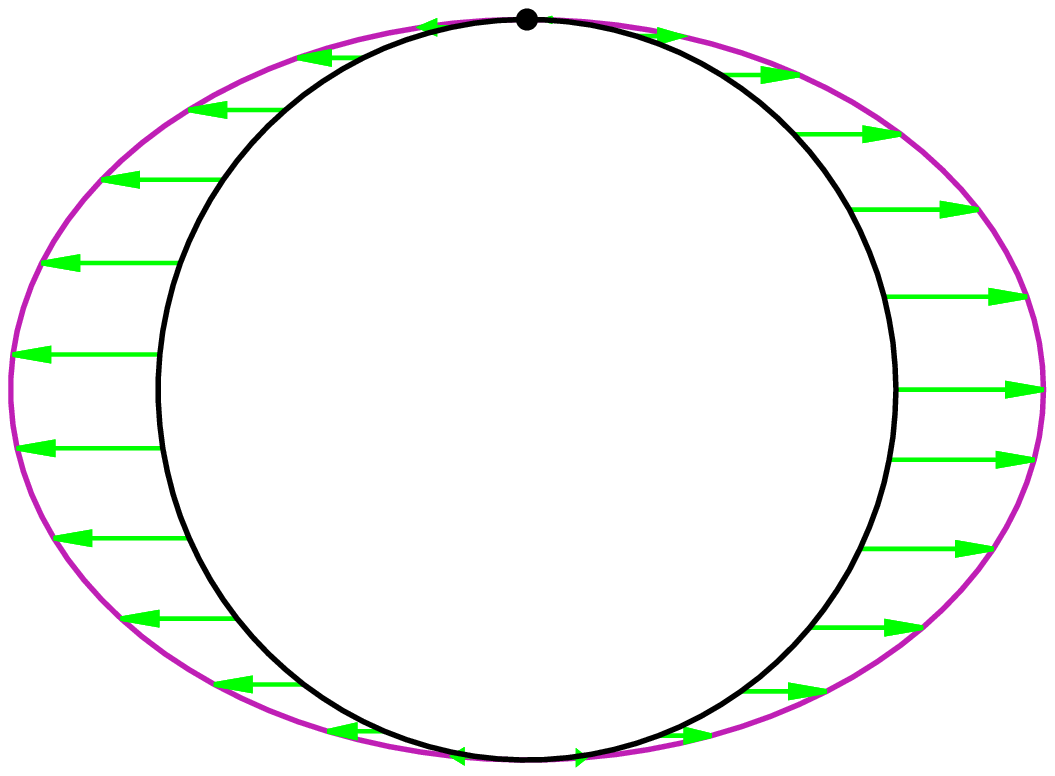} }
\put(14, 16) { \makebox(0, 0) { $\LS$ } }
\put(2, 16) { \makebox(0, 0) { $\MLS$ } }
\put(35.5, 11.5) { \makebox(0, 0){ $G \, \vec{v}$ } }
\end{picture} 
\caption{Geometrically, a level set \LS can be warped into a level set \MLS by dragging it 
along the flow generated by $G \, \vec{v}$, admitting a comparison of expectations along 
the two surfaces.}
\label{fig:drag} 
\end{figure}

\begin{proof}

Provided that the integrator is stable and at the topologies of the level sets are the same,
we can compare the two expectations by perturbing the true level set into the modified 
one by dragging it along $G \, \vec{v}$ (Figure \ref{fig:drag}).  In particular, dragging the 
integrand gives
\begin{align*}
\left( f \, \vec{v} \cOmega \right)'
&= \int_{0}^{\epsilon^{k}} \mathrm{d}t \, \mathcal{L}_{ G \, \vec{v} } \left( f \, \vec{v} \cOmega \right)
\\
&= f \, \vec{v} \cOmega 
+ \epsilon^{k} \, \mathcal{L}_{ G \, \vec{v}} \left( f \, \vec{v} \cOmega \right) 
+ \mathcal{O} ( \epsilon^{k + 2} ).
\end{align*}
  
The Lie derivative evaluates to
\begin{align*}
\mathcal{L}_{ G \, \vec{v}} \left( f \, \vec{v} \cOmega \right) 
&= \mathcal{L}_{G \, \vec{v}} \left( f \right) \vec{v} \cOmega
+ f \mathcal{L}_{G \, \vec{v}} \left( \vec{v} \cOmega \right) 
\\
&= \mathcal{L}_{G \, \vec{v}} \left( f \right) \vec{v} \cOmega
+ f \big[ G \, \vec{v}, \vec{v} \big] \cOmega
+ f \, \vec{v} \, \lrcorner \, \mathcal{L}_{G \, \vec{v}} \, \Omega,
\end{align*}
where
\begin{align*}
\mathcal{L}_{G \, \vec{v}} \left( f \right)
&= G \, \mathrm{d} f \! \left( \vec{v} \right),
\\
\\
\big[ G \, \vec{v}, \vec{v} \big] 
&= G \big[ \vec{v}, \vec{v} \big] - \mathrm{d} G \! \left( \vec{v} \right) \vec{v} \\
&= - \mathrm{d} G \! \left( \vec{v} \right) \vec{v},
\intertext{and}
\mathcal{L}_{G \, \vec{v}} \, \Omega
&= G \, \mathcal{L}_{\vec{v} } \, \Omega + \mathrm{d} G \wedge \left( \vec{v} \cOmega \right) \\
&= G \, \vec{v} \, \lrcorner \, \mathrm{d} \Omega + G \, \mathrm{d} \! \left( \vec{v} \cOmega \right) 
+ \mathrm{d} G \wedge \left( \vec{v} \cOmega \right) \\
&= G \, \mathrm{d} \! \left( \vec{v} \cOmega \right) 
+ \mathrm{d} G \wedge \left( \vec{v} \cOmega \right) \\
&= G \, \mathrm{d} \! \left( \vec{v} \cOmega \right) 
+ \mathrm{d} G \! \left( \vec{v} \right) \cOmega
- \vec{v} \, \lrcorner \, \left( \mathrm{d} G \wedge \Omega \right) \\
&= G \, \mathrm{d} \! \left( \vec{v} \cOmega \right) 
+ \mathrm{d} G \! \left( \vec{v} \right) \cOmega.
\end{align*}
Hence
\begin{align*}
\mathcal{L}_{ G \, \vec{v} } \left( f \vec{v} \cOmega \right)
&= G \, \mathrm{d} f \! \left( \vec{v} \right) \vec{v} \cOmega
 - f \, \mathrm{d}G \! \left( \vec{v} \right) \vec{v} \cOmega
+ f \, G \, \vec{v} \, \lrcorner \, \mathrm{d} \left( \vec{v} \cOmega \right) 
+ f \, \mathrm{d} G\! \left( \vec{v} \right) \vec{v} \cOmega \\
&= G \, \mathrm{d} f \! \left( \vec{v} \right) \vec{v} \cOmega
+ f \, G \, \vec{v} \, \lrcorner \, \mathrm{d} \left( \vec{v} \cOmega \right),
\end{align*}
or, in canonical coordinates,
\begin{equation*}
\mathcal{L}_{ G \, \vec{v} } \left( f \vec{v} \cOmega \right)
= G \, \mathrm{d} f \! \left( \vec{v} \right) \vec{v} \cOmega
+ f \, G \left( \frac{\partial v^{i} }{ \partial q^{i} } + \frac{\partial v_{i} }{ \partial p_{i} } \right) \vec{v} \cOmega.
\end{equation*}
Together these give
\begin{equation*}
\left( f \, \vec{v} \cOmega \right)'
= \left( 
f 
+ \epsilon^{k} \, G \, \mathrm{d} f \! \left( \vec{v} \right) 
+ \epsilon^{k} \, f \, G \left( \frac{\partial v^{i} }{ \partial q^{i} } + \frac{\partial v_{i} }{ \partial p_{i} } \right) \right)
\vec{v} \cOmega 
+ \mathcal{O} ( \epsilon^{k + 2} ),
\end{equation*}

Before we can pull these terms onto $\MLS$ we have to relate them to the proper volume form, 
$\vec{u} \cOmega$.  Because
\begin{align*}
\vec{v} \cOmega 
&= \mathrm{d} \widetilde{H} \! \left( \vec{v} \right) \vec{u} \cOmega \\
&= \left( \mathrm{d} H \! \left( \vec{v} \right) + \epsilon^{k} \,  \mathrm{d} G \! \left( \vec{v} \right) \right) 
\vec{u} \cOmega \\
&= \left( 1 + \epsilon^{k} \, \mathrm{d} G \! \left( \vec{v} \right) \right) \vec{u} \cOmega,
\end{align*}
we have, to leading-order in the step size,
\begin{equation*}
\left( f \, \vec{v} \cOmega \right)'
= f \vec{u} \cOmega
+ \epsilon^{k} \left( f\, \mathrm{d} G \! \left( \vec{v} \right) + G \, \mathrm{d} f \! \left( \vec{v} \right) 
+ f\, G \left( \frac{\partial v^{i} }{ \partial q^{i} } + \frac{\partial v_{i} }{ \partial p_{i} } \right) \right)
\vec{u} \cOmega 
+ \mathcal{O} ( \epsilon^{k + 2} ),
\end{equation*}
Pulling back onto the modified level set gives
\begin{align*}
\int_{ \sLS } \iota^{*}_{E} \left( f \, \vec{v} \cOmega \right) 
=& 
\int_{ \sMLS } 
\tilde{\iota}^{*}_{E} \left( f \, \vec{v} \cOmega \right)' 
\\
=& 
\int_{ \sMLS }
\tilde{\iota}^{*}_{E} \! \left( f\, \vec{u} \cOmega \right) 
\\
&
+ \epsilon^{k} \int_{ \sMLS }
\tilde{\iota}^{*}_{E} \!  \left( \left( f\, \mathrm{d} G \! \left( \vec{v} \right) + G \, \mathrm{d} f \! \left( \vec{v} \right) 
+ f\, G \left( \frac{\partial v^{i} }{ \partial q^{i} } + \frac{\partial v_{i} }{ \partial p_{i} } \right) \right)
 \vec{u} \cOmega \right) 
+ \mathcal{O} ( \epsilon^{k + 2} ),
\end{align*}

Consequently the microcanonical expectation becomes
\begin{align*}
\left< f \right>_{ \sLS }
=& 
\frac{ \int_{\sLS} \iota^{*}_{E} \! \left( f \, \vec{v} \cOmega \right) }
{ \int_{\sLS} \iota^{*}_{E} \! \left( \vec{v} \cOmega \right) } \\
=&
\frac
{ \int_{ \sMLS } \tilde{\iota}^{*}_{E} \! \left( f \, \vec{u} \cOmega \right)
+ \epsilon^{k} \int_{ \sMLS }
\tilde{\iota}^{*}_{E} \! \left( \left( f \, \mathrm{d} G \! \left( \vec{v} \right) + G \, \mathrm{d} f \! \left( \vec{v} \right) 
+ f\, G \left( \frac{\partial v^{i} }{ \partial q^{i} } + \frac{\partial v_{i} }{ \partial p_{i} } \right) \right)
 \vec{u} \cOmega \right) }
{ \int_{ \sMLS } \tilde{\iota}^{*}_{E} \! \left( \vec{u} \cOmega \right)
+ \epsilon^{k} \int_{ \sMLS }
\tilde{\iota}^{*}_{E} \!  \left( \left( \mathrm{d} G \! \left( \vec{v} \right)
+ G \left( \frac{\partial v^{i} }{ \partial q^{i} } + \frac{\partial v_{i} }{ \partial p_{i} } \right) \right)
\vec{u} \cOmega \right) } 
\\
&+ \mathcal{O} ( \epsilon^{2k} )
\\
=&
\frac
{ \int_{ \sMLS } \tilde{\iota}^{*}_{E} \! \left( f \, \vec{u} \cOmega \right) }
{ \int_{ \sMLS } \tilde{\iota}^{*}_{E} \! \left( \vec{u} \cOmega \right) } \\
&+ \epsilon^{k}
\frac
{ \int_{ \sMLS }
\tilde{\iota}^{*}_{E} \!  \left( \left( f\, \mathrm{d} G \! \left( \vec{v} \right) + G \, \mathrm{d} f \! \left( \vec{v} \right) 
+ f\, G \left( \frac{\partial v^{i} }{ \partial q^{i} } + \frac{\partial v_{i} }{ \partial p_{i} } \right) \right)
 \vec{u} \cOmega \right) }
{ \int_{ \sMLS } \tilde{\iota}^{*}_{E} \! \left( \vec{u} \cOmega \right) } \\
& - \epsilon^{k}
\frac 
{ \int_{ \sMLS } \tilde{\iota}^{*}_{E} \! \left( f \, \vec{u} \cOmega \right) }
{ \int_{ \sMLS } \tilde{\iota}^{*}_{E} \! \left( \vec{u} \cOmega \right) }
\frac
{ \int_{ \sMLS }
\tilde{\iota}^{*}_{E} \!  \left( \left( \mathrm{d} G \! \left( \vec{v} \right)
+ G \left( \frac{\partial v^{i} }{ \partial q^{i} } + \frac{\partial v_{i} }{ \partial p_{i} } \right) \right)
 \vec{u} \cOmega \right) }
{ \int_{ \sMLS } \tilde{\iota}^{*}_{E} \! \left( \vec{u} \cOmega \right) }
+ \mathcal{O} ( \epsilon^{k + 2} ) 
\\
=&
\left< f \right>_{ \sMLS }
+ \epsilon^{k}
\left< 
f \, \mathrm{d} G \! \left( \vec{v} \right) + G \, \mathrm{d} f \! \left( \vec{v} \right) 
+ f\, G \left( \frac{\partial v^{i} }{ \partial q^{i} } + \frac{\partial v_{i} }{ \partial p_{i} } \right) 
\right>_{ \sMLS } 
\\
& - \epsilon^{k}
\left< f \right>_{ \sMLS}
\left< 
\mathrm{d} G \! \left( \vec{v} \right)
+ G \left( \frac{\partial v^{i} }{ \partial q^{i} } + \frac{\partial v_{i} }{ \partial p_{i} } \right) 
\right>_{ \sMLS }
+ \mathcal{O} ( \epsilon^{k + 2} ),
\end{align*}
or
\begin{align*}
\left< f \right>_{ \sLS } - \left< f \right>_{ \sMLS } 
=& 
+ \epsilon^{k}
\left< 
f \, \mathrm{d} G \! \left( \vec{v} \right) + G \, \mathrm{d} f \! \left( \vec{v} \right) 
+ f\, G \left( \frac{\partial v^{i} }{ \partial q^{i} } + \frac{\partial v_{i} }{ \partial p_{i} } \right) 
\right>_{ \sMLS } \\
& - \epsilon^{k}
\left< f \right>_{ \sMLS } 
\left< 
\mathrm{d} G \! \left( \vec{v} \right)
+ G \left( \frac{\partial v^{i} }{ \partial q^{i} } + \frac{\partial v_{i} }{ \partial p_{i} } \right) 
\right>_{ \sMLS }
+ \mathcal{O} ( \epsilon^{k + 2} ),
\end{align*}
as desired.

\end{proof}

\begin{corollary}
\label{cor:density_of_states}
Let $\left( M, \omega, H\right)$ be a Hamiltonian system and consider
a $k$-th order symmetric symplectic integrator with the corresponding modified Hamiltonian
$\widetilde{H} = H + \epsilon^{k} \, G + \mathcal{O} ( \epsilon^{k + 2} )$. If the 
integrator is topologically stable and the asymptotic error is negligible then
\begin{align*}
\frac{1}{ \int_{M} e^{-\beta H} \Omega }
&=
\frac{1}{ \int_{M} e^{-\beta \widetilde{H} } \Omega }
\left(1 -
\epsilon^{k} \, \mathbb{E}_{\varpi_{\widetilde{H}}} \! \left[
\mathrm{d} G \! \left( \vec{v} \right)
+ G \left( \frac{\partial v^{i} }{ \partial q^{i} } + \frac{\partial v_{i} }{ \partial p_{i} } \right) 
\right] \right)
+ \mathcal{O} ( \epsilon^{k + 2} )
\end{align*}
where $\vec{v}$ is any transverse vector field satisfying $\mathrm{d} H \! \left( \vec{v} \right) = 1$.

\end{corollary}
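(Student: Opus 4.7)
The plan is to recover the stated identity as an unnormalized, $E$-integrated version of the microcanonical comparison that underlies Theorem \ref{thm:micro_canonical}. The partition function can be foliated as $\int_M e^{-\beta H} \Omega = \int \mathrm{d}E \, d(E) \, e^{-\beta E}$ and similarly for $\widetilde{H}$, so the corollary reduces to establishing a pointwise relation between the densities of states $d(E)$ and $\tilde{d}(E)$ and then integrating against $e^{-\beta E} \mathrm{d} E$.

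First I would specialize the dragging calculation in the proof of Theorem \ref{thm:micro_canonical} to the function $f = 1$, but at the level of the unnormalized forms rather than at the level of expectations. Applying Theorem \ref{thm:micro_canonical} directly with $f = 1$ would yield the tautology $0 = 0$ because the two $\epsilon^k$ terms cancel against $\langle 1\rangle_{\sMLS} = 1$; that cancellation is exactly what the normalization of the microcanonical distribution removes, and is what I need to preserve here. Reading off the intermediate step in the proof of Theorem \ref{thm:micro_canonical} gives
\begin{equation*}
d(E) = \tilde{d}(E) + \epsilon^{k} \int_{\sMLS} \tilde{\iota}_{E}^{*}\!\left( \left( \mathrm{d} G(\vec{v}) + G \left( \frac{\partial v^{i}}{\partial q^{i}} + \frac{\partial v_{i}}{\partial p_{i}} \right) \right) \vec{u} \cOmega \right) + \mathcal{O}(\epsilon^{k+2}),
\end{equation*}
which I rewrite, using the definition $\varpi_{\sMLS} = \vec{u} \cOmega / \tilde{d}(E)$, as
\begin{equation*}
d(E) = \tilde{d}(E) \left( 1 + \epsilon^{k} \left\langle \mathrm{d} G(\vec{v}) + G \left( \frac{\partial v^{i}}{\partial q^{i}} + \frac{\partial v_{i}}{\partial p_{i}} \right) \right\rangle_{\sMLS} \right) + \mathcal{O}(\epsilon^{k+2}).
\end{equation*}

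Next I would integrate both sides against $e^{-\beta E} \mathrm{d} E$. On the left this produces $\int_M e^{-\beta H} \Omega$, and on the right the factor $\tilde{d}(E) e^{-\beta E}$ together with the microcanonical bracket assembles, via the disintegration identity in Section \ref{sec:accuracy_of_canonical_expectations}, into $\int_M e^{-\beta \widetilde{H}} \Omega \cdot \mathbb{E}_{\varpi_{\widetilde{H}}}[\,\cdot\,]$. This yields
\begin{equation*}
\int_M e^{-\beta H} \Omega = \left( \int_M e^{-\beta \widetilde{H}} \Omega \right) \left( 1 + \epsilon^{k} \, \mathbb{E}_{\varpi_{\widetilde{H}}}\!\left[ \mathrm{d} G(\vec{v}) + G \left( \frac{\partial v^{i}}{\partial q^{i}} + \frac{\partial v_{i}}{\partial p_{i}} \right) \right] \right) + \mathcal{O}(\epsilon^{k+2}).
\end{equation*}
Taking reciprocals and using $1/(1 + x) = 1 - x + \mathcal{O}(x^{2})$ for the $\mathcal{O}(\epsilon^{k})$ correction then gives exactly the claimed expansion, with the $\mathcal{O}(\epsilon^{2k})$ remainder from the geometric series absorbed into the asserted $\mathcal{O}(\epsilon^{k+2})$ since $2k \ge k + 2$ for $k \ge 2$ (which holds for any symmetric symplectic integrator).

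The only real subtlety, and the step most likely to trip one up, is the one in the first paragraph: remembering to back up to the unnormalized form in the proof of Theorem \ref{thm:micro_canonical} rather than quoting the theorem statement. Once one does this, the rest is bookkeeping — integrating over $E$ reassembles the disintegration, and the reciprocal is a routine Taylor expansion that stays within the stated error order.
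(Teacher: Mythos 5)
Your proposal is correct and follows essentially the same route as the paper: the paper likewise sets $f=1$ in the \emph{unnormalized} intermediate identity from the proof of Theorem \ref{thm:micro_canonical} to get $d(E) = \tilde{d}(E) + \epsilon^{k} d'(E) + \mathcal{O}(\epsilon^{k+2})$, integrates against $e^{-\beta E}\,\mathrm{d}E$, and expands the reciprocal, with the $E$-integral of $d'(E)e^{-\beta E}$ over the modified partition function reassembling into the modified canonical expectation. Your observation that quoting the theorem statement directly would collapse to $0=0$, and your remark that the $\mathcal{O}(\epsilon^{2k})$ geometric-series remainder is absorbed since $k \ge 2$, are both correct and consistent with what the paper does implicitly.
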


\begin{proof}

Taking $f = 1$ in the derivation Theorem~\ref{thm:micro_canonical} we have
\begin{align*}
d \! \left( E \right) 
&= 
\int_{ \sLS } \iota^{*}_{E} \! \left( \vec{v} \cOmega \right)
\\
&= 
\int_{ \sMLS }
\tilde{\iota}^{*}_{E} \! \left( \vec{u} \cOmega \right) 
\\
& \quad
+ \epsilon^{k} \int_{ \sMLS }
\tilde{\iota}^{*}_{E} \!  \left( \left( \mathrm{d} G \! \left( \vec{v} \right)
+ G \left( \frac{\partial v^{i} }{ \partial q^{i} } + \frac{\partial v_{i} }{ \partial p_{i} } \right) \right)
 \vec{u} \cOmega \right) 
+ \mathcal{O} ( \epsilon^{k + 2} )
\\
&\equiv \tilde{d} \! \left( E \right) + \epsilon^{k} \, d' \! \left( E \right) + \mathcal{O} ( \epsilon^{k + 2} ).
\end{align*}

Then
\begin{align*}
\frac{1}{ \int_{M} e^{-\beta H} \Omega }
&=
\frac{1}{ \int \mathrm{d} E \, d \! \left( E \right) e^{-\beta E} }
\\
&=
\frac{1}{ \int \mathrm{d} E \, \left( 
\tilde{d} \! \left( E \right) + \epsilon^{k} \, d' \! \left( E \right) + \mathcal{O} ( \epsilon^{k + 2} ) 
\right) e^{-\beta E} }
\\
&=
\frac{1}{ 
\int \mathrm{d} E \, \tilde{d} \! \left( E \right) e^{-\beta E}
+ \epsilon^{k} \, \int \mathrm{d} E \, e^{-\beta E}
d' \! \left( E \right)  + \mathcal{O} ( \epsilon^{k + 2} )
}
\\
&=
\frac{1}{ \int \mathrm{d} E \, \tilde{d} \! \left( E \right) e^{-\beta E} }
\left( 1
- \epsilon^{k} \, 
\frac{\int \mathrm{d} E \, e^{-\beta E} d' \! \left( E \right) }
{ \int \mathrm{d} E \, \tilde{d} \! \left( E \right) e^{-\beta E} }
 \right)
+ \mathcal{O} ( \epsilon^{k + 2} )
\\
&=
\frac{1}{ \int \mathrm{d} E \, \tilde{d} \! \left( E \right) e^{-\beta E} }
\left( 1 - \epsilon^{k} \, \mathbb{E}_{\varpi_{\widetilde{H}}} \! \left[ d' \! \left( E \right) \right] \right)
+ \mathcal{O} ( \epsilon^{k + 2} )
\\
&=
\frac{1}{ \int_{M} e^{-\beta \widetilde{H}} \Omega }
\left( 1 - \epsilon^{k} \, \mathbb{E}_{\varpi_{\widetilde{H}}} \! \left[ d' \! \left( E \right) \right] \right)
+ \mathcal{O} ( \epsilon^{k + 2} )
\\
\intertext{or}
\frac{1}{ \int_{M} e^{-\beta H} \Omega }
&=
\frac{1}{ \int_{M} e^{-\beta \widetilde{H}} \Omega }
\left( 1 - \epsilon^{k} \, \mathbb{E}_{\varpi_{\widetilde{H}}} \! \left[ 
 \mathrm{d} G \! \left( \vec{v} \right)
+ G \left( \frac{\partial v^{i} }{ \partial q^{i} } + \frac{\partial v_{i} }{ \partial p_{i} } \right) 
 \right] \right)
+ \mathcal{O} ( \epsilon^{k + 2} ),
\end{align*}
as desired.

\end{proof}

\newtheorem*{thm:canonical}{Theorem \ref{thm:canonical}}
\begin{thm:canonical}

Let $\left( M, \omega, H\right)$ be a Hamiltonian system and consider
a $k$-th order symmetric symplectic integrator with the corresponding modified Hamiltonian
$\widetilde{H} = H + \epsilon^{k} \, G + \mathcal{O} ( \epsilon^{k + 2} )$.  
If the integrator is topologically stable and the asymptotic error is negligible, 
then the difference in the canonical and modified canonical expectations for any smooth
function, $f : M \rightarrow \mathbb{R}$, is given by
\begin{align*}
\mathbb{E}_{\varpi_{H}} \! \left[ f \right] - \mathbb{E}_{\varpi_{\widetilde{H}}} \! \left[ f \right]
=&
+ \epsilon^{k} \,
\mathbb{E}_{\varpi_{H}} \! \left[
f \, \mathrm{d} G \! \left( \vec{v} \right) + G \, \mathrm{d} f \! \left( \vec{v} \right) 
+ f \, G \left( \frac{\partial v^{i} }{ \partial q^{i} } + \frac{\partial v_{i} }{ \partial p_{i} } \right) 
\right] 
\\
& - \epsilon^{k} \,
\mathbb{E}_{\varpi_{H}} \! \left[ f \right] \,
\mathbb{E}_{\varpi_{H}} \! \left[
\mathrm{d} G \! \left( \vec{v} \right)
+ G \left( \frac{\partial v^{i} }{ \partial q^{i} } + \frac{\partial v_{i} }{ \partial p_{i} } \right) \right] 
+ \mathcal{O} ( \epsilon^{k + 2} ),
\end{align*}
where $\vec{v}$ is any transverse vector field satisfying $\mathrm{d} H \! \left( \vec{v} \right) = 1$.

\end{thm:canonical}

\begin{proof}

Applying Theorem~\ref{thm:micro_canonical},
\begin{align*}
\int_{ \sLS } \iota^{*}_{E} \left( f \, \vec{v} \cOmega \right) 
&= 
\int_{ \sMLS } \tilde{\iota}^{*}_{E} \! \left( f\, \vec{u} \cOmega \right) 
\\
& \quad
+ \epsilon^{k} \int_{ \sMLS }
\tilde{\iota}^{*}_{E} \!  \left( \left( f\, \mathrm{d} G \! \left( \vec{v} \right) + G \, \mathrm{d} f \! \left( \vec{v} \right) 
+ f\, G \left( \frac{\partial v^{i} }{ \partial q^{i} } + \frac{\partial v_{i} }{ \partial p_{i} } \right) \right)
 \vec{u} \cOmega \right) 
\\
& \quad
+ \mathcal{O} ( \epsilon^{k + 2} )
\\
&\equiv \mathcal{I} + \epsilon^{k} \, \mathcal{I}' + \mathcal{O} ( \epsilon^{k + 2} ),
\end{align*}
and
\begin{align*}
d \! \left( E \right) 
&= 
\int_{ \sLS } \iota^{*}_{E} \! \left( \vec{v} \cOmega \right)
\\
&= 
\int_{ \sMLS }
\tilde{\iota}^{*}_{E} \! \left( \vec{u} \cOmega \right) 
\\
& \quad
+ \epsilon^{k} \int_{ \sMLS }
\tilde{\iota}^{*}_{E} \!  \left( \left( \mathrm{d} G \! \left( \vec{v} \right)
+ G \left( \frac{\partial v^{i} }{ \partial q^{i} } + \frac{\partial v_{i} }{ \partial p_{i} } \right) \right)
 \vec{u} \cOmega \right) 
+ \mathcal{O} ( \epsilon^{k + 2} )
\\
&\equiv \tilde{d} \! \left( E \right) + \epsilon^{k} \, d' \! \left( E \right) + \mathcal{O} ( \epsilon^{k + 2} ).
\end{align*}

Substituting these into the definition of the canonical expectation gives
\begin{align*}
\mathbb{E}_{\varpi_{H}} \! \left[ f \right]
&=
\frac{\int \mathrm{d} E \, e^{-\beta E} 
\int_{ \sLS } \iota^{*}_{E} \! \left( f \, \vec{v} \cOmega \right)}
{\int \mathrm{d} E \, d \! \left( E \right) e^{-\beta E}}
\\
&=
\frac{
\int \mathrm{d} E \, e^{-\beta E} 
\left( \mathcal{I} + \epsilon^{k} \, \mathcal{I}' + \mathcal{O} ( \epsilon^{k + 2} ) \right)
}{
\int \mathrm{d} E \, e^{-\beta E}
\left( \tilde{d} \! \left( E \right) + \epsilon^{k} \, d' \! \left( E \right) + \mathcal{O} ( \epsilon^{k + 2} ) \right)
}
\\
&=
\frac{ \int \mathrm{d} E \, e^{-\beta E} \mathcal{I} }
{ \int \mathrm{d} E \, e^{-\beta E} \tilde{d} \! \left( E \right) }
+ \delta
\frac{ \int \mathrm{d} E \, e^{-\beta E} \mathcal{I}' }
{ \int \mathrm{d} E \, e^{-\beta E} \tilde{d} \! \left( E \right) }
- \delta
\frac{ \int \mathrm{d} E \, e^{-\beta E} \mathcal{I} }
{ \int \mathrm{d} E \, e^{-\beta E} \tilde{d} \! \left( E \right) }
\frac{ \int \mathrm{d} E \, e^{-\beta E} d' \! \left( E \right) }
{ \int \mathrm{d} E \, e^{-\beta E} \tilde{d} \! \left( E \right) }
+ \mathcal{O} ( \epsilon^{k + 2} )
\\
&= +
\frac{ \int \mathrm{d} E \, e^{-\beta E} \int_{ \sMLS } \tilde{\iota}^{*}_{E} \! \left( f\, \vec{u} \cOmega \right) }
{ \int \mathrm{d} E \, \tilde{d} \! \left( E \right) e^{-\beta E} }
\\
& \quad + \epsilon^{k}
\frac{ \int \mathrm{d} E \, e^{-\beta E} \epsilon^{k} \int_{ \sMLS }
\tilde{\iota}^{*}_{E} \!  \left( \left( f\, \mathrm{d} G \! \left( \vec{v} \right) + G \, \mathrm{d} f \! \left( \vec{v} \right) 
+ f\, G \left( \frac{\partial v^{i} }{ \partial q^{i} } + \frac{\partial v_{i} }{ \partial p_{i} } \right) \right)
 \vec{u} \cOmega \right) }
{ \int \mathrm{d} E \, \tilde{d} \! \left( E \right) e^{-\beta E} }
\\
& \quad - \epsilon^{k}
\frac{ \int \mathrm{d} E \, e^{-\beta E} \int_{ \sMLS } \tilde{\iota}^{*}_{E} \! \left( f\, \vec{u} \cOmega \right) }
{ \int \mathrm{d} E \, \tilde{d} \! \left( E \right) e^{-\beta E} }
\frac{ \int \mathrm{d} E \, e^{-\beta E} \int_{ \sMLS }
\tilde{\iota}^{*}_{E} \!  \left( \left( \mathrm{d} G \! \left( \vec{v} \right)
+ G \left( \frac{\partial v^{i} }{ \partial q^{i} } + \frac{\partial v_{i} }{ \partial p_{i} } \right) \right)
 \vec{u} \cOmega \right)  }
{ \int \mathrm{d} E \, \tilde{d} \! \left( E \right) e^{-\beta E} }
\\
& \quad + \mathcal{O} ( \epsilon^{k + 2} )
\\
&= +
\mathbb{E}_{\varpi_{\widetilde{H}}} \! \Big[ f \Big]
+ \epsilon^{k} \,
\mathbb{E}_{\varpi_{\widetilde{H}}} \! \left[ f\, \mathrm{d} G \! \left( \vec{v} \right) + G \, \mathrm{d} f \! \left( \vec{v} \right) 
+ f\, G \left( \frac{\partial v^{i} }{ \partial q^{i} } + \frac{\partial v_{i} }{ \partial p_{i} } \right) \right]
\\
& \quad - \epsilon^{k} \,
\mathbb{E}_{\varpi_{\widetilde{H}}} \! \Big[ f \Big]
\mathbb{E}_{\varpi_{\widetilde{H}}} \! \left[ \mathrm{d} G \! \left( \vec{v} \right)
+ G \left( \frac{\partial v^{i} }{ \partial q^{i} } + \frac{\partial v_{i} }{ \partial p_{i} } \right) \right]
+ \mathcal{O} ( \epsilon^{k + 2} ),
\end{align*}
or
\begin{align*}
\mathbb{E}_{\varpi_{H}} \! \left[ f \right] - \mathbb{E}_{\varpi_{\widetilde{H}}} \! \left[ f \right]&=
+ \epsilon^{k} \,
\mathbb{E}_{\varpi_{\widetilde{H}}} \! \left[ f\, \mathrm{d} G \! \left( \vec{v} \right) + G \, \mathrm{d} f \! \left( \vec{v} \right) 
+ f\, G \left( \frac{\partial v^{i} }{ \partial q^{i} } + \frac{\partial v_{i} }{ \partial p_{i} } \right) \right]
\\
& \quad - \epsilon^{k} \,
\mathbb{E}_{\varpi_{\widetilde{H}}} \! \Big[ f \Big]
\mathbb{E}_{\varpi_{\widetilde{H}}} \! \left[ \mathrm{d} G \! \left( \vec{v} \right)
+ G \left( \frac{\partial v^{i} }{ \partial q^{i} } + \frac{\partial v_{i} }{ \partial p_{i} } \right) \right]
+ \mathcal{O} ( \epsilon^{k + 2} ),
\end{align*}
as desired.

\end{proof}

\subsection{Approximating Canonical Expectations of the Hamiltonian Error}
\label{apx:approx}

In order to compute moments of the Hamiltonian error we first have to be able to manipulate 
the Metropolis proposal operator, $\Phi^{\widetilde{H}}_{\epsilon, \tau}$.  Fortunately, the 
operator is a diffeomorphism, $\Phi^{\widetilde{H}}_{\epsilon, \tau} : M \rightarrow M$ which 
admits a variety of convenient manipulations.

\begin{lemma}
\label{lem:correlation_by_parts}
``Correlation by Parts''.
Let $\left( M, \omega, H\right)$ be a Hamiltonian system and consider
a $k$-th order symmetric symplectic integrator with the corresponding modified Hamiltonian
$\widetilde{H} = H + \epsilon^{k} \, G + \mathcal{O} ( \epsilon^{k + 2} )$
and Metropolis proposal $\Phi^{\widetilde{H}}_{\epsilon, \tau}$. 
If the integrator is topologically stable and the asymptotic error is negligible, 
then the correlation of two smooth functions, $f : M \rightarrow \mathbb{R}$
and $h : M \rightarrow \mathbb{R}$, with respect to a unit canonical distribution satisfies
\begin{align*}
\mathbb{E}_{\varpi_{H}} \! \left[ f \left( h \circ \Phi^{\widetilde{H}}_{\epsilon, \tau} \right) \right] 
&= 
\mathbb{E}_{\varpi_{\widetilde{H}}} \! \left[ \left( f \circ \Phi^{\widetilde{H}}_{\epsilon, \tau} \right) h \right]
\\
& \quad + \epsilon^{k} \Big( 
- \mathbb{E}_{\varpi_{H}} \! \left[ f \left( h \circ \Phi^{\widetilde{H}}_{\epsilon, \tau} \right) \right]
\mathbb{E}_{\varpi_{\widetilde{H}}} \! \left[ \mathrm{d} G \! \left( \vec{v} \right)
+ G \left( \frac{\partial v^{i} }{ \partial q^{i} } + \frac{\partial v_{i} }{ \partial p_{i} } \right) \right]
\\
& \quad\quad\quad\; + \mathbb{E}_{\varpi_{\widetilde{H}}} \! \left[ 
\left( f \circ \Phi^{\widetilde{H}}_{\epsilon, \tau} \right) h \, \left( G \circ \Phi^{\widetilde{H}}_{\epsilon, \tau} \right) 
\right] \Big)
\\
& \quad + \mathcal{O} ( \epsilon^{k + 2} ),
\end{align*}
where $\vec{v}$ is any transverse vector field satisfying $\mathrm{d} H \! \left( \vec{v} \right) = 1$.

\end{lemma}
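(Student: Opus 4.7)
The plan is to convert the $\varpi_{H}$-expectation on the left-hand side into a $\varpi_{\widetilde{H}}$-expectation, at which point the measure invariance of the Metropolis proposal can be used to swap the arguments of $f$ and $h$.

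First I would expand the unnormalized density using $H = \widetilde{H} - \epsilon^{k} G + \mathcal{O}(\epsilon^{k+2})$ to obtain $e^{-H}\Omega = e^{-\widetilde{H}}(1 + \epsilon^{k} G)\Omega + \mathcal{O}(\epsilon^{k+2})$, combine it with the normalizing-constant expansion supplied by Corollary~\ref{cor:density_of_states}, and form the ratio. Applied with the composite integrand $f\,(h\circ\Phi^{\widetilde{H}}_{\epsilon,\tau})$, this produces a $\varpi_{\widetilde{H}}$-expectation plus two $\mathcal{O}(\epsilon^{k})$ corrections: one of the form $\epsilon^{k}\,\mathbb{E}_{\varpi_{\widetilde{H}}}[f\,(h\circ\Phi^{\widetilde{H}}_{\epsilon,\tau})\,G]$ coming from the numerator and one of the form $-\epsilon^{k}\,\mathbb{E}_{\varpi_{\widetilde{H}}}[f\,(h\circ\Phi^{\widetilde{H}}_{\epsilon,\tau})]\,\mathbb{E}_{\varpi_{\widetilde{H}}}[\mathrm{d}G(\vec{v}) + G(\partial v^{i}/\partial q^{i} + \partial v_{i}/\partial p_{i})]$ coming from the normalizer.

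Next I would exploit two structural properties of $\Phi^{\widetilde{H}}_{\epsilon,\tau} = R\circ\phi^{\widetilde{H}}_{\tau}$. Because $\phi^{\widetilde{H}}_{\tau}$ is an exact Hamiltonian flow it preserves both $\widetilde{H}$ and $\Omega$; the momentum reversal $R$ preserves $|\Omega|$ and leaves $\widetilde{H}$ invariant since the modified Hamiltonian inherits parity in $p$ from a symmetric integrator, so $\Phi^{\widetilde{H}}_{\epsilon,\tau}$ is a measure-preserving diffeomorphism of $\varpi_{\widetilde{H}}$. The symmetry of the integrator also supplies the reversibility identity $R\circ\phi^{\widetilde{H}}_{\tau}\circ R = \phi^{\widetilde{H}}_{-\tau}$, which makes $\Phi^{\widetilde{H}}_{\epsilon,\tau}$ an involution. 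Together these give the swap identity $\mathbb{E}_{\varpi_{\widetilde{H}}}[F\cdot(K\circ\Phi^{\widetilde{H}}_{\epsilon,\tau})] = \mathbb{E}_{\varpi_{\widetilde{H}}}[(F\circ\Phi^{\widetilde{H}}_{\epsilon,\tau})\cdot K]$, which I would apply to the leading term (with $F=f$, $K=h$) and to the $G$-correlation term (with $F=f$, $K=hG$, so that $G$ is transported to $G\circ\Phi^{\widetilde{H}}_{\epsilon,\tau}$). Finally, the surviving normalizer-correction already carries a prefactor of $\epsilon^{k}$, so replacing $\mathbb{E}_{\varpi_{\widetilde{H}}}$ by $\mathbb{E}_{\varpi_{H}}$ in that term costs only $\mathcal{O}(\epsilon^{2k})\subseteq\mathcal{O}(\epsilon^{k+2})$ for $k\geq 2$, reproducing the stated form.

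The main obstacle I anticipate is not computational but conceptual: ensuring that $\Phi^{\widetilde{H}}_{\epsilon,\tau}$ is an \emph{exact} involutive $\varpi_{\widetilde{H}}$-preserving map rather than one preserved only to the order at which backward error analysis has been truncated. This is a hallmark virtue of symmetric symplectic integrators and follows cleanly from the reversibility identity, but it must be invoked carefully because the swap identity — and hence the whole argument — degenerates if the proposal is only measure-preserving up to $\mathcal{O}(\epsilon^{k+2})$.
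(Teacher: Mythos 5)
Your proposal is correct and follows essentially the same route as the paper's proof: both rest on the fact that $\Phi^{\widetilde{H}}_{\epsilon, \tau}$ is a volume-preserving involution leaving $\widetilde{H}$ invariant, the expansion $H = \widetilde{H} - \epsilon^{k} G + \mathcal{O} ( \epsilon^{k+2} )$, and Corollary~\ref{cor:density_of_states} for the normalization correction. The only difference is cosmetic --- you expand the density before pushing forward under $\Phi^{\widetilde{H}}_{\epsilon, \tau}$ while the paper pushes forward first, which is why your intermediate $G$-term carries $G$ rather than $G \circ \Phi^{\widetilde{H}}_{\epsilon, \tau}$ before the swap --- and your closing caveat about exact (rather than order-truncated) preservation of $\varpi_{\widetilde{H}}$ is the same implicit assumption the paper makes when it asserts $\widetilde{H} \circ \Phi^{\widetilde{H}}_{\epsilon, \tau} = \widetilde{H}$.
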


\begin{proof}

By definition
\begin{equation*}
\mathbb{E}_{\varpi_{H}} \! \left[ f \left( h \circ \Phi^{\widetilde{H}}_{\epsilon, \tau} \right) \right] =
\frac{
\int_{M} f \left( h \circ \Phi^{\widetilde{H}}_{\epsilon, \tau} \right) e^{-H} \Omega
}{
\int_{M} e^{- H} \Omega
}
\end{equation*}
Because $\Phi^{\widetilde{H}}_{\epsilon, \tau}$ is a diffeomorphism we can push-forward
integrals; in particular the numerator becomes
\begin{align*}
\int_{M} f \left( h \circ \Phi^{\widetilde{H}}_{\epsilon, \tau} \right) e^{- H} \Omega
&=
\int_{M} \left( \Phi^{\widetilde{H}}_{\epsilon, \tau} \right)^{*} 
\left( f \left( h \circ \Phi^{\widetilde{H}}_{\epsilon, \tau} \right) e^{-H} \Omega \right)
\\
&=
\int_{M} \left( f \circ \Phi^{\widetilde{H}}_{\epsilon, \tau} \right) 
\left( h \circ \Phi^{\widetilde{H}}_{\epsilon, \tau} \circ \Phi^{\widetilde{H}}_{\epsilon, \tau} \right) 
e^{-H \circ \Phi^{\widetilde{H}}_{\epsilon, \tau} } \Omega
\\
&=
\int_{M} \left( f \circ \Phi^{\widetilde{H}}_{\epsilon, \tau} \right) h \, 
e^{-H \circ \Phi^{\widetilde{H}}_{\epsilon, \tau} } \Omega,
\end{align*}
where we have used the fact that $\Phi^{\widetilde{H}}_{\epsilon, \tau}$ is idempotent.  Substituting 
the definition of the modified Hamiltonian into the exponent and expanding then gives
\begin{align*}
\int_{M} f \left( h \circ \Phi^{\widetilde{H}}_{\epsilon, \tau} \right) e^{- H} \Omega
&=
\int_{M} \left( f \circ \Phi^{\widetilde{H}}_{\epsilon, \tau} \right) h \, 
e^{- H \circ \Phi^{\widetilde{H}}_{\epsilon, \tau} } \Omega
\\
&=
\int_{M} \left( f \circ \Phi^{\widetilde{H}}_{\epsilon, \tau} \right) 
h \, 
e^{- \left( \widetilde{H} - \epsilon^{k} \, G \right) \circ \Phi^{\widetilde{H}}_{\epsilon, \tau} } \Omega
\\
&=
\int_{M} \left( f \circ \Phi^{\widetilde{H}}_{\epsilon, \tau} \right) h \, 
e^{- \widetilde{H} \circ \Phi^{\widetilde{H}}_{\epsilon, \tau} }
e^{ \epsilon^{k} \, G \circ \Phi^{\widetilde{H}}_{\epsilon, \tau} } \Omega
\\
&=
\int_{M} \left( f \circ \Phi^{\widetilde{H}}_{\epsilon, \tau} \right) h \, 
e^{- \widetilde{H} \circ \Phi^{\widetilde{H}}_{\epsilon, \tau} }
\left( 1 + \epsilon^{k} \, G \circ \Phi^{\widetilde{H}}_{\epsilon, \tau} \right) \Omega
+ \mathcal{O} ( \epsilon^{k + 2} )
\\
&=
\int_{M} \left( f \circ \Phi^{\widetilde{H}}_{\epsilon, \tau} \right) 
h \, e^{- \widetilde{H} \circ \Phi^{\widetilde{H}}_{\epsilon, \tau} } \Omega
\\
& \quad +
\epsilon^{k} \int_{M} \left( f \circ \Phi^{\widetilde{H}}_{\epsilon, \tau} \right) h \, \left( G \circ \Phi^{\widetilde{H}}_{\epsilon, \tau} \right)
e^{- \widetilde{H} \circ \Phi^{\widetilde{H}}_{\epsilon, \tau} } \Omega
+ \mathcal{O} ( \epsilon^{k + 2} ).
\end{align*}
But the modified Hamiltonian is invariant to the modified flow,
$\widetilde{H} \circ \Phi^{\widetilde{H}}_{\epsilon, \tau} = \widetilde{H}$ and the integrals on the RHS
become
\begin{align*}
\int_{M} f \left( h \circ \Phi^{\widetilde{H}}_{\epsilon, \tau} \right) e^{-H} \Omega
&=
\int_{M} \left( f \circ \Phi^{\widetilde{H}}_{\epsilon, \tau} \right) 
h \, e^{- \widetilde{H} } \Omega
\\
& \quad +
\epsilon^{k} \int_{M} \left( f \circ \Phi^{\widetilde{H}}_{\epsilon, \tau} \right) h \, \left( G \circ \Phi^{\widetilde{H}}_{\epsilon, \tau} \right)
e^{- \widetilde{H} } \Omega
+ \mathcal{O} ( \epsilon^{k + 2} ).
\end{align*}

Dividing by the modified normalization,
\begin{align*}
\frac{
\int_{M} f \left( h \circ \Phi^{\widetilde{H}}_{\epsilon, \tau} \right) e^{-H} \Omega
}{
\int_{M} e^{- \widetilde{H} } \Omega
}
&=
\frac{
\int_{M} \left( f \circ \Phi^{\widetilde{H}}_{\epsilon, \tau} \right) 
h \, e^{- \widetilde{H} } \Omega
}{
\int_{M} e^{- \widetilde{H} } \Omega
}
+ \epsilon^{k} 
\frac{
\int_{M} 
\left( f \circ \Phi^{\widetilde{H}}_{\epsilon, \tau} \right) h \, 
\left( G \circ \Phi^{\widetilde{H}}_{\epsilon, \tau} \right)
e^{- \widetilde{H} } \Omega
}{
\int_{M} e^{- \widetilde{H} } \Omega
}
+ \mathcal{O} ( \epsilon^{k + 2} ) 
\\
&=
\mathbb{E}_{\varpi_{\widetilde{H}}} \! \left[ \left( f \circ \Phi^{\widetilde{H}}_{\epsilon, \tau} \right) h \right]
+ \epsilon^{k} \,
\mathbb{E}_{\varpi_{\widetilde{H}}} \! \left[ 
\left( f \circ \Phi^{\widetilde{H}}_{\epsilon, \tau} \right) h \, \left( G \circ \Phi^{\widetilde{H}}_{\epsilon, \tau} \right) 
\right]
+ \mathcal{O} ( \epsilon^{k + 2} ).
\end{align*}
Finally, we apply Corollary \ref{cor:density_of_states} to replace the modified normalization 
on the LHS with the true normalization and the necessary correction,
\begin{align*}
\frac{
\int_{M} f \left( h \circ \Phi^{\widetilde{H}}_{\epsilon, \tau} \right) e^{-H} \Omega
}{ \int_{M} e^{- H } \Omega }
\left( 1 + \epsilon^{k} \, \mathbb{E}_{\varpi_{\widetilde{H}}} \! \left[ \mathrm{d} G \! \left( \vec{v} \right)
+ G \left( \frac{\partial v^{i} }{ \partial q^{i} } + \frac{\partial v_{i} }{ \partial p_{i} } \right) \right] \right)
&= \\
& \hspace{-70mm}
\mathbb{E}_{\varpi_{\widetilde{H}}} \! \left[ \left( f \circ \Phi^{\widetilde{H}}_{\epsilon, \tau} \right) h \right]
+ \epsilon^{k} \,
\mathbb{E}_{\varpi_{\widetilde{H}}} \! \left[ \left( f \circ \Phi^{\widetilde{H}}_{\epsilon, \tau} \right) h \, 
\left( G \circ \Phi^{\widetilde{H}}_{\epsilon, \tau} \right) \right]
+ \mathcal{O} ( \epsilon^{k + 2} )
\\
\mathbb{E}_{\varpi_{H}} \! \left[ f \left( h \circ \Phi^{\widetilde{H}}_{\epsilon, \tau} \right) \right]
+ \epsilon^{k} \, 
\mathbb{E}_{\varpi_{H}} \! \left[ f \left( h \circ \Phi^{\widetilde{H}}_{\epsilon, \tau} \right) \right]
\mathbb{E}_{\varpi_{\widetilde{H}}} \! \left[ \mathrm{d} G \! \left( \vec{v} \right)
+ G \left( \frac{\partial v^{i} }{ \partial q^{i} } + \frac{\partial v_{i} }{ \partial p_{i} } \right) \right]
&= \\
& \hspace{-70mm}
\mathbb{E}_{\varpi_{\widetilde{H}}} \! \left[ \left( f \circ \Phi^{\widetilde{H}}_{\epsilon, \tau} \right) h \right]
+ \epsilon^{k} \,
\mathbb{E}_{\varpi_{\widetilde{H}}} \! \left[ \left( f \circ \Phi^{\widetilde{H}}_{\epsilon, \tau} \right) h \, 
\left( G \circ \Phi^{\widetilde{H}}_{\epsilon, \tau} \right) \right]
+ \mathcal{O} ( \epsilon^{k + 2} ),
\end{align*}
or
\begin{align*}
\mathbb{E}_{\varpi_{H}} \! \left[ f \left( h \circ \Phi^{\widetilde{H}}_{\epsilon, \tau} \right) \right] 
&= 
\mathbb{E}_{\varpi_{\widetilde{H}}} \! \left[ \left( f \circ \Phi^{\widetilde{H}}_{\epsilon, \tau} \right) h \right]
\\
& \quad + \epsilon^{k} \Big( 
- \mathbb{E}_{\varpi_{H}} \! \left[ f \left( h \circ \Phi^{\widetilde{H}}_{\epsilon, \tau} \right) \right]
\mathbb{E}_{\varpi_{\widetilde{H}}} \! \left[ \mathrm{d} G \! \left( \vec{v} \right)
+ G \left( \frac{\partial v^{i} }{ \partial q^{i} } + \frac{\partial v_{i} }{ \partial p_{i} } \right) \right]
\\
& \quad\quad\quad\; + \mathbb{E}_{\varpi_{\widetilde{H}}} \! \left[ 
\left( f \circ \Phi^{\widetilde{H}}_{\epsilon, \tau} \right) h \, \left( G \circ \Phi^{\widetilde{H}}_{\epsilon, \tau} \right) 
\right] \Big)
\\
& \quad + \mathcal{O} ( \epsilon^{k + 2} ),
\end{align*}
as desired.

\end{proof}

Now can compute the scaling of the moments of the Hamiltonian error, $\Delta_{\epsilon}$. 

\newtheorem*{lem:moment_scaling}{Lemma \ref{lem:moment_scaling}}
\begin{lem:moment_scaling}

Let $\left( M, \omega, H\right)$ be a Hamiltonian system and consider
a $k$-th order symmetric symmetric symplectic integrator with the corresponding modified Hamiltonian
$\widetilde{H} = H + \epsilon^{k} \, G + \mathcal{O} ( \epsilon^{k + 2} )$
and Metropolis proposal $\Phi^{\widetilde{H}}_{\epsilon, \tau}$. 
If the integrator is topologically stable and the asymptotic error is negligible, 
then the moments of the Hamiltonian error with respect to the unit canonical distribution
scale as
\begin{equation*}
\mathbb{E}_{\varpi_{H}} \! \left[ \left( \Delta_{\epsilon} \right)^{n} \right] \propto
\left\{
\begin{array}{rr}
\epsilon^{ k ( n + 1 )}, & n \, \mathrm{odd} \;\, \\
\epsilon^{k n} \;\,, & n \, \mathrm{even}
\end{array} 
\right. .
\end{equation*}
to leading-order in $\epsilon$.

\end{lem:moment_scaling}

\begin{proof}

First we explicitly introduce the modified Hamiltonian into error,
\begin{align*}
\mathbb{E}_{\varpi_{H}} \left[ \left( \Delta_{\epsilon} \right)^{n} \right]
&= 
\mathbb{E}_{\varpi_{H}} \left[
\left( H - H \circ \Phi^{\widetilde{H}}_{\epsilon, \tau} \right)^{n}
\right]
\\
&=
\mathbb{E}_{\varpi_{H}} \left[
\left( - \left( \widetilde{H} - H \right) + \widetilde{H} - H \circ \Phi^{\widetilde{H}}_{\epsilon, \tau} \right)^{n}
\right]
\end{align*}
Because $\widetilde{H}$ is invariant to the flow of the integrator we can drag the second
$\widetilde{H}$ along the flow to give
\begin{align*}
\mathbb{E}_{\varpi_{H}} \left[ \left( \Delta_{\epsilon} \right)^{n} \right]
&=
\mathbb{E}_{\varpi_{H}} \left[
\left( - \left( \widetilde{H} - H \right) + \widetilde{H} - H \circ \Phi^{\widetilde{H}}_{\epsilon, \tau}  \right)^{n}
\right]
\\
&=
\mathbb{E}_{\varpi_{H}} \left[
\left( - \left( \widetilde{H} - H \right) + \widetilde{H} \circ \Phi^{\widetilde{H}}_{\epsilon, \tau}  
- H \circ \Phi^{\widetilde{H}}_{\epsilon, \tau}  \right)^{n}
\right]
\\
&=
\mathbb{E}_{\varpi_{H}} \left[
\left( - \left( \widetilde{H} - H \right) 
+ \left( \widetilde{H} - H \right) \circ \Phi^{\widetilde{H}}_{\epsilon, \tau}  \right)^{n}
\right]
\\
&=
\epsilon^{k n} \, \mathbb{E}_{\varpi_{H}} \left[
\left( - G + G \circ \Phi^{\widetilde{H}}_{\epsilon, \tau}  \right)^{n}
\right]
+ \mathcal{O} ( \epsilon^{(k + 2) n} ).
\end{align*}

Now we expand the binomial,
\begin{align*}
\mathbb{E}_{\varpi_{H}} \left[ \left( \Delta_{\epsilon} \right)^{n} \right]
&=
\epsilon^{k n} \, \mathbb{E}_{\varpi_{H}} \left[
\left( - G + G \circ \Phi^{\widetilde{H}}_{\epsilon, \tau}  \right)^{n}
\right]
+ \mathcal{O} ( \epsilon^{(k + 2) n} )
\\
&=
\epsilon^{k n} \, \mathbb{E}_{\varpi_{H}} \left[
\sum_{j = 0}^{n} \binom{n}{k} 
\left( G \circ \Phi^{\widetilde{H}}_{\epsilon, \tau}  \right)^{j} \left( - G \right)^{n - j}
\right]
+ \mathcal{O} ( \epsilon^{(k + 2) n} ).
\end{align*}

For odd $n$ there are an even number of terms in the expansion which pair up as
\begin{align*}
\mathbb{E}_{\varpi_{H}} \left[ \left( \Delta_{\epsilon} \right)^{n} \right]
&=
\epsilon^{k n} \, \sum_{j = 0}^{\frac{n-1}{2}} \binom{n}{j} \left( - 1 \right)^{n - j}
\left(
\mathbb{E}_{\varpi_{H}} \left[
\left( G \circ \Phi^{\widetilde{H}}_{\epsilon, \tau}  \right)^{j} \left( G \right)^{n - j} 
\right]
-
\mathbb{E}_{\varpi_{H}} \left[
\left( G \circ \Phi^{\widetilde{H}}_{\epsilon, \tau}  \right)^{n - j} \left( G \right)^{n}
\right]
\right)
\\
& \quad + \mathcal{O} ( \epsilon^{(k + 2) n} ),
\end{align*}
or appealing to Lemma \ref{lem:correlation_by_parts},
\begin{align*}
\mathbb{E}_{\varpi_{H}} \left[ \left( \Delta_{\epsilon} \right)^{n} \right]
&=
\epsilon^{k n} \, \sum_{j = 0}^{\frac{n-1}{2}} \binom{n}{j} \left( - 1 \right)^{n - j}
\left(
\mathbb{E}_{\varpi_{H}} \left[
\left( G \circ \Phi^{\widetilde{H}}_{\epsilon, \tau}  \right)^{j} \left( G \right)^{n - j} 
\right]
-
\mathbb{E}_{\varpi_{\widetilde{H}}} \left[
\left( G \circ \Phi^{\widetilde{H}}_{\epsilon, \tau}  \right)^{j} \left( G \right)^{n - j} 
\right] \right)
\\
& \quad +
\epsilon^{k n} \, \sum_{j = 0}^{\frac{n-1}{2}} \binom{n}{j} \left( - 1 \right)^{n - j}
\left(
\epsilon^{k} \, \mathbb{E}_{\varpi_{\widetilde{H}}} \left[ \textrm{boundary term} \right]
+ \mathcal{O} ( \epsilon^{k + 2} )
\right)
+ \mathcal{O} \! \left( \epsilon^{(k + 2) n} \right)
\\
&=
\epsilon^{k n} \, \sum_{j = 0}^{\frac{n-1}{2}} \binom{n}{j} \left( - 1 \right)^{n - j}
\left(
\mathbb{E}_{\varpi_{H}} \left[
\left( G \circ \Phi^{\widetilde{H}}_{\epsilon, \tau}  \right)^{j} \left( G \right)^{n - j} 
\right]
-
\mathbb{E}_{\varpi_{\widetilde{H}}} \left[
\left( G \circ \Phi^{\widetilde{H}}_{\epsilon, \tau}  \right)^{j} \left( G \right)^{n - j} 
\right] \right)
\\
& \quad 
+ \mathcal{O} \! \left( \epsilon^{ k (n + 1)} \right)
+ \mathcal{O} \! \left( \epsilon^{ k (n + 2)} \right)
+ \mathcal{O} \! \left( \epsilon^{(k + 2) n} \right),
\end{align*}
and then finally Theorem \ref{thm:canonical},
\begin{align*}
\mathbb{E}_{\varpi_{H}} \left[ \left( \Delta_{\epsilon} \right)^{n} \right]
&= 
\mathcal{O} \! \left( \epsilon^{k (n + 1)} \right) 
+ \mathcal{O} \! \left( \epsilon^{ k (n + 1)} \right) 
+ \mathcal{O} \! \left( \epsilon^{ k (n + 2) } \right) 
+ \mathcal{O} \! \left( \epsilon^{(k + 2) n} \right)
\\
&= \mathcal{O} \! \left( \epsilon^{ k (n + 1) } \right).
\end{align*}

Even powers do not benefit from a similar cancelation and we're left
with the nominal scaling which gives
\begin{equation*}
\mathbb{E}_{\varpi_{H}} \! \left[ \left( \Delta_{\epsilon} \right)^{n} \right] \propto
\left\{
\begin{array}{rr}
\epsilon^{k (n + 1)}, & n \, \mathrm{odd} \;\, \\
\epsilon^{k n} \;\,, & n \, \mathrm{even}
\end{array} 
\right. ,
\end{equation*}
as desired.

\end{proof}

Given the moments the scaling of the cumulants immediately follows.

\newtheorem*{lem:cumulant_scaling}{Lemma \ref{lem:cumulant_scaling}}
\begin{lem:cumulant_scaling}

Let $\left( M, \omega, H\right)$ be a Hamiltonian system and consider
a $k$-th order symmetric symplectic integrator with the corresponding modified Hamiltonian
$\widetilde{H} = H + \epsilon^{k} \, G + \mathcal{O} ( \epsilon^{k + 2} )$
and Metropolis proposal $\Phi^{\widetilde{H}}_{\epsilon, \tau}$.  
If the  integrator is topologically stable and the asymptotic error is negligible, 
then the cumulants of the Hamiltonian error with respect to the unit canonical distribution scale 
as
\begin{equation*}
\kappa_{n} \! \left( \Delta_{\epsilon} \right) \propto
\left\{
\begin{array}{rr}
\epsilon^{k (n + 1) }, & n \, \mathrm{odd} \;\, \\
\epsilon^{k n} \;\,, & n \, \mathrm{even}
\end{array} 
\right. .
\end{equation*}
to leading-order in $\epsilon$.

\end{lem:cumulant_scaling}

\begin{proof}

Cumulants can be constructed from the moments via the recursion relation
\begin{equation*}
\kappa_{n} = \mu_{n} - \sum_{m = 1}^{n - 1} \binom{n - 1}{m - 1} \, \kappa_{m} \, \mu_{n - m},
\end{equation*}
and we use this relationship to proceed inductively.

Provided that the scaling holds up to $\kappa_{n - 1}$, then if $n$ is even
each term in the sum is a product of terms with like parity whereas
if $n$ is odd then each term is the sum is a product of terms with odd parity.
Consequently each term scales with $\epsilon^{k m}, \, m \ge n$ and to leading
order $\kappa_{n} \propto \epsilon^{k n}$.  The base case is confirmed immediately
as $\kappa_{1} = \mu_{1}$.

\end{proof}

Moreover, the symplectic structure provides a global constraint on the cumulants.

\newtheorem*{lem:global_constraint}{Lemma \ref{lem:global_constraint}}
\begin{lem:global_constraint}

Let $\left( M, \omega, H \right)$ be a Hamiltonian system and consider
a $k$-th order symmetric symplectic integrator with the corresponding modified Hamiltonian
$\widetilde{H} = H + \epsilon^{k} \, G + \mathcal{O} ( \epsilon^{k + 2} )$
and Metropolis proposal $\Phi^{\widetilde{H}}_{\epsilon, \tau}$. 
If the integrator is topologically stable and the asymptotic error is negligible, 
then the cumulant generating function of the Hamiltonian error vanishes
\begin{equation*}
\log \mathbb{E}_{\varpi_{H}} \! \left[ e^{ \Delta_{\epsilon}} \right]  = 0.
\end{equation*}
for the canonical distribution with $\beta = 1$,
\begin{equation*}
\varpi_{H} = \frac{ e^{-H} \Omega }{ \int_{M} e^{-H} \Omega }.
\end{equation*}

\end{lem:global_constraint}

\begin{proof}

By definition
\begin{align*}
\mathbb{E}_{\varpi_{H}} \left[ e^{ \Delta_{\epsilon}} \right] 
&= 
\frac{ \int_{M} e^{ \Delta_{\epsilon}} e^{-H} \Omega }
{ \int_{M} e^{-H} \Omega }
\\
&= 
\frac{ \int_{M} e^{ H - H \circ \Phi^{\widetilde{H}}_{\epsilon, \tau} } e^{-H} \Omega }
{ \int_{M} e^{-H} \Omega }
\\
&= 
\frac{ \int_{M} e^{- H \circ \Phi^{\widetilde{H}}_{\epsilon, \tau} } \Omega }
{ \int_{M} e^{-H} \Omega }.
\end{align*}
Pushing back the numerator against $\Phi^{\widetilde{H}}_{\epsilon, \tau}$ finally gives
\begin{align*}
\mathbb{E}_{\varpi_{H}} \left[ e^{ \Delta_{\epsilon}} \right] 
&= 
\frac{ \int_{M} \left( \Phi^{\widetilde{H}}_{\epsilon, -\tau} \right)^{*}
\left( e^{- H \circ \Phi^{\widetilde{H}}_{\epsilon, \tau} } \Omega \right) } 
{ \int_{M} e^{-H} \Omega }
\\
&= 
\frac{ \int_{M} e^{- H \circ \Phi^{\widetilde{H}}_{\epsilon, \tau} \circ \Phi^{\widetilde{H}}_{\epsilon, -\tau} } \Omega }
{ \int_{M} e^{-H} \Omega }
\\
&= 
\frac{ \int_{M} e^{- H } \Omega }
{ \int_{M} e^{-H} \Omega }
\\
&= 1,
\end{align*}
or
\begin{equation*}
\log \mathbb{E}_{\varpi_{H}} \left[ e^{ \Delta_{\epsilon}} \right] = 0,
\end{equation*}
as desired.

\end{proof}

In order to compute expectations of functions of the Hamiltonian error 
in the general case we appeal to the Gram-Charlier expansion \citep{BarndorffEtAl:1989}.

\newtheorem*{thm:smooth_expectations}{Theorem \ref{thm:smooth_expectations}}
\begin{thm:smooth_expectations}

Let $\left( M, \omega, H\right)$ be a Hamiltonian system and consider
a $k$-th order symmetric symplectic integrator with the corresponding modified Hamiltonian
$\widetilde{H} = H + \epsilon^{k} \, G + \mathcal{O} ( \epsilon^{k + 2} )$
and Metropolis proposal $\Phi^{\widetilde{H}}_{\epsilon, \tau}$.  If the integrator 
is topologically stable and the asymptotic error is negligible, then the expectation of 
any smooth function of the Hamiltonian error is given by
\begin{align*}
\mathbb{E}_{\varpi_{H} } \! \left[ f \! \left( \Delta_{\epsilon} \right) \right] 
&=
\int^{\infty}_{-\infty} \mathrm{d} \Delta_{\epsilon} \, 
\mathcal{N} \! \left( \Delta_{\epsilon} | 
- \frac{1}{2} \alpha \, \epsilon^{2k}, \alpha \, \epsilon^{2k} \right) 
f \! \left( \Delta_{\epsilon} \right)
+ \mathcal{O} ( \epsilon^{2 (k + 2)} ),
\end{align*}
for some $\alpha \in \mathbb{R}$.

\end{thm:smooth_expectations}

\begin{proof}

The Gram-Charlier series defines an expansion of a target density $\varpi \! \left( x \right)$ in
terms of derivatives of a reference Gaussian density,
\begin{equation*}
\varpi \! \left( x \right) = \exp \! \left[ \sum_{n = 1}^{\infty} \frac{ \kappa_{n} - \gamma_{n} }{ n! } 
\left( - \frac{\partial}{\partial x} \right)^{n} \right] 
\mathcal{N} \! \left( x | \mu, \sigma^{2} \right),
\end{equation*}
where $\kappa_{n}$ are the cumulants of $\pi$ and $\gamma_{n}$ are the cumulants of the Gaussian.  
By matching the Gaussian to the first two moments of $\varpi$, the first two terms in the sum vanish leaving
\begin{equation*}
\varpi \! \left( x \right) = 
\exp \! \left[ \sum_{n = 3}^{\infty} \left( - 1 \right)^{n} \frac{ \kappa_{n} }{ n! } 
\left( \frac{\partial}{\partial x} \right)^{n} \right] 
\mathcal{N} \! \left( x | \kappa_{1}, \kappa_{2} \right).
\end{equation*}

Given this expansion, expectations with respect to $x$ can be written as
\begin{align*}
\mathbb{E}_{\varpi_{H}} \! \left[ f \right] 
&= \int^{\infty}_{-\infty} \mathrm{d}x \, \varpi \! \left( x \right) f \! \left( x \right) \\
&= \int^{\infty}_{-\infty} \mathrm{d}x \, 
\exp \! \left[ \sum_{n = 3}^{\infty} \left( - 1 \right)^{n} \frac{ \kappa_{n} }{ n! } 
\left( - \frac{\partial}{\partial x} \right)^{n} \right] 
\mathcal{N} \! \left( x | \kappa_{1}, \kappa_{2} \right) \cdot f \! \left( x \right),
\end{align*}
or, upon changing variables,
\begin{align*}
\mathbb{E}_{\varpi_{H}} \! \left[ f \right] 
&=
\int^{\infty}_{-\infty} \mathrm{d}y \, 
\exp \! \left[ \sum_{n = 3}^{\infty} \left( -1 \right)^{n} \frac{ \kappa_{n} }{ n! \, \kappa_{2}^{n/2} } 
\left( - \frac{\partial}{\partial y} \right)^{n} \right] 
\mathcal{N} \! \left( y | 0, 1 \right) \cdot f \! \left( \sqrt{\kappa_{2}} y + \kappa_{1} \right) \\
&=
\int^{\infty}_{-\infty} \mathrm{d}y \, 
\exp \! \left[ \sum_{n = 3}^{\infty} \left( -1 \right)^{n} \frac{ \kappa_{n} }{ n! \, \kappa_{2}^{n/2} } 
\left( - \frac{\partial}{\partial y} \right)^{n} \right] 
\phi \! \left( n \right) \cdot f \! \left( \sqrt{\kappa_{2}} y + \kappa_{1} \right).
\end{align*}

Given the internal summation, expanding the exponential is no easy task.  The action 
of each term, however, is relatively easy to deduce: because there is no $y$ dependence 
in the exponential,  any term in the expansion will reduce to
\begin{align*}
I = C_{n}
\int^{\infty}_{-\infty} \mathrm{d}y \, 
\frac{\partial^{n} }{\partial y^{n} }
\phi \! \left( y \right) \cdot f \! \left( \sqrt{\kappa_{2}} y + \kappa_{1} \right),
\end{align*}
with $C_{n}$ some product of coefficients, $\left( -1 \right)^{m} \, \kappa_{m} \kappa_{2}^{-m/2} / m!$,
whose order sums to $n$. Integrating this term by parts yields
\begin{align*}
I &= 
C_{n} \left. \frac{\partial^{n - 1} }{\partial y^{n - 1} }
\phi \! \left( y \right) \cdot f \! \left( \sqrt{\kappa_{2}} y + \kappa_{1} \right) \right|^{\infty}_{-\infty} \\
& 
- C_{n} \int^{\infty}_{-\infty} \mathrm{d}y \, 
\frac{\partial^{n - 1} }{\partial y^{n - 1} }
\phi \! \left( y \right) \cdot \sigma f^{\left(1\right)} \! \left( \sqrt{\kappa_{2}} y + \kappa_{1} \right) \\
&=
- C_{n} \, \sqrt{\kappa_{2}} \int^{\infty}_{-\infty} \mathrm{d}y \, 
\frac{\partial^{n - 1} }{\partial y^{n - 1} }
\phi \! \left( y \right) \cdot  f^{\left(1\right)} \! \left( \sqrt{\kappa_{2}} y + \kappa_{1} \right).
\end{align*}
Upon repeated integration by parts this eventually reduces to
\begin{align*}
I 
&= C_{n} \left( -1 \right)^{n} \kappa_{2}^{n/2} \int^{\infty}_{-\infty} \mathrm{d}y \, 
\phi \! \left( n \right) f^{\left(n \right)} \! \left( \sqrt{\kappa_{2}} y + \kappa_{1} \right).
\end{align*}

Now let $x = \Delta_{\epsilon}$ with $\kappa_{1} = \mathbb{E}_{\varpi_{H}} \! \left[ \Delta_{\epsilon} \right]$
and $\kappa_{2} = \mathrm{Var}_{\pi} \! \left[ \Delta_{\epsilon} \right]$.  From Lemma 
\ref{lem:cumulant_scaling} we have
\begin{equation*}
\kappa_{n} \propto \left\{ 
\begin{array}{lr}
\kappa_{2}^{(n + 1) / 2}, & n \, \mathrm{odd} \\
\kappa_{2}^{n}, & n \, \mathrm{even}
\end{array} \right. ,
\end{equation*}
or
\begin{equation*}
\frac{ \kappa_{n} }{ n! \, \kappa_{2}^{n/2} } \propto
\left\{ \begin{array}{lr}
\sqrt{\kappa_{2}}, & n \, \mathrm{odd} \\
1, & n \, \mathrm{even}
\end{array} \right. .
\end{equation*}
Consequently $C_{n}$ will only introduce addition factors of $\kappa_{2}$,
\begin{equation*}
I \propto \kappa_{2}^{m/2} \int^{\infty}_{-\infty} \mathrm{d}y \, 
\phi \! \left( y \right) f^{\left(n \right)} \! \left( \sqrt{\kappa_{2}} y + \kappa_{1} \right), \, m \ge n.
\end{equation*}

If $f$ is smooth then the derivatives $f^{\left(n \right)} \! \left( \sqrt{\kappa_{2}} y + \kappa_{1} \right)$
will not introduce any addition factors of $\kappa_{1}$ nor $\kappa_{2}$ as we have already
incorporated the contributions from the Jacobian.  Consequently the  first contribution from the expansion 
beyond unity will be given by
\begin{equation*}
I \propto \kappa_{2}^{2} \int^{\infty}_{-\infty} \mathrm{d}y \, 
\phi \! \left( y \right) f^{\left(3 \right)} \! \left( \sqrt{\kappa_{2}} y + \kappa_{1} \right),
\end{equation*}
and to leading-order the expectation becomes
\begin{align*}
\mathbb{E}_{\pi } \! \left[ f \! \left( \Delta_{\epsilon} \right) \right] 
&= 
\int^{\infty}_{-\infty} \mathrm{d}y \, 
\exp \! \left[ \sum_{n = 3}^{\infty} \left( -1 \right)^{n} \frac{ \kappa_{n} }{ n! \, \kappa_{2}^{n/2} } 
\left( \frac{\partial}{\partial y} \right)^{n} \right] 
\phi \! \left( y \right) \cdot f \! \left( \sqrt{\kappa_{2}} y + \kappa_{1} \right) \\
&=
\int^{\infty}_{-\infty} \mathrm{d}y \, 
\phi \! \left( y \right) f \! \left( \sqrt{\kappa_{2}} y + \kappa_{1} \right)
+ \mathcal{O} \! \left( \kappa_{2}^{2} \right) \\
&=
\int^{\infty}_{-\infty} \mathrm{d} \Delta_{\epsilon} \, 
\mathcal{N} \! \left( \Delta_{\epsilon} | \kappa_{1}, \kappa_{2} \right) f \! \left( \Delta_{\epsilon} \right)
+ \mathcal{O} \! \left( \kappa_{2}^{2} \right),
\end{align*}
or substituting the results of Corollary \ref{cor:leading_cumulants},
\begin{equation*}
\kappa_{1} = - \frac{1}{2} \kappa_{2} = - \frac{1}{2} \alpha \, \epsilon^{2k},
\end{equation*}
\begin{align*}
\mathbb{E}_{\pi } \! \left[ f \! \left( \Delta_{\epsilon} \right) \right] 
&=
\int^{\infty}_{-\infty} \mathrm{d} \Delta_{\epsilon} \, 
\mathcal{N} \! \left( \Delta_{\epsilon} | 
- \frac{1}{2} \alpha \, \epsilon^{2k}, \alpha \, \epsilon^{2k}\right) 
f \! \left( \Delta_{\epsilon} \right)
+ \mathcal{O} ( \epsilon^{2 (k + 2)} ),
\end{align*}
as desired.

\end{proof}

At leading-order in $\epsilon$, the expectation of any \textit{smooth} function of $\Delta_{\epsilon}$ 
becomes a straightforward Gaussian integral, equivalent to the infinite independently distributed limit.  

Unfortunately the expectation in which we are mainly interested, the Metropolis acceptance probability, 
is not smooth because of a cusp at $\Delta_{\epsilon} = 0$.  The cusp introduces non-trivial boundary 
terms that then induce additional leading-order contributions to the expectations beyond those found 
in the smooth case.

\newtheorem*{thm:acceptance_expectation}{Theorem \ref{thm:acceptance_expectation}}
\begin{thm:acceptance_expectation}

Let $\left( M, \omega, H\right)$ be a Hamiltonian system and consider
a $k$-th order symmetric symplectic integrator with the corresponding modified Hamiltonian
$\widetilde{H} = H + \epsilon^{k} \, G + \mathcal{O} ( \epsilon^{k + 2} )$
and Metropolis proposal $\Phi^{\widetilde{H}}_{\epsilon, \tau}$.  If the integrator 
is topologically stable and the asymptotic error is negligible, then the expectation of 
the Metropolis acceptance probability is given by
\begin{align*}
\mathbb{E}_{\varpi_{H}} \! \left[ a \! \left( \Delta_{\epsilon} \right) \right] 
=&
\int^{\infty}_{-\infty} \mathrm{d} \Delta_{\epsilon} \, 
\mathcal{N} \! \left( \Delta_{\epsilon} | 
- \frac{1}{2} \alpha \, \epsilon^{2 k}, \alpha \, \epsilon^{2 k} \right) 
a \! \left( \Delta_{\epsilon} \right)
+ \mathcal{O} ( \epsilon^{k} ),
\end{align*}
for some $\alpha \in \mathbb{R}$.

\end{thm:acceptance_expectation}

\begin{proof}

In order to understand the effect of the cusp it is easiest to proceed as with Theorem
\ref{thm:smooth_expectations} up until each term is integrating by parts,
\begin{align*}
I = C_{n} \int^{\infty}_{-\infty} \mathrm{d}y \, 
\frac{\partial^{n} }{\partial y^{n} }
\phi \! \left( y \right) \cdot \min \! \left(1, e^{- \sqrt{ \kappa_{2} } y - \kappa_{1} } \right).
\end{align*}
On the first integration by parts the boundary term vanishes as before,
\begin{align*}
I =& \,\,
C_{n} \left. \frac{\partial^{n - 1} }{\partial y^{n - 1} } \phi \! \left( y \right) \cdot
\min \! \left(1, e^{- \sqrt{ \kappa_{2} } y - \kappa_{1} } \right) \right|^{\infty}_{-\infty} \\
&
- C_{n} \int^{\infty}_{-\infty} \mathrm{d}y \, 
\frac{\partial^{n - 1} }{\partial y^{n - 1} } \phi \! \left( y \right) \cdot
\left( - \sqrt{ \kappa_{2} } \right) \, e^{- \sqrt{ \kappa_{2} } y - \kappa_{1} } \, 
\mathbb{I} \left( - \sqrt{ \kappa_{2} } y - \kappa_{1} \right) \\
=& \,\,
C_{n} \, \sqrt{ \kappa_{2} } \int^{\infty}_{- \kappa_{1} / \sqrt{ \kappa_{2} } } \mathrm{d}y \, 
\frac{\partial^{n - 1} }{\partial y^{n - 1} } \phi \! \left( y \right) \cdot
e^{- \sqrt{ \kappa_{2} } - \kappa_{1} }.
\end{align*}
Continued applications, however, introduce nontrivial boundary contributions,
\begin{align*}
I / C_{n} =& \,\,
\sqrt{ \kappa_{2} } \left. \frac{\partial^{n - 2} }{\partial y^{n - 2} }
\phi \! \left( y \right) \cdot e^{- \sqrt{ \kappa_{2} } y - \kappa_{1} } 
\right|^{\infty}_{- \kappa_{1} / \sqrt{ \kappa_{2} }} \\
&
- \sqrt{ \kappa_{2} } \int^{\infty}_{- \kappa_{1} / \sqrt{ \kappa_{2} } } \mathrm{d}y \, 
\frac{\partial^{n - 2} }{\partial y^{n - 2} }
\phi \! \left( y \right) \cdot \left( - \sqrt{ \kappa_{2} } \right) e^{- \sqrt{ \kappa_{2} } y - \kappa_{1} } \\
=& \,\,
- \sqrt{ \kappa_{2} } \frac{\partial^{n - 2} }{\partial y^{n - 2} }
\phi \! \left( - \frac{ \kappa_{1} }{ \sqrt{ \kappa_{2} } } \right) \\
&
\kappa_{2} \int^{\infty}_{- \kappa_{1} / \sqrt{ \kappa_{2} } } \mathrm{d}y \, 
\frac{\partial^{n - 2} }{\partial y^{n - 2} }
\phi \! \left( y \right) \cdot e^{- \sqrt{ \kappa_{2} } y - \kappa_{1} } \\
\end{align*} 
and, eventually,
\begin{align*}
I / C_{n} =& 
- \sum_{k = 1}^{n - 1} \kappa_{2}^{ k/2 } \frac{\partial^{n - k - 1} }{\partial y^{n - k - 1} }
\phi \! \left( - \frac{ \kappa_{1} }{ \sqrt{ \kappa_{2} } } \right) \\
&+ \kappa_{2}^{n/2} \int^{\infty}_{- \kappa_{1} / \sqrt{ \kappa_{2} } } \mathrm{d}y \, 
\phi \! \left( y \right) e^{- \sqrt{ \kappa_{2} } y - \kappa_{1} } \\
=&
- \phi \! \left( - \frac{ \kappa_{1} }{ \sqrt{ \kappa_{2} } } \right) \sum_{k = 1}^{n - 1} 
\left( -1 \right)^{n - k - 1} \kappa_{2}^{k/2} H_{n - k - 1} \! \left( - \frac{ \kappa_{1} }{ \sqrt{ \kappa_{2} } } \right) \\
&+ \kappa_{2}^{n/2} \int^{\infty}_{- \kappa_{1} / \sqrt{ \kappa_{2} } } \mathrm{d}y \, 
\phi \! \left( y \right) e^{- \sqrt{ \kappa_{2} } y - \kappa_{1}} \\
=&
\phi \! \left( - \frac{ \kappa_{1} }{ \sqrt{ \kappa_{2} } } \right) \sum_{k = 1}^{n - 1} 
\left( -1 \right)^{n - k} \kappa_{2}^{k/2} H_{n - k - 1} \! \left( - \frac{ \kappa_{1} }{ \sqrt{ \kappa_{2} } } \right) \\
&+ \kappa_{2}^{n/2} \int^{\infty}_{- \kappa_{1} / \sqrt{ \kappa_{2} } } \mathrm{d}y \, 
\phi \! \left( y \right) e^{- \sqrt{ \kappa_{2} } y - \kappa_{1} } \\
=&
\phi \! \left( \frac{ \sqrt{ \kappa_{2} } }{2} \right) \sum_{k = 1}^{n - 1} 
\left( -1 \right)^{n - k} \kappa_{2}^{k/2} H_{n - k - 1} \! \left( \frac{ \sqrt{ \kappa_{2} } }{2} \right) \\
&+ \kappa_{2}^{n/2} \int^{\infty}_{ \frac{ \sqrt{ \kappa_{2} } }{2} } \mathrm{d}y \, 
\phi \! \left( y \right) e^{- \sqrt{ \kappa_{2} } y + \kappa_{2} / 2 }.
\end{align*}

The second term is exactly the result we would have if the acceptance
probability were smooth, with the contributions from the cusp isolated to the first
term.  In particular, the Hermite polynomials introduce terms proportional
to $\sqrt{ \kappa_{2} }$ and $\kappa_{2}$ so that at best we have
\begin{align*}
\mathbb{E}_{\pi } \! \left[ a \! \left( \Delta_{\epsilon} \right) \right] 
=&
\int^{\infty}_{-\infty} \mathrm{d} \Delta_{\epsilon} \, 
\mathcal{N} \! \left( \Delta_{\epsilon} | 
- \frac{1}{2} \alpha \, \epsilon^{2k}, \alpha \, \epsilon^{2k}\right) 
a \! \left( \Delta_{\epsilon} \right)
\\
&+ D_{1} \, \epsilon^{k} + D_{2} \, \epsilon^{2 k}
+ \mathcal{O} ( \epsilon^{2 (k + 2)} ),
\end{align*}
or
\begin{align*}
\mathbb{E}_{\pi } \! \left[ a \! \left( \Delta_{\epsilon} \right) \right] 
=&
\int^{\infty}_{-\infty} \mathrm{d} \Delta_{\epsilon} \, 
\mathcal{N} \! \left( \Delta_{\epsilon} |
- \frac{1}{2} \alpha \, \epsilon^{2k}, \alpha \, \epsilon^{2k}\right) 
a \! \left( \Delta_{\epsilon} \right) + \mathcal{O} ( \epsilon^{k} ),
\end{align*}
as desired.

\end{proof}

Although the expectation of the Metropolis acceptance probability is not equivalent
to the infinite independently distributed limit, the deviations are isolated into two terms, 
$D_{1}$ and $D_{2}$, which admit further study.  Indeed, empirically these terms appear 
to be small indicating that there may be a means of constraining their values in general
and improving this result.

\subsection{Approximating the Average Hamiltonian Error}
\label{apx:ave_error}

Taking $n = 1$ in Lemma \ref{lem:moment_scaling} gives the full leading-order result 
for the average Hamiltonian error,
\begin{align*}
\mathbb{E}_{\varpi_{H}} \left[ \Delta_{\epsilon} \right] 
&=
+ \epsilon^{2k} \,
\mathbb{E}_{\varpi_{\widetilde{H}}} \! \left[ 2 G \, \mathrm{d} G \! \left( \vec{v} \right) 
+ G^{2} \left( \frac{\partial v^{i} }{ \partial q^{i} } + \frac{\partial v_{i} }{ \partial p_{i} } \right) \right]
\\
& \quad - \epsilon^{2k} \,
\mathbb{E}_{\varpi_{\widetilde{H}}} \! \Big[ G \Big]
\mathbb{E}_{\varpi_{\widetilde{H}}} \! \left[ \mathrm{d} G \! \left( \vec{v} \right)
+ G \left( \frac{\partial v^{i} }{ \partial q^{i} } + \frac{\partial v_{i} }{ \partial p_{i} } \right) \right]
\\
& \quad + \epsilon^{2k} \, 
\mathbb{E}_{\varpi_{H}} \! \Big[ G \Big]
\mathbb{E}_{\varpi_{\widetilde{H}}} \! \left[ \mathrm{d} G \! \left( \vec{v} \right)
+ G \left( \frac{\partial v^{i} }{ \partial q^{i} } + \frac{\partial v_{i} }{ \partial p_{i} } \right) \right]
\\
& \quad - \epsilon^{2k} \, 
\mathbb{E}_{\varpi_{\widetilde{H}}} \! \left[ G \left( G \circ \Phi^{\widetilde{H}}_{\epsilon, \tau} \right) \right]
+ \mathcal{O} ( \epsilon^{k + 2} )
\\
&=
+ \epsilon^{2k} \,
\mathbb{E}_{\varpi_{\widetilde{H}}} \! \left[ 2 G \, \mathrm{d} G \! \left( \vec{v} \right) 
+ G^{2} \left( \frac{\partial v^{i} }{ \partial q^{i} } + \frac{\partial v_{i} }{ \partial p_{i} } \right) \right]
\\
& \quad + \epsilon^{2k} \, 
\left( \mathbb{E}_{\varpi_{H}} \! \Big[ G \Big] - \mathbb{E}_{\varpi_{\widetilde{H}}} \! \Big[ G \Big] \right)
\mathbb{E}_{\varpi_{\widetilde{H}}} \! \left[ \mathrm{d} G \! \left( \vec{v} \right)
+ G \left( \frac{\partial v^{i} }{ \partial q^{i} } + \frac{\partial v_{i} }{ \partial p_{i} } \right) \right]
\\
& \quad - \epsilon^{2k} \, 
\mathbb{E}_{\varpi_{\widetilde{H}}} \! \left[ G \left( G \circ \Phi^{\widetilde{H}}_{\epsilon, \tau} \right) \right]
+ D \, \epsilon^{k + 2}
+ \mathcal{O} ( \epsilon^{k + 4} ),
\end{align*}
where $D$ depends on the second-order contribution to the modified Hamiltonian.

To leading-order we can exchange the modified canonical expectations with canonical expectations,
$ \mathbb{E}_{\varpi_{\widetilde{H}}} [ \cdot ] \rightarrow \mathbb{E}_{\varpi_{H}} [ \cdot ]$, and the 
approximate flow with the exact flow, $\Phi^{\widetilde{H}}_{\epsilon, \tau} \rightarrow \Phi^{H}_{\epsilon, \tau}$, 
to give
\begin{align*}
\mathbb{E}_{\varpi_{H}} \left[ \Delta_{\epsilon} \right] 
&=
+ \epsilon^{2k} \,
\mathbb{E}_{\varpi_{H}} \! \left[ 2 G \, \mathrm{d} G \! \left( \vec{v} \right) 
+ G^{2} \left( \frac{\partial v^{i} }{ \partial q^{i} } + \frac{\partial v_{i} }{ \partial p_{i} } \right) \right]
\\
& \quad - \epsilon^{2k} \, 
\mathbb{E}_{\varpi_{H}} \! \left[ G \left( G \circ \Phi^{H}_{\tau} \right) \right]
+ D \, \epsilon^{k + 2} + \mathcal{O} ( \epsilon^{k + 4} ),
\end{align*}

In the Gaussian case the integrals can be computed analytically and provide
us with a means of validating Lemma \ref{lem:moment_scaling} against numerical 
experiments.  Here we consider second-order leapfrog integrators, encompassing both 
the explicit Stromer-Verlet integrator and the implicit midpoint integrator common in 
Hamiltonian Monte Carlo implementations.  Here
\begin{equation*}
G = \frac{1}{24} \left( 
2 \frac{ \partial H}{ \partial q^{i} } \frac{ \partial H}{ \partial q^{j} } 
\frac{ \partial^{2} H }{ \partial p_{i} \partial p_{j} } 
- \frac{ \partial H}{ \partial p_{i} } \frac{ \partial H}{ \partial p_{j} } 
\frac{ \partial^{2} H }{ \partial q^{i} \partial q^{j} } 
+ 2 \frac{ \partial H}{ \partial q^{i} } \frac{ \partial H}{ \partial p_{j} } 
\frac{ \partial^{2} H }{ \partial p_{i} \partial q^{j} } 
\right),
\end{equation*}
and all higher-order contributions to the modified Hamiltonian vanish
so that $D = 0$.

The Gaussian target distribution induces the Hamiltonian
\begin{equation*}
H = \frac{1}{2} p^{2} + \frac{1}{2} q^{2},
\end{equation*} 
with the sub-leading contribution to the modified Hamiltonian given by
\begin{equation*}
G = 
\frac{1}{24} \left( 2 q^{2} - p^{2} \right).
\end{equation*}

Computing the canonical expectation requires a transverse vector field, $\vec{v}$,  
and given the underlying Euclidean geometry with unit metric $\delta^{ij}$ an immediate 
choice is
\begin{align*}
\vec{v} 
=&
\, \gamma \left( 
 \delta_{ij} \, \frac{ \partial H}{ \partial q^{j} } \frac{ \partial }{ \partial q^{i} }, 
\delta^{ij} \, \frac{ \partial H}{ \partial p_{j} } \frac{ \partial }{ \partial p_{i} } 
\right) \\
=&
\, \gamma \left( 
q^{i} \frac{ \partial }{ \partial q^{i} }, 
p_{i} \frac{ \partial }{ \partial p_{i} } 
\right)
\end{align*}
with
\begin{equation*}
\gamma = \left( 2H \right)^{-1}.
\end{equation*}
This choice gives
\begin{equation*}
\frac{\partial v^{i} }{ \partial q^{i} } + \frac{\partial v_{i} }{ \partial p_{i} } = 0
\end{equation*}
and
\begin{equation*}
\mathrm{d} G \! \left( \vec{v} \right) = \frac{G}{H},
\end{equation*}
so that
\begin{align*}
\mathbb{E}_{\varpi_{H}} \left[ \Delta_{\epsilon} \right] 
&=
2 \epsilon^{4} \,
\mathbb{E}_{\varpi_{H}} \! \left[ \frac{G^{2}}{H}  \right]
- \epsilon^{4} \, 
\mathbb{E}_{\varpi_{H}} \! \left[ G \left( G \circ \Phi^{H}_{\tau} \right) \right]
+ \mathcal{O} \! \left( \epsilon^{6} \right),
\end{align*}
Finally we use the fact that the action of the exact flow is simply a rotation
in phase space,
\begin{align*}
 \Phi^{H}_{\tau} \left(q, p \right)
&= 
\left( \cos \tau \cdot q + \sin \tau \cdot p, - \sin \tau \cdot q + \cos \tau \cdot p \right)
\end{align*}

Together this gives
\begin{align*}
\mathbb{E}_{\varpi_{H}} \left[ \Delta_{\epsilon} \right] 
&=
2 \epsilon^{4} \,
\mathbb{E}_{\varpi_{H}} \! \left[ \frac{G^{2}}{H}  \right]
\\
& \quad - \epsilon^{4} \left( 
\cos^{2} \tau \, \mathbb{E}_{\varpi_{H}} \! \left[ G \frac{2 q^{2} - p^{2} }{24} \right]
+ 6 \sin \tau \cos \tau \, \mathbb{E}_{\varpi_{H}} \! \left[ G \frac{ q \, p }{24} \right]
\sin^{2} \tau \, \mathbb{E}_{\varpi_{H}} \! \left[ G \frac{2 p^{2} - q^{2} }{24} \right] \right)
\\
& \quad + \mathcal{O} \! \left( \epsilon^{6} \right)
\\
&=
\frac{11}{576} \epsilon^{4}
- \cos^{2} \tau \frac{11}{576} \epsilon^{4}
+ \sin^{2} \tau \frac{7}{576} \epsilon^{4}
+ \mathcal{O} \! \left( \epsilon^{6} \right)
\\
&=
\frac{9}{576} \epsilon^{4} \left(1 - \cos^{2} \tau + \sin^{2} \tau \right)
+ \mathcal{O} \! \left( \epsilon^{6} \right)
\\
&=
\frac{1}{64} \epsilon^{4} \left(1 - \cos 2 \tau \right)
+ \mathcal{O} \! \left( \epsilon^{6} \right).
\end{align*}

\bibliographystyle{imsart-nameyear}
\renewcommand{\bdoi}[1]{\textsc{doi}: \href{http://dx.doi.org/#1}{\nolinkurl{#1}}}
\renewcommand{\arxiv}[1]{\textsc{arxiv}: \href{http://arxiv.org/abs/#1}{\nolinkurl{#1}}}
\bibliography{hmc_optimal_tuning}

\end{document}